\tikzset{
    -Latex,auto,node distance =1 cm and 1 cm,semithick,
    state/.style ={ellipse, draw, minimum width = 0.7 cm},
    point/.style = {circle, draw, inner sep=0.04cm,fill,node contents={}},
    bidirected/.style={Latex-Latex,dashed},
    el/.style = {inner sep=2pt, align=left, sloped}
}
\newcommand{\indep}{\rotatebox[origin=c]{90}{$\models$}}
\newtheorem{definition}{Definition}
\newtheorem{remark}{Remark}
\newtheorem{example}{Example}
\newtheorem{proposition}{Proposition}
\newtheorem{condition}{Condition}
\newtheorem{lemma}{Lemma}
\renewcommand{\algocf@captiontext}[2]{#1\algocf@typo. \AlCapFnt{}#2} 
\def\@algocf@capt@plain{top}
\renewcommand{\algocf@makecaption}[2]{%
  \addtolength{\hsize}{\algomargin}%
  \sbox\@tempboxa{\algocf@captiontext{#1}{#2}}%
  \ifdim\wd\@tempboxa >\hsize
  \hskip .5\algomargin%
  \parbox[t]{\hsize}{\algocf@captiontext{#1}{#2}}
  \else%
  \global\@minipagefalse%
  \hbox to\hsize{\box\@tempboxa}
  \fi%
  \addtolength{\hsize}{-\algomargin}%
}
\begin{document}

\sectionfont{\bfseries\large\sffamily}%

\subsectionfont{\bfseries\sffamily\normalsize}%

\begin{center}
\noindent
{\sffamily\bfseries\LARGE
Bridging preference-based instrumental variable studies and cluster-randomized encouragement experiments: study design, noncompliance, and average cluster effect ratio}%

\noindent
$\textsf{Bo Zhang}^{1}$, $\textsf{Siyu Heng}^{2}$, $\textsf{Emily J.\ Mackay}^3$, and $\textsf{Ting Ye}^1$
\end{center}

\begin{center}
$^1$Department of Statistics, The Wharton School, University of Pennsylvania, Philadelphia, Pennsylvania, U.S.A.
U.S.A. Correspondence: bozhan@wharton.upenn.edu\\
$^2$Graduate Group in Applied Mathematics and Computational Science, School of Arts and Sciences, University of Pennsylvania, Philadelphia, Pennsylvania, U.S.A.\\
$^3$Department of Anesthesiology and Critical Care, Perelman School of Medicine, University of Pennsylvania, Philadelphia, Pennsylvania, U.S.A.    
\end{center}

\vspace{0.5 cm}
\noindent
\textsf{{\bf Abstract}: Instrumental variable methods are widely used in medical and social science research to draw causal conclusions when the treatment and outcome are confounded by unmeasured confounding variables. One important feature of such studies is that the instrumental variable is often applied at the cluster level, e.g., hospitals' or physicians' preference for a certain treatment where each hospital or physician naturally defines a cluster. This paper proposes to embed such observational instrumental variable data into a cluster-randomized encouragement experiment using statistical matching. Potential outcomes and causal assumptions underpinning the design are formalized and examined. Testing procedures for two commonly-used estimands, Fisher's sharp null hypothesis and the pooled effect ratio, are extended to the current setting. We then introduce a novel cluster-heterogeneous proportional treatment effect model and the relevant estimand: the average cluster effect ratio. This new estimand is advantageous over the structural parameter in a constant proportional treatment effect model in that it allows treatment heterogeneity, and is advantageous over the pooled effect ratio estimand in that it is immune to Simpson's paradox. We develop an asymptotically valid randomization-based testing procedure for this new estimand based on solving a mixed integer quadratically-constrained optimization problem. The proposed design and inferential methods are applied to a study of the effect of using transesophageal echocardiography during CABG surgery on patients' 30-day mortality rate.}%

\vspace{0.3 cm}
\noindent
\textsf{{\bf Keywords}: Heterogeneous treatment effect; Instrumental variable; Mixed-integer programming; \\ Randomization-based inference; Statistical matching}

\newpage

\section{Introduction}
\label{sec: introduction}
\subsection{Instrumental variable, encouragement, and cluster-randomized encouragement design}
 One central goal in social science and medical research is to evaluate a program or treatment assignment. When a randomized controlled trial is not an option due to ethical or practical considerations, instrumental variable analysis provides an attractive alternative. An instrumental variable can be best thought of as a haphazard encouragement to take a treatment. In many practical situations, although treatment assignment cannot be randomized, encouragement can be, and this motivates the so-called randomized encouragement design (\citealp{holland1988causal}). An important feature of the randomized encouragement design is that the encouragement is often applied at the cluster level (e.g., villages, hospitals, or physicians). A cluster-randomized encouragement design refers to a research design where entire clusters are randomly assigned to an encouragement to take a particular treatment, but individuals within each cluster are allowed to choose individually whether to receive the treatment or not. A cluster-randomized trial with individual noncompliance can be thought of as an ideal prototype of a cluster-randomized encouragement design. Cluster-randomized encouragement designs are widely applicable in practice (\citealp{sommer1991estimating,frangakis2002clustered,imai2009essential}). For instance, \citet{sommer1991estimating} reported a study where villages in Indonesia were randomized to receive vitamin A supplements for newborns, but not all infants who were assigned vitamin A supplements actually received them. 

Many existing cluster-level instrumental variables are not randomized as in an experiment. For example, one of the most popular instrumental variables in empirical studies is physicians' preference for a particular treatment (\citealp{garabedian2014potential}). Physicians' preference as an instrument naturally falls into the framework of a cluster-level instrumental variable, because each physician defines a natural cluster of patients (i.e., those treated by the physician), and patient-level noncompliance is commonplace. Unlike cluster-randomized trials with noncompliance, where experimenters can manipulate and fully randomize the encouragement, physicians' preference may be associated with the outcome because there are ``IV-outcome'' confounders (\citealp{garabedian2014potential}). For example, physicians' preference may be associated with patient volume and institute policies, both of which are likely to be associated with clinical outcomes and thus invalidate a naive instrumental variable analysis that assumes the preference-based instrument is valid without conditional on relevant covariates. 
 
 \subsection{Our contribution}
 Authors from different disciplines have proposed methods to study cluster-randomized encouragement designs under different causal frameworks. See Supplementary Material A for a detailed literature review. To the best of our knowledge, this article proposes the first matching-based study design approach that embeds observational instrumental variable data with cluster-level continuous instruments into a cluster-randomized encouragement experiment. We adapt optimal nonbipartite matching techniques (\citealp{lu2001matching, baiocchi2010building,lu2011optimal}) to pairing similar clusters with markedly different cluster-level continuous instruments (e.g., physicians who had treated similar patients but with different preference for the treatment) without first dichotomizing the continuous instrument, and consider statistical inference under individual noncompliance. We have three objectives. In Section \ref{sec: notation and assumption}, we examine and clarify the potential outcomes and causal assumptions underpinning our study design, and discuss in detail motivations and practical advantages of embedding observational instrumental variable data, when appropriate, into a cluster-randomized encouragement experiment as opposed to a non-clustered, individual-level, encouragement experiment. In Section \ref{sec: random-based inference}, we review and generalize inferential procedures testing Fisher's sharp null hypothesis and the effect ratio, an estimand that allows for treatment heterogeneity (\citealp{imai2009essential}, \citealp{baiocchi2010building,kang2016full, kang2018estimation}), to the current setting. The generalized effect ratio estimand is referred to as the pooled effect ratio (PER). In Section \ref{sec: ACER}, we propose to largely relax the previously studied constant proportional treatment effect model (\citealp{small2008war, small2008randomization}) by considering a version that allows cluster-heterogeneous proportional treatment effect. The relevant estimand, known as the average cluster effect ratio (ACER), has an advantage over the pooled effect ratio that it is immune to Simpson's paradox. We develop an asymptotically valid randomization-based testing procedure for this new estimand based on solving a mixed integer quadratically-constrained optimization problem. We apply the proposed design and testing procedures to studying the effect of transesophageal echocardiography (TEE) monitoring during coronary artery bypass graft (CABG) surgery on patients' 30-day mortality rate using the U.S Medicare and Medicaid claims data in Section \ref{sec: application}. We implement the proposed method in the \textsf{R} package \textsf{ivdesign}.

\subsection{Application: effect of TEE monitoring during CABG surgery on 30-day mortality rate}
\label{subsec: application in intro}
Coronary artery bypass graft (henceforth CABG) surgery is the most widely performed adult cardiac surgery, accounting for over half of the $300,000$ cardiac surgeries performed in the U.S. each year (\citealp{Database_2018}). Transesophageal echocardiography (henceforth TEE) is an ultrasound-based, cardiac imaging modality, frequently used in cardiac surgery for hemodynamic monitoring and management of complications (\citealp{mackay2020transesophageal}). 

Table~\ref{tbl: balance table} presented the covariate balance of $204$ pairs of $2$ surgeons from the United States who had performed at least $30$ isolated CABG surgeries from 2013 to 2015. Similar surgeons with markedly different preference for using TEE were paired together in an optimal way using
nonbipartite matching, a flexible statistical matching technique that handles continuous exposure without first dichotomizing it (\citealp{lu2001matching}; \citealp{baiocchi2010building}; \citealp{lu2011optimal}). These $204 \times 2 = 408$ surgeons had treated a total of $18,748$ Medicare patients. We defined a surgeon's preference for TEE as the fraction of CABG surgeries performed with TEE monitoring. Two surgeons were judged similar if they treated similar patient population and practiced in similar hospitals. We are interested in answering the following clinical question with the data: Does TEE monitoring help reduce 30-day mortality rate for patients undergoing isolated CABG surgery, and if it does, then by how much? Further details on the application can be found in Supplementary Material F.

\begin{table}[ht]
\small
\centering
\caption{Covariate balance after matching. We formed $204$ pairs of $2$ surgeons from the United States. After matching, absolute standardized differences (absolute value of difference in means divided by the pooled standard deviation of two groups) of all covariates are less than $0.2$, and p-values obtained from two-sample t-tests are all greater than $0.1$, suggesting no systematic difference between two matched groups. Low-TEE-preference surgeons are referred to as control surgeons and high-TEE-preference encouraged surgeons.}
\begin{tabular}{lcccc}
& \multirow{3}{*}{\begin{tabular}{c}Control\\ Surgeons\\(n = 204)\end{tabular}} & \multirow{3}{*}{\begin{tabular}{c}Encouraged\\ Surgeons\\(n = 204)\end{tabular}} & \multirow{3}{*}{\begin{tabular}{c}Standardized\\ Difference\end{tabular}} & \multirow{3}{*}{\begin{tabular}{c}P-Value\end{tabular}} \\ \\ \\
\\
\multirow{2}{*}{\begin{tabular}[c]{@{}l@{}}Surgeons' preference: proportion\\ of CABG surgeries using TEE\end{tabular}} &  0.29   &0.84    &3.13    \\ 
 &       &     &   &      \\ 
 Cluater size &46.41 &45.49 &-0.08 &0.43 \\
\textsf{Composition of Patient Population} &           &          &          \\ 
\hspace{0.5 cm}Mean age, yrs & 75.35   & 75.46    & 0.09 & 0.36  \\
\hspace{0.5 cm}Percentage male, \% & 67.38     & 67.94    & 0.09 & 0.37      \\
\hspace{0.5 cm}Percentage white, \% & 91.35     & 91.73  & 0.06 & 0.52      \\ 
\hspace{0.5 cm}Percentage elective, \%  & 49.51   & 49.18    & -0.02 &0.81       \\ 
\hspace{0.5 cm}Percentage with diabetes, \%   & 13.57    & 13.34  & -0.04 & 0.69   \\
\hspace{0.5 cm}Percentage with renal diseases, \%   &7.25   & 7.29  & 0.01 &0.91   \\
\hspace{0.5 cm}Percentage with arrhythmia, \%   &10.41   & 9.99  & -0.08 &0.41   \\
\hspace{0.5 cm}Percentage with CHF, \%   & 8.92  & 8.88  & -0.01 & 0.93   \\
\hspace{0.5 cm}Percentage with hypertension, \%   & 25.38    & 25.02  & -0.04 & 0.66   \\
\hspace{0.5 cm}Percentage with obesity, \%   & 4.95    & 4.76  & -0.05 & 0.59   \\
\hspace{0.5 cm}Percentage with pulmonary diseases, \%   & 1.35  & 1.38  & 0.01 & 0.91   \\
\textsf{Surgeon/Hospital Characteristics} &           &          &          \\ 
\hspace{0.5 cm}Total cardiac surgical volume & 111.74 & 119.58 & 0.15 & 0.13 \\
\hspace{0.5 cm}Teaching hospital, yes/no  &0.09 &0.08 &-0.02 &0.51 \\
\hspace{0.5 cm}Total number of hospital beds &374.64 &387.00 &0.07 &0.47 \\
\hspace{0.5 cm}Full-time registered nurses  &597.50 &619.98 &0.06 &0.54 \\
\hspace{0.5 cm}Cardiac ICU, yes/no &0.84 & 0.82 & -0.07 &0.51 \\ 
\end{tabular}
\label{tbl: balance table}
\end{table}

\section{Notation and setup}
\label{sec: notation and assumption}
\subsection{General setup and IV assumptions}
\label{subsec: setup and IV assumptions}
Suppose there are $K$ matched pairs of two clusters, $k = 1, ..., K$, $j = 1, 2$, so that index $kj$ uniquely identifies one cluster. Each cluster $kj$ is associated with a continuous cluster-level instrumental variable $\widetilde{Z}_{kj}$, or an encouragement dose, and a vector of cluster-level covariates $\widetilde{\mathbf{x}}_{kj}$. In the application described in Section \ref{subsec: application in intro}, $K = 204$, $kj$ indexes surgeon $j$ in matched pair $k$, $\widetilde{Z}_{kj} \in [0, 1]$ measures surgeon $kj$'s preference for TEE, and $\widetilde{\mathbf{x}}_{kj}$ includes surgeon $kj$'s total cardiac surgical volume and characteristics of the hospital she worked at, e.g., total number of hospital beds. Each cluster $kj$ contains $n_{kj} \geq 1$ individuals, indexed by $i = 1,...,n_{kj}$, so that index $kji$ uniquely identifies individual $i$ in cluster $kj$. Each individual $kji$ is associated with a treatment indicator $D_{kji}$, outcome of interest $R_{kji}$, and individual-level covariates $\mathbf{x}_{kji}$. In our application, $n_{kj}$ is the number of patients treated by surgeon $kj$, $kji$ indexes the $i$th patient treated by surgeon $kj$, $D_{kji}$ is whether or not patient $kji$ receives TEE monitoring during her CABG surgery, $R_{kji}$ is patient $kji$'s 30-day mortality status, and $\mathbf{x}_{kji}$ includes patient $kji$'s age, gender, race, nature of the surgery (elective or not), and important comorbid conditions.

Write $\widetilde{\mathbf{z}} = (\widetilde{z}_{11}, ..., \widetilde{z}_{K2})$, $\widetilde{\mathbf{Z}} = (\widetilde{Z}_{11}, ..., \widetilde{Z}_{K2})$, $\mathbf{d} = (d_{111}, ..., d_{K2 n_{K2}})$, $\mathbf{D} = (D_{111}, ..., D_{K2 n_{K2}})$, $\mathbf{R} = (R_{111}, ..., R_{K2 n_{K2}})$. We consider the potential outcome framework to formalize an instrumental variable (\citealp{neyman1923application}; \citealp{rubin1974estimating}; \citealp{AIR1996}). Let $D_{kji}(\widetilde{\mathbf{Z}} = \widetilde{\mathbf{z}})$ denote whether individual $kji$ would receive treatment when the encouragement dose assignment is set to $\widetilde{\mathbf{z}}$, and $R_{kji}(\widetilde{\mathbf{Z}} = \widetilde{\mathbf{z}}, \mathbf{D} = \mathbf{d})$ the potential outcome individual $kji$ would exhibit under $\widetilde{\mathbf{Z}} = \widetilde{\mathbf{z}}$ and $\mathbf{D} = \mathbf{d}$. We consider the following identification assumptions:
\begin{enumerate}
    \item[A1] \textsf{Stable Unit Treatment Value Assumption (SUTVA)}: We assume that whether or not individual $kji$ would receive treatment depends only on the encouragement dose assignment of her own cluster, but not other clusters', so that $\widetilde{z}_{kj}=\widetilde{z}_{kj}^{\prime}$ implies $D_{kji}(\widetilde{\mathbf{z}}) = D_{kji}(\widetilde{\mathbf{z}}^{\prime})$, and $D_{kji}(\widetilde{\mathbf{Z}})$ can be simplified and written as $D_{kji}(\widetilde{Z}_{kj})$. Similarly, we assume that the potential outcome individual $kji$ would exhibit depends on $\widetilde{\mathbf{Z}}$ only through $\widetilde{Z}_{kj}$ and treatment assignments in her own cluster. Under this assumption, $R_{kji}(\widetilde{\mathbf{Z}}, \mathbf{D})$ can be simplified and written as $R_{kji}\big(\widetilde{Z}_{kj}, \mathbf{D}_{kj}(\widetilde{Z}_{kj})\big)$, where $ \mathbf{D}_{kj}(\widetilde{Z}_{kj})$ is a shorthand for $\big(D_{kj1}(\widetilde{Z}_{kj}), \cdots, D_{kjn_{kj}}(\widetilde{Z}_{kj})\big)$. Both assumptions are likely to hold in preference-based instrumental variable studies, because physicians' preference typically affects only patients she treated, and the health-related outcome exhibited by a patient is unlikely to depend on the treatment received of patients in other clusters. 
    \item[A2] \textsf{Exclusion Restriction}: We assume that the encouragement affects the outcome only via treatment assignment, so that $R_{kji}\big(\widetilde{Z}_{kj}, \mathbf{D}_{kj}(\widetilde{Z}_{kj})\big)$ can be further simplified to $R_{kji}\big(\mathbf{D}_{kj}(\widetilde{Z}_{kj})\big)$.
    
    \item[A3] \textsf{IV Relevance}: Let $\overline{D}_{kj} = n_{kj}^{-1}\cdot \sum_{i = 1}^{n_{kj}} D_{kji}$. We assume that for all $k$ and $j$, $\overline{D}_{kj}(\widetilde{z}_{kj}) \geq \overline{D}_{kj}(\widetilde{z}'_{kj})$ for all $\widetilde{z}_{kj} > \widetilde{z}'_{kj}$, and there exist $\widetilde{z}_{kj}, \widetilde{z}'_{kj}$ such that the strict inequality holds. In words, a higher encouragement dose does not decrease average cluster-level treatment received, and $\overline{D}_{kj}$ is not a constant function in $\widetilde{Z}_{kj}$.
    \item[A4] \textsf{IV Unconfoundedness}: We assume that the encouragement dose assignment probability is independent of potential outcomes conditional on relevant observed covariates. Specifically, we consider the following two conditions.
    
\begin{condition}[\rm A4-I]\rm
\label{condition: NUCA V1}
Suppose that $\widetilde{\mathbf{x}}_{kj}$ contains all relevant cluster-level covariates for cluster $kj$ and $\mathbf{x}_{kji}$ contains all relevant individual-level covariates for subject $kji$. Let $\mathbf{X}_{kj}$ be a $n_{kj} \times p$ matrix whose ith row is $\mathbf{x}_{kji}$, such that $\mathbf{X}_{kj}$ contains all individual-level information of cluster $kj$. We say that A4-I holds if
\begin{equation*}
\big\{R_{kji}(\mathbf{d}), \mathbf{D}_{kj}(\widetilde{z}), \mathbf{d} \in \mathcal{D}, \widetilde{z} \in \mathcal{\widetilde{Z}}, i = 1, \cdots, n_{kj}\big\} \indep \widetilde{Z}_{kj} \mid \widetilde{\mathbf{x}}_{kj}, \mathbf{X}_{kj}, ~\text{for all}~ k, j,
\end{equation*}
where $\mathcal{D}$ and $\mathcal{\widetilde{Z}}$ denote the set of values $\mathbf{d}$ and $\widetilde{z}$ can take, respectively.
\end{condition}

Assumption (A4-I) states that the assignment probability of the cluster-level instrumental variable depends on cluster-level covariates and all individual-level covariates within the cluster. In our application, this would be the case if a surgeon's preference for using TEE monitoring during CABG surgery depends on her annual surgical volume, hospital characteristics, and the entire distribution of her patients' age, gender, comorbidities, etc. In some circumstances, one may believe that the cluster-level encouragement dose assignment mechanism depends on individual-level covariates $\mathbf{X}_{kj}$ only via some cluster-level aggregate measures (\citealp{vanderweele2008ignorability}). This motivates the following relaxed condition.

\begin{condition}[\rm A4-II]\rm
\label{condition: NUCA V2}
 Let $h(\cdot): \mathbb{R}^{n_{kj} \times p} \mapsto \mathbb{R}^q$ be a vector-valued function that maps $\mathbf{X}_{kj}$ to a q-dimensional vector of aggregate measures that summarizes the covariate information of $n_{kj}$ individuals in cluster $kj$. We say that A4-II holds if 
\begin{equation*}
\big\{R_{kji}(\mathbf{d}), \mathbf{D}_{kj}(\widetilde{z}), \mathbf{d} \in \mathcal{D}, \widetilde{z} \in \mathcal{\widetilde{Z}}, i = 1, \cdots, n_{kj}\big\} \indep \widetilde{Z}_{kj} \mid \widetilde{\mathbf{x}}_{kj}, h(\mathbf{X}_{kj}), ~\text{for all}~ k, j.
\end{equation*}
\end{condition}
 Some commonly-used $h(\cdot)$ functions include the mean and quantile functions. 

\end{enumerate}


\subsection{Embedding observational instrumental variable data into a cluster-randomized encouragement experiment}
\label{subsec: embed observational data into exp}
Our approach to analyzing data after matching is to conduct randomization-based inference, where potential outcomes are held fixed and the only probability distribution that enters statistical inference is the law governing the encouragement assignment mechanism in each matched pair of two clusters. Recall that two clusters with similar observed covariates but distinct continuous encouragement doses are paired together. Let $\widetilde{\mathbf{Z}}_{\vee}=(\widetilde{Z}_{11}\vee \widetilde{Z}_{12}, \widetilde{Z}_{21}\vee \widetilde{Z}_{22}, \dots, \widetilde{Z}_{K1}\vee \widetilde{Z}_{K2})$ and $\widetilde{\mathbf{Z}}_{\wedge}=(\widetilde{Z}_{11}\wedge \widetilde{Z}_{12}, \widetilde{Z}_{21}\wedge \widetilde{Z}_{22}, \dots, \widetilde{Z}_{K1}\wedge \widetilde{Z}_{K2})$, where $a \vee b= \max(a,b)$ and $a \wedge b = \min(a,b)$, denote the maximum and minimum of two doses in each matched pair of two clusters. 

\begin{definition}
\label{def: potential outcomes}
Under assumptions (A1)-(A4), we define the following potential outcomes ($d_{Tkji}$, $d_{Ckji}$, $r_{Tkji}$, $r_{Ckji}$) associated with individual $kji$ after matching on relevant observed covariates:
\begin{equation*}
\begin{split}
    &d_{Tkji}\overset{\Delta}{=} D_{kji}(\widetilde{Z}_{kj}=\widetilde{Z}_{k1}\vee \widetilde{Z}_{k2}) = D_{kji}(\widetilde{Z}_{k1}\vee \widetilde{Z}_{k2}), \\ &d_{Ckji}\overset{\Delta}{=} D_{kji}(\widetilde{Z}_{kj}=\widetilde{Z}_{k1}\wedge \widetilde{Z}_{k2}) = D_{kji}(\widetilde{Z}_{k1}\wedge \widetilde{Z}_{k2}), \\
    &r_{Tkji}\overset{\Delta}{=} R_{kji}\big(\mathbf{D}_{kj}(\widetilde{Z}_{kj} = \widetilde{Z}_{k1}\vee \widetilde{Z}_{k2})\big) = R_{kji}\big(\mathbf{D}_{kj}(\widetilde{Z}_{k1}\vee \widetilde{Z}_{k2})\big), \\
    &r_{Ckji}\overset{\Delta}{=} R_{kji}\big(\mathbf{D}_{kj}(\widetilde{Z}_{kj} = \widetilde{Z}_{k1}\wedge \widetilde{Z}_{k2})\big) = R_{kji}\big(\mathbf{D}_{kj}(\widetilde{Z}_{k1}\wedge \widetilde{Z}_{k2})\big).
\end{split}
\end{equation*}
\end{definition}
\begin{remark}\rm \label{remark: inteference}
The counterfactual outcome $r_{Tkji}$ (or $r_{Ckji}$) describes individual $kji$'s potential outcome when all individuals in that cluster receive encouragement dose $\widetilde{Z}_{kj}=\widetilde{Z}_{k1}\vee \widetilde{Z}_{k2}$ (or $\widetilde{Z}_{kj}=\widetilde{Z}_{k1}\wedge \widetilde{Z}_{k2}$); this definition allows interference between individuals in each cluster, and $\mathbf{D}_{kj}(\widetilde{Z}_{kj} = \widetilde{Z}_{k1}\vee \widetilde{Z}_{k2})$ describes only one realization of the length-$n_{kj}$ vector of treatment indicators $\mathbf{D}_{kj}$. $\mathbf{D}_{kj}$ has other realizations, and $R_{kji}$ may change under other $\mathbf{D}_{kj}$ configurations. However, the available data only speaks to the issue when the entire cluster receives encouragement of a certain dose and $\mathbf{D}_{kj}$ is set to its natural level under this encouragement dose.
\end{remark}

Write $\mathcal{F} = \big\{(\widetilde{\mathbf{x}}_{kj}, \mathbf{x}_{kji}, d_{Tkji}, d_{Ckji}, r_{Tkji}, r_{Ckji}): k = 1, \dots, K, ~j = 1,2, ~i = 1,\dots,n_{kj}\big\}$. The law that describes the encouragement dose assignment in each matched pair of two clusters is:
\[
\pi_{k1} = \text{pr}(\widetilde{Z}_{k1} = \widetilde{Z}_{k1} \wedge \widetilde{Z}_{k2}, \widetilde{Z}_{k2} = \widetilde{Z}_{k1}\vee \widetilde{Z}_{k2}\mid \mathcal{F}, \widetilde{\mathbf{Z}}_{\vee}, \widetilde{\mathbf{Z}}_{\wedge}),
\] and $\pi_{k2} = 1 - \pi_{k1}$. When the encouragement is indeed randomized as in a cluster-randomized trial, we necessarily have $\pi_{k1} = \pi_{k2} = 1/2$. In an observational study, however, we can only hope to make $\pi_{k1} \approx \pi_{k2}$ via matching on relevant observed covariates or the estimated propensity score (\citealp{rosenbaum1983central}). Write conditional density function
\begin{align*}
  &f\left(\widetilde{Z}_{kj} = \widetilde{Z}_{k1} \wedge \widetilde{Z}_{k2} \mid \widetilde{\mathbf{x}}_{kj}, h(\mathbf{X}_{kj})\right)= \xi\left\{\widetilde{Z}_{k1} \wedge \widetilde{Z}_{k2}, \widetilde{\mathbf{x}}_{kj}, h(\mathbf{X}_{kj})\right\} \\
    \text{and}~~&f\left(\widetilde{Z}_{kj} = \widetilde{Z}_{k1} \vee \widetilde{Z}_{k2} \mid \widetilde{\mathbf{x}}_{kj}, h(\mathbf{X}_{kj})\right) = \xi\left\{\widetilde{Z}_{k1} \vee \widetilde{Z}_{k2}, \widetilde{\mathbf{x}}_{kj}, h(\mathbf{X}_{kj})\right\},
\end{align*}
for $j = 1,2$. Under (A4-II), we have
\begin{equation*}\small
    \begin{split}
        &\pi_{k1} = \text{pr}(\widetilde{Z}_{k1} = \widetilde{Z}_{k1} \wedge \widetilde{Z}_{k2}, \widetilde{Z}_{k2} = \widetilde{Z}_{k1}\vee \widetilde{Z}_{k2}\mid \mathcal{F}, \widetilde{\mathbf{Z}}_{\vee}, \widetilde{\mathbf{Z}}_{\wedge})\\
        &= \frac{\xi\{\widetilde{Z}_{k1} \wedge \widetilde{Z}_{k2}, \widetilde{\mathbf{x}}_{k1}, h(\mathbf{X}_{k1})\} \cdot \xi\{\widetilde{Z}_{k1} \vee \widetilde{Z}_{k2}, \widetilde{\mathbf{x}}_{k2}, h(\mathbf{X}_{k2})\}}{\xi\{\widetilde{Z}_{k1} \wedge \widetilde{Z}_{k2}, \widetilde{\mathbf{x}}_{k1}, h(\mathbf{X}_{k1})\}\cdot \xi\{\widetilde{Z}_{k1} \vee \widetilde{Z}_{k2}, \widetilde{\mathbf{x}}_{k2}, h(\mathbf{X}_{k2})\} + 
        \xi\{\widetilde{Z}_{k1} \wedge \widetilde{Z}_{k2}, \widetilde{\mathbf{x}}_{k2}, h(\mathbf{X}_{k2})\} \cdot \xi\{\widetilde{Z}_{k1} \vee \widetilde{Z}_{k2}, \widetilde{\mathbf{x}}_{k1}, h(\mathbf{X}_{k1})\}},
    \end{split}
\end{equation*}
and one sufficient condition for $\pi_{k1} = \pi_{k2} = 1/2$ is $\{\widetilde{\mathbf{x}}_{k1}, h(\mathbf{X}_{k1})\} = \{\widetilde{\mathbf{x}}_{k2}, h(\mathbf{X}_{k2})\}$. If one believes that (A4-I) is more likely to hold, one needs to further make $\mathbf{X}_{k1} \approx \mathbf{X}_{k2}$ in the design stage, in addition to their dimension reduction. One way to enforce $\mathbf{X}_{k1} \approx \mathbf{X}_{k2}$ is to further match similar individuals within matched pair of two clusters. We refer readers to \citet{zubizarreta2017optimal} and \citet{pimentel2018optimal} for more details on statistical matching methods for multilevel data. 

\subsection{When is a cluster-level design and analysis preferred over an individual-level one?}
\label{subsec: why is cluster better}
Instead of pairing comparable clusters and embedding data into a cluster-randomized encouragement experiment, empirical researchers sometimes disregard the clustering structure and directly pair individuals across different clusters, e.g., pairing one patient from a high-TEE-preference surgeon to one from a low-TEE-preference surgeon. When would researchers prefer a cluster-level design (e.g., pairing surgeons) to a non-clustered, individual-level, design (e.g., pairing patients)? Below we discuss some important motivations and practical advantages of a cluster-level design and analysis.

\begin{enumerate}
    \item First, when a cluster-randomized experiment is preferred over a non-clustered one when researchers formulate the causal question in terms of a hypothetical randomized experiment, a conceptual stage preceding the design stage (\citealp{bind2019bridging}). A cluster-randomized (encouragement) experiment would be preferred when the encouragement (or treatment in non-IV settings) is intrinsically cluster-level and impractical to be applied at the individual level, e.g., a public health campaign that targets the entire communities. For our application, individual patient does not get to choose whether her surgeon uses TEE monitoring or not during her CABG surgery; surgeons choose to use TEE to monitor and manage their patients' hemodynamics during the surgery. Therefore, it is only meaningful to encourage surgeons to use TEE monitoring in a hypothetical randomized experiment, and our design is meant to replicate this hypothetical cluster-randomized encouragement experiment.
    
    \item Second, when unmeasured IV-outcome confounding is still a concern after adjusting for observed confounding variables. In Supplementary Material E, we demonstrated from a causal directed acyclic graph (DAG) perspective and via extensive simulations that a cluster-level primary analysis is less biased compared to an individual-level primary analysis when individual-level unmeasured confounders and observed confounders are correlated via a shared cluster-level latent factor.  
    
    \item Third, \citet{hansen2014clustered} found that for a binary instrumental variable in a favorable situation where there is a genuine treatment effect and no unmeasured confounding, clustered treatment assignment exhibits larger insensitivity to hidden bias when researchers conduct a sensitivity analysis. In other words, clustered treatment assignment exhibits larger design sensitivity (\citealp{rosenbaum2004design}), i.e., asymptotically a larger power in a sensitivity analysis.
    
    \item Fourth, when the individual-level no interference assumption is inappropriate (\citealp{rosenbaum2007interference}; \citealp{hudgens2008toward}). The definition of potential outcomes in a cluster-randomized experiment does not require assuming no interference among individuals within each cluster (see Remark~\ref{remark: inteference}). However, no interference is a necessary assumption when researchers design and conduct randomization inference at the individual-level (\citealp{small2008war}; \citealp{baiocchi2010building}).
\end{enumerate}

\section{Randomization-based inference}
\label{sec: random-based inference}
\subsection{Cluster-level sharp null hypothesis}
\label{subsec: sharp null and constant prop model}
Consider testing the following cluster-level sharp null hypothesis:
\begin{equation}\small
\label{hypothesis: general sharp null}
\begin{split}
    H_{0, \text{sharp}}: ~~&f_{kj}(r_{Tkj1}, \dots, r_{Tkjn_{kj}}, d_{Tkj1}, \dots, d_{Tkjn_{kj}})\\
    = ~&f_{kj}(r_{Ckj1}, \dots, r_{Ckjn_{kj}}, d_{Ckj1}, \dots, d_{Ckjn_{kj}}), ~\text{for all}~ k, j,
\end{split}
\end{equation}
where $f_{kj}(\cdot): \mathbb{R}^{2n_{kj}} \rightarrow \mathbb{R}$ maps the $2n_{kj}$ potential outcomes $\big\{(d_{Tkji}, r_{Tkji}), ~i = 1, \cdots, n_{kj}\big\}$ or $\big\{(d_{Ckji}, r_{Ckji}), ~i = 1, \cdots, n_{kj}\big\}$ to a scalar aggregate outcome.

\begin{example}
\label{ex: prop on cluster outcome}
When $f_{kj}(x_{1},\cdots, x_{n_{kj}}, y_{1},\dots, y_{n_{kj}})=n_{kj}^{-1}\sum_{i=1}^{n_{kj}}x_{i}-\beta \cdot n_{kj}^{-1}\sum_{i=1}^{n_{kj}}y_{i}$, \eqref{hypothesis: general sharp null} reduces to testing $H_{0, \text{sharp}, \text{prop}}: \beta = \beta_0$ in the following constant proportional treatment effect model:
\begin{equation}
    \label{eqn: constant prop model}
    n_{kj}^{-1}\sum_{i=1}^{n_{kj}}r_{Tkji}-n_{kj}^{-1}\sum_{i=1}^{n_{kj}}r_{Ckji}=\beta \left(n_{kj}^{-1}\sum_{i=1}^{n_{kj}}d_{Tkji}-n_{kj}^{-1}\sum_{i=1}^{n_{kj}}d_{Ckji}\right), ~\text{for all}~ k, j,
\end{equation}
which states that the mean difference of individuals' potential outcomes under encouragement and control is proportional to the mean difference of individuals' potential treatment received under encouragement and control. When $n_{kj} = 1$, $H_{0, \text{sharp}, \text{prop}}$ reduces to the proportional treatment effect model considered in \citet{small2008war}. Setting $\beta=0$ yields a cluster-level sharp null hypothesis of no treatment effect: $H_{0, \text{cluster}}: n_{kj}^{-1}\sum_{i=1}^{n_{kj}}r_{Tkji}=n_{kj}^{-1}\sum_{i=1}^{n_{kj}}r_{Ckji}$ for all $k, j$.
\end{example}

\begin{example}
\label{ex: each unit's sharp null}
Consider testing an individual-level sharp null hypothesis in a clustered design, i.e., $H_{0, \text{unit}}: r_{Tkji} = r_{Ckji}$ for all $k$, $j$, and $i$ as in \citet{small2008randomization}. It is easy to see that $H_{0, \text{unit}}$ implies $H_{0, \text{cluster}}$, and any statistic testing $H_{0, \text{cluster}}$ is also a valid test statistic for $H_{0, \text{unit}}$. Similarly, consider testing $H_{0, \text{unit}, \text{prop}}: \beta = \beta_0$ in a individual-level constant proportional treatment effect model: $r_{Tkji} - r_{Ckji} = \beta (d_{Tkji} - d_{Ckji})$ for all $k$, $j$, and $i$ as in \citet{small2008war}. Again, it is easy to see that $H_{0, \text{unit}, \text{prop}}$ implies the cluster-level null hypothesis $H_{0, \text{sharp}, \text{prop}}$.
\end{example}

In Supplementary Material C, we derived a rich class of nonparametric test statistics to test the null hypothesis $H_{0, 
\text{sharp}}$. The proposed family of test statistics $T_{\text{DR}}$ incorporates encouragement dose information, and contains many familiar test statistics as special cases, including the sign test, the Wilcoxon signed rank test, the dose-weighted signed rank test (\citealp{rosenbaum1997signed}), and the polynomial rank test (\citealp{rosenbaum2007confidence}).

\subsection{Pooled effect ratio (PER)}
\label{subsec: effect ratio}
Fisher's sharp null hypothesis may be restricted in cluster-randomized designs as heterogeneity may be expected across different clusters. In this section, we extend an estimand allowing for treatment heterogeneity known as the effect ratio (\citealp{imai2009essential}; \citealp{baiocchi2010building,kang2018estimation}) to paired cluster-randomized encouragement experiments, and derive randomization-based inferential methods.

\begin{definition}
\label{def: pooled effect ration}
Pooled effect ratio (PER), $\lambda_{\text{PER}}$, refers to the following quantity:
\begin{equation*}
    \lambda_{\text{PER}} = \frac{\sum_{k = 1}^K \sum_{j = 1}^2 \sum_{i = 1}^{n_{kj}} (r_{Tkji} - r_{Ckji})}{\sum_{k = 1}^K \sum_{j = 1}^2 \sum_{i = 1}^{n_{kj}} (d_{Tkji} - d_{Ckji})},
\end{equation*}
where counterfactuals $(r_{Tkji}, r_{Ckji}, d_{Tkji}, d_{Ckji})$ are defined in Definition \ref{def: potential outcomes}.
\end{definition}

Definition \ref{def: pooled effect ration} assumes that $\sum_{k = 1}^K \sum_{j = 1}^2 \sum_{i = 1}^{n_{kj}} (d_{Tkji} - d_{Ckji}) \neq 0$. Consider testing the null hypothesis $H_{0, \text{PER}}:\lambda_{\text{PER}}=\lambda_0$ using the statistic $T(\lambda_0)=K^{-1}\sum_{k=1}^K Y_k(\lambda_0)$, where $Y_k(\lambda_0)=\sum_{j=1}^2 (2Z_{kj}-1) \left(\sum_{i=1}^{n_{kj}}R_{kji}-\lambda_0 \sum_{i=1}^{n_{kj}}D_{kji}\right)$. Proposition \ref{prop: CLT effect ratio} characterizes useful properties of the statistic $T(\lambda_0)$.

\begin{proposition}
\label{prop: CLT effect ratio}
Under $H_{0, \text{PER}}: \lambda_{\text{PER}}=\lambda_0$ and $\mathbb{E}\big\{Z_{kj}\mid \mathcal{F}, \widetilde{\mathbf{Z}}_{\vee}, \widetilde{\mathbf{Z}}_{\wedge}\big\} = 1/2$ for all $k$ and $j$, it is true that $\mathbb{E}\big\{T(\lambda_0)\mid\mathcal{F}, \widetilde{\mathbf{Z}}_{\vee}, \widetilde{\mathbf{Z}}_{\wedge}\big\}=0$, and
${\rm var} \big\{\sqrt{K}T(\lambda_0)\mid\mathcal{F}, \widetilde{\mathbf{Z}}_{\vee}, \widetilde{\mathbf{Z}}_{\wedge}\big\}=K^{-1}\sum_{k=1}^K {\rm var}\big\{Y_k(\lambda_0)\mid\mathcal{F}, \widetilde{\mathbf{Z}}_{\vee}, \widetilde{\mathbf{Z}}_{\wedge} \big\}$, where
 ${\rm var} \big\{Y_k(\lambda_0)\mid\mathcal{F}, \widetilde{\mathbf{Z}}_{\vee}, \widetilde{\mathbf{Z}}_{\wedge}\big\} =
 4^{-1} \big\{ \sum_{i=1}^{n_{k1}} (r_{Tk1i}-\lambda_0 d_{Tk1i}+r_{Ck1i}-\lambda_0 d_{Ck1i})-\sum_{i=1}^{n_{k2}} (r_{Tk2i}-\lambda_0 d_{Tk2i}+r_{Ck2i}-\lambda_0 d_{Ck2i})\big\}^2$. Under conditions S1-S2 in the Supplementary Material B, as $K\rightarrow\infty$ and conditional on $\mathcal{F}, \widetilde{\mathbf{Z}}_{\vee}, \widetilde{\mathbf{Z}}_{\wedge}$, 
\begin{equation}
   \frac{ \sqrt{K} T(\lambda_0) }{\sqrt{ K^{-1}\sum_{k=1}^K {\rm var}\big\{Y_k(\lambda_0)\mid\mathcal{F}, \widetilde{\mathbf{Z}}_{\vee}, \widetilde{\mathbf{Z}}_{\wedge} \big\}}} ~\text{converges in distribution to}~ \text{Normal}(0,1). \label{eqn: CLT effect ratio}
\end{equation}
\end{proposition}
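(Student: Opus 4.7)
The plan is to reduce $Y_k(\lambda_0)$ to a pair-level two-point random variable by substituting the consistency relations $R_{kji}=Z_{kj}r_{Tkji}+(1-Z_{kj})r_{Ckji}$ and $D_{kji}=Z_{kj}d_{Tkji}+(1-Z_{kj})d_{Ckji}$, where $Z_{kj}\in\{0,1\}$ indicates that cluster $kj$ received the higher dose $\widetilde{Z}_{k1}\vee\widetilde{Z}_{k2}$. Writing $S^T_{kj}=\sum_{i=1}^{n_{kj}}(r_{Tkji}-\lambda_0 d_{Tkji})$ and $S^C_{kj}=\sum_{i=1}^{n_{kj}}(r_{Ckji}-\lambda_0 d_{Ckji})$, the identity $(2Z_{kj}-1)Z_{kj}=Z_{kj}$ combined with $Z_{k1}+Z_{k2}=1$ yields
\[
Y_k(\lambda_0)=Z_{k1}\bigl(S^T_{k1}-S^C_{k2}\bigr)+(1-Z_{k1})\bigl(S^T_{k2}-S^C_{k1}\bigr).
\]
Conditional on $\mathcal{F},\widetilde{\mathbf{Z}}_{\vee},\widetilde{\mathbf{Z}}_{\wedge}$ the $S$-terms are fixed constants, $Z_{k1}$ is Bernoulli$(1/2)$ by hypothesis, and $(Z_{11},\ldots,Z_{K1})$ are mutually independent across pairs (as the paired randomization assumption asserts), so the $Y_k$'s are independent.

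Given this reduction, the first two moments drop out in one line each. Direct computation gives $\mathbb{E}[Y_k\mid\cdot]=\tfrac{1}{2}\sum_{j=1}^{2}\sum_{i=1}^{n_{kj}}\{(r_{Tkji}-r_{Ckji})-\lambda_0(d_{Tkji}-d_{Ckji})\}$; summing over $k$ and invoking the definition of $\lambda_{\mathrm{PER}}$ under $H_{0,\mathrm{PER}}$ yields $\sum_k \mathbb{E}[Y_k\mid\cdot]=0$, whence $\mathbb{E}[T(\lambda_0)\mid\cdot]=0$. Note that individual $\mathbb{E}[Y_k\mid\cdot]$ need not vanish pairwise, only in aggregate. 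For the variance, a two-point variable taking values $a$ and $b$ each with probability $1/2$ has variance $(a-b)^2/4$; applied here,
\[
\mathrm{var}(Y_k\mid\cdot)=\tfrac{1}{4}\bigl\{(S^T_{k1}+S^C_{k1})-(S^T_{k2}+S^C_{k2})\bigr\}^2,
\]
which is precisely the stated formula after unfolding the definitions of the $S$-terms. Independence across pairs then gives $\mathrm{var}\bigl(\sqrt{K}\,T(\lambda_0)\bigm|\cdot\bigr)=K^{-1}\sum_{k=1}^K \mathrm{var}(Y_k\mid\cdot)$.

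For the asymptotic normality, under $H_{0,\mathrm{PER}}$ I would write $\sqrt{K}\,T(\lambda_0)=K^{-1/2}\sum_{k=1}^K\bigl\{Y_k-\mathbb{E}[Y_k\mid\cdot]\bigr\}$ (the centering is legitimate because the conditional means sum to zero under the null) and invoke the Lindeberg--Feller CLT for triangular arrays of independent mean-zero summands, relying on conditions S1--S2 to supply, respectively, stabilization of the average pair variance $K^{-1}\sum_k\mathrm{var}(Y_k\mid\cdot)$ and a Lindeberg-type tail control on the $Y_k$'s. I expect the CLT step to be the main obstacle: because cluster sizes $n_{kj}$ can vary with $k$, the summands $Y_k$ are not identically distributed and their magnitudes scale with the cluster totals, so verifying Lindeberg's condition in practice reduces to a uniform $(2+\delta)$-moment bound on the cluster-level aggregated potential-outcome quantities $S^T_{kj}\pm S^C_{kj}$, which is exactly where the regularity conditions S1--S2 in Supplementary Material B enter.
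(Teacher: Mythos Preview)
Your proposal is correct and follows essentially the same route as the paper: the paper introduces the identical shorthand $\Delta_{Tkj}=S^T_{kj}$, $\Delta_{Ckj}=S^C_{kj}$, derives the same two-point representation $Y_k=Z_{k1}(\Delta_{Tk1}-\Delta_{Ck2})-(1-Z_{k1})(\Delta_{Ck1}-\Delta_{Tk2})$, computes the mean and variance exactly as you do, and then establishes the CLT by verifying the fourth-moment Lyapunov condition (your anticipated $(2+\delta)$-moment bound with $\delta=2$, which is precisely what S1 supplies, while S2 ensures the variance stabilizes). The only cosmetic discrepancy is that you attribute variance stabilization to S1 and tail control to S2, whereas in the paper the roles are reversed.
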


\begin{proof}
All proofs in this article can be found in the Supplementary Material B.
\end{proof}

Observe that ${\rm var}\big\{Y_k(\lambda_0)\mid\mathcal{F}, \widetilde{\mathbf{Z}}_{\vee}, \widetilde{\mathbf{Z}}_{\wedge} \big\}$ in Proposition \ref{prop: CLT effect ratio} depends on both potential outcomes, one of which is always missing. To construct a valid test statistic, we need to derive a variance estimator based on observed data. Classical literature in paired experiments typically adopt the following sample variance of the observed paired differences as a conservative variance estimator \citep{Imai2008variancepaird, baiocchi2010building}:
\begin{equation*}
   S^2(\lambda_0) = \frac{1}{K-1} \sum_{k=1}^K \big\{Y_k(\lambda_0)- T(\lambda_0)\big\}^2. 
\end{equation*}
Recent works by \citet{lin2013agnostic}, \cite{ding2016jrssb} and \cite{fogarty2018finelystratified, fogarty2018paired} developed regression-assisted variance estimators that remain conservative in expectation, but can be less conservative than using the sample variance. To adapt this idea to our current setting, let $Q$ be an arbitrary $K\times p$ matrix with $p<K$, and $H_Q=Q(Q^T Q)^{-1}Q^T$ its hat matrix. Let $h_{Qk}$ be the $k$th diagonal element of $H_Q$, and $Y_Q$ a $K\times 1$ column vector with $k^{\text{th}}$ entry $Y_k(\lambda_0)/\sqrt{ 1- h_{Qk}}$. Define the following variance estimator:
\begin{equation}
    S^2_Q(\lambda_0) = \frac{1}{K} Y_Q^T (I-H_Q) Y_Q. 
    \label{eqn: S^2_Q}
\end{equation}
We proved in the Supplementary Material B that $S_Q^2(\lambda_0)$ is always a conservative variance estimator for $\text{var}\{\sqrt{K}T(\lambda_0)\}$ in finite sample. When $Q=\bm{e}$, a column vector of 1's, $S^2_Q(\lambda_0)$ is precisely equal to the classical variance estimator $S^2(\lambda_0)$.
More generally, $Q$ may contain any cluster-level or even individual-level covariate information. Researchers may consider two types of $Q$ matrix. The first type, denoted by $Q_1=[\bm{e}, \bm{B}]$, adjusts for cluster-level covariate information. Each row of $\bm{B}$ contains centered cluster-level covariates  $\widetilde{\bm{x}}_{k1}$ and $\widetilde{\bm{x}}_{k2}$. Centering the covariates renders $\bm{e}$ and $\bm{B}$ orthogonal, which is not necessary but makes theoretical derivations easier. The second type, denoted by $Q_2=[Q_1, \bm{W}]$, further contains individual-level covariate information, e.g., averages of individual covariates within each cluster $\overline{\bm{X}}_{k1}$ and $\overline{\bm{X}}_{k2}$, and $\bm{W}$ is the residual after projecting $\overline{\bm{X}}_{k1}$ and $\overline{\bm{X}}_{k2}$ onto the column space of $Q_1$. 

Proposition \ref{prop: variance improvement} suggests that the variance estimator $S^2_Q(\lambda_0)$ becomes less conservative as we use finer covariate information that are predictive of the treatment effect heterogeneity across $K$ cluster pairs, which eventually leads to more powerful inference.

\begin{proposition} \label{prop: variance improvement}
Under the assumptions in Proposition 1 and condition S3 in the Supplementary Material B, as $K\rightarrow \infty$,
\begin{equation}
   S^2_{Q} (\lambda_0) - {\rm var} \big\{\sqrt{K}T(\lambda_0)\mid\mathcal{F}, \widetilde{\mathbf{Z}}_{\vee}, \widetilde{\mathbf{Z}}_{\wedge}\big\}  ~\text{converges in probability to}~ \lim_{K\rightarrow\infty} \frac{1}{K}\bm{\tau}^T (I-H_Q)\bm{\tau},
\end{equation}
where $\bm{\tau}$ is a length-K vector with each entry being $\mathbb{E}\big\{Y_k(\lambda_0)\mid\mathcal{F}, \widetilde{\mathbf{Z}}_{\vee}, \widetilde{\mathbf{Z}}_{\wedge}\big\}$. If we further assume that $Q\in\{\bm e, Q_1, Q_2\}$ satisfies condition S3, as $K\rightarrow \infty$,
\begin{equation*}
    \begin{split}
        & S^2_{\bm{e}}(\lambda_0) - S^2_{Q_1}(\lambda_0) ~\text{converges in probability to}~ {\bm \beta}_B^T \Sigma^{-1}_B{\bm \beta}_B \geq 0,\\
    \text{and}\quad&S^2_{Q_1}(\lambda_0) - S^2_{Q_2}(\lambda_0) ~\text{converges in probability to}~ {\bm \beta}_W^T \Sigma^{-1}_W{\bm \beta}_W \geq 0,      
    \end{split}
\end{equation*}
where $\Sigma_B=\lim _{K\rightarrow\infty} K^{-1} \bm{B}^T \bm{B}$, $\Sigma_W=\lim _{K\rightarrow\infty} K^{-1} \bm{W}^T \bm{W}$, $\bm{\beta}_B=  \lim_{K\rightarrow\infty} K^{-1}  {\bm B}^T \bm{\tau}$, and $\bm{\beta}_W=  \lim_{K\rightarrow\infty} K^{-1}  {\bm W}^T \bm{\tau}$.
\end{proposition}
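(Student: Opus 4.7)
The plan is to first decompose $Y_k(\lambda_0)$ into its conditional mean $\mu_k = \mathbb{E}\{Y_k(\lambda_0)\mid\mathcal{F},\widetilde{\mathbf{Z}}_{\vee},\widetilde{\mathbf{Z}}_{\wedge}\}$ (the $k$th entry of $\boldsymbol{\tau}$) and a mean-zero residual $\epsilon_k = Y_k(\lambda_0) - \mu_k$; note that, given $\mathcal{F},\widetilde{\mathbf{Z}}_{\vee},\widetilde{\mathbf{Z}}_{\wedge}$, the $\epsilon_k$ are independent across pairs with ${\rm var}(\epsilon_k) = {\rm var}\{Y_k(\lambda_0)\mid\cdot\}$. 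Write $\Lambda = {\rm diag}\{(1-h_{Qk})^{-1/2}\}$, so that $Y_Q = \Lambda(\boldsymbol{\mu} + \boldsymbol{\epsilon})$ and
\begin{equation*}
K\,S_Q^2(\lambda_0) = \boldsymbol{\mu}^T\Lambda(I-H_Q)\Lambda\boldsymbol{\mu} + 2\boldsymbol{\mu}^T\Lambda(I-H_Q)\Lambda\boldsymbol{\epsilon} + \boldsymbol{\epsilon}^T\Lambda(I-H_Q)\Lambda\boldsymbol{\epsilon}.
\end{equation*}
Under condition S3 (which should assert, in standard fashion, that $\max_k h_{Qk}\to 0$ and that moments of $Y_k$ are uniformly bounded), $\Lambda\to I$ so the deterministic quadratic form satisfies $K^{-1}\boldsymbol{\mu}^T\Lambda(I-H_Q)\Lambda\boldsymbol{\mu} = K^{-1}\boldsymbol{\tau}^T(I-H_Q)\boldsymbol{\tau} + o(1)$.

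Next I handle the two stochastic pieces. The cross term $K^{-1}\boldsymbol{\mu}^T\Lambda(I-H_Q)\Lambda\boldsymbol{\epsilon}$ has conditional mean zero and conditional variance of order $K^{-1}$ by independence of the $\epsilon_k$ across $k$ and the boundedness assumed in S3, so it is $o_p(1)$. For the pure noise quadratic form, the key identity is
\begin{equation*}
\mathbb{E}\{\boldsymbol{\epsilon}^T\Lambda(I-H_Q)\Lambda\boldsymbol{\epsilon}\mid\mathcal{F},\widetilde{\mathbf{Z}}_{\vee},\widetilde{\mathbf{Z}}_{\wedge}\} = {\rm tr}\{(I-H_Q)\Lambda\,{\rm diag}({\rm var}(\epsilon_k))\Lambda\} = \sum_{k=1}^K {\rm var}\{Y_k(\lambda_0)\mid\cdot\},
\end{equation*}
because the diagonal of $\Lambda^2$ cancels the factors $(1-h_{Qk})$ appearing on the diagonal of $I-H_Q$. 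This is precisely ${\rm var}\{\sqrt{K}T(\lambda_0)\mid\cdot\}$ by Proposition~\ref{prop: CLT effect ratio}. A second-moment computation combined with S3 shows that the deviation from this mean is $o_p(K)$, yielding the first claim.

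For the monotonicity claim, I would exploit the nested orthogonal structure of the design matrices. Because the columns of $\bm{B}$ are centered, $\bm{e}\perp\bm{B}$, so $H_{Q_1} = H_{\bm{e}} + H_{\bm{B}}$ with $H_{\bm{B}} = \bm{B}(\bm{B}^T\bm{B})^{-1}\bm{B}^T$; and because $\bm{W}$ is by construction the residual of $\overline{\mathbf{X}}$ projected off $Q_1$, $Q_1\perp\bm{W}$, so $H_{Q_2} = H_{Q_1} + H_{\bm{W}}$. Applying the first part of the proposition to each of $Q\in\{\bm{e},Q_1,Q_2\}$ and subtracting gives
\begin{equation*}
S^2_{\bm{e}}(\lambda_0) - S^2_{Q_1}(\lambda_0) \xrightarrow{p} \lim_{K\to\infty} K^{-1}\boldsymbol{\tau}^T H_{\bm{B}}\boldsymbol{\tau}, \qquad S^2_{Q_1}(\lambda_0) - S^2_{Q_2}(\lambda_0) \xrightarrow{p} \lim_{K\to\infty} K^{-1}\boldsymbol{\tau}^T H_{\bm{W}}\boldsymbol{\tau}.
\end{equation*}
Rewriting $K^{-1}\boldsymbol{\tau}^T\bm{B}(\bm{B}^T\bm{B})^{-1}\bm{B}^T\boldsymbol{\tau} = (K^{-1}\bm{B}^T\boldsymbol{\tau})^T(K^{-1}\bm{B}^T\bm{B})^{-1}(K^{-1}\bm{B}^T\boldsymbol{\tau})$ and passing to the limit yields $\boldsymbol{\beta}_B^T\Sigma_B^{-1}\boldsymbol{\beta}_B$, which is nonnegative because $\Sigma_B$ is positive definite (assumed in S3); the $\bm{W}$ case is identical.

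The main obstacle is the rigorous control of the leverage-weighted quadratic forms: specifically, verifying that the weighting $(1-h_{Qk})^{-1/2}$ is harmless in the limit and that $\boldsymbol{\epsilon}^T\Lambda(I-H_Q)\Lambda\boldsymbol{\epsilon}/K$ concentrates around its mean $K^{-1}\sum_k{\rm var}(Y_k)$. Both come down to the standard regression-adjustment toolkit: uniformly vanishing leverage ($\max_k h_{Qk}\to 0$), a uniform fourth-moment bound on $Y_k$, and convergence of $K^{-1}\bm{B}^T\bm{B}$ and $K^{-1}\bm{W}^T\bm{W}$ to positive-definite limits, which together with Chebyshev control the fluctuations of the quadratic form; these are exactly the ingredients that condition S3 in the Supplement should be supplying, mirroring the arguments in \citet{fogarty2018finelystratified, fogarty2018paired}.
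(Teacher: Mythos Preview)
Your proposal is correct and uses the same underlying ingredients as the paper (uniformly vanishing leverage $\max_k h_{Qk}\to 0$, bounded fourth moments, Chebyshev/Markov concentration, and the orthogonality $H_{Q_1}=H_{\bm e}+H_{\bm B}$, $H_{Q_2}=H_{Q_1}+H_{\bm W}$), but the decomposition you choose is different from the paper's. The paper splits $S_Q^2(\lambda_0)=K^{-1}Y_Q^TY_Q-K^{-1}Y_Q^TH_QY_Q$ and handles the two pieces separately: for $K^{-1}Y_Q^TY_Q$ it computes the conditional mean and bounds the conditional variance directly, while for $K^{-1}Y_Q^TH_QY_Q$ it shows $K^{-1}Q^TY_Q\to K^{-1}Q^T\boldsymbol{\tau}$ coordinatewise and then reassembles the hat-matrix quadratic form. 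You instead split $Y_Q=\Lambda(\boldsymbol{\mu}+\boldsymbol{\epsilon})$ into signal plus noise and expand the single quadratic form $Y_Q^T(I-H_Q)Y_Q$ into three pieces. Your route is arguably cleaner: the trace identity you invoke for the pure-noise piece is exactly the finite-sample unbiasedness computation (the paper proves this separately as a lemma), so you get the target variance term in one stroke rather than recovering it as the difference of two limits. For the monotonicity statements, the paper first replaces $Y_Q$ by the unweighted $Y_0$ (absorbing the leverage into an $o(1)$) and then takes the difference $K^{-1}Y_0^T(H_{Q_1}-H_{\bm e})Y_0$, whereas you simply apply part one to each $Q$ and subtract the limits; both are valid and yield the same $\boldsymbol{\beta}_B^T\Sigma_B^{-1}\boldsymbol{\beta}_B$.
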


Proposition \ref{prop: variance improvement} does not rely on correctly specifying the relationship between $Y_k(\lambda_0)$ and covariates. Under mild regularity conditions, it is shown in the Supplementary Material B that $S_Q^2(\lambda_0)$ is approximately the mean-squared error after regressing $Y_k(\lambda_0)$ on the relevant covariates. In this sense, utilizing $Q$ in general produces a less conservative variance estimator, similar to the non-clustered settings discussed in \citet{lin2013agnostic}, \citet{ding2016jrssb}, and \citet{fogarty2018finelystratified,fogarty2018paired}. The magnitude of improvement depends on how well $Q$ predicts $Y_k(\lambda_0)$. Given the variance estimator $S^2_Q$, for large $K$, the hypothesis $H_{0, \text{PER}}: \lambda_{\text{PER}}=\lambda_0$ can be tested by comparing the statistic (\ref{eqn: CLT effect ratio}) with $S^2_Q$ in place of $K^{-1}\sum_{k=1}^K {\rm var}\big\{Y_k(\lambda_0)\mid\mathcal{F}, \widetilde{\mathbf{Z}}_{\vee}, \widetilde{\mathbf{Z}}_{\wedge}\big\}$ to the standard normal distribution.

\subsection{Simpson's paradox}
\label{subsec: from effect ratio to ACTE}
The pooled effect ratio is an appealing measure but does have one concerning feature. Consider two studies of the same scientific interest but conducted at two years. For simplicity, suppose that we have a binary instrumental variable and two clusters indexed by $k = 1, 2$, each with $n_k$ individuals. Let $a_k = \sum_{i = 1}^{n_{k}} (r_{Ti} - r_{Ci})$ and $b_k = \sum_{i = 1}^{n_{k}} (d_{Ti} - d_{Ci})$ denote the causal effect of the instrument on the treatment received and outcome in each cluster in the first year, and $a^\prime_k$ and $b^\prime_k$ the second year. Suppose that 
\[
\frac{a_{k}}{b_{k}} < \frac{a^\prime_k}{b^\prime_k},~ k \in \{1, 2\},
\]
i.e., both clusters exhibit a larger cluster-specific effect ratio in year two compared to year one. However, the pooled effect ratio may fail to preserve this trend, such that the following holds:
\[
\text{pooled effect ratio in year one} = \frac{a_1 + a_2}{b_1 + b_2} >\frac{a^\prime_1 + a^\prime_2}{b^\prime_1 + b^\prime_2} = \text{pooled effect ratio in year two}.
\]
This phenomenon is a particular instance of the well-known Simpson's paradox (\citealp{blyth1972simpson}; \citealp{bickel1975sex}): a trend appears in each subgroup or stratum, but disappears when subgroups or strata are combined. This drawback of pooled effect ratio motivates us to consider a new estimand that is trend-preserving while still allows for treatment heterogeneity.

\section{A cluster-heterogeneous proportional treatment effect model and average cluster effect ratio (ACER)} \label{sec: ACER}
\subsection{A cluster-heterogeneous structural model and a new estimand}
Consider the following cluster-heterogeneous proportional treatment effect model:
\begin{equation}
    \label{eqn: cluster heterogeneous prop model}
    \sum_{i = 1}^{n_{kj}} r_{Tkji} - \sum_{i = 1}^{n_{kj}} r_{Ckji} = \beta_{kj}\left(\sum_{i = 1}^{n_{kj}}d_{Tkji} - \sum_{i = 1}^{n_{kj}}d_{Ckji}\right), ~\text{for all}~ k = 1, \cdots, K,~j = 1, 2,
\end{equation}
where $\beta_{kj}$ measures a cluster-specific proportional treatment effect, and can be heterogeneous across different clusters. Instead of testing $H_{0, \text{sharp}, \text{prop}}: \beta_{kj} = \beta$ for all $k$ and $j$ as in Example \ref{ex: prop on cluster outcome}, we define 
\begin{equation*}
    \frac{1}{2K} \sum_{k = 1}^K\sum_{j = 1}^2 \beta_{kj} = \overline{\beta}_{kj} = \lambda_{\text{ACER}},
\end{equation*}
and consider testing the following weak null hypothesis:
\begin{equation}
\label{eqn: weak null hypothesis}
    H_{0, \text{ACER}}: \lambda_{\text{ACER}} = \lambda_0.
\end{equation}
The new estimand $\lambda_{\text{ACER}}$ is referred to as the average cluster effect ratio (ACER). 



\begin{remark}\rm
The average cluster effect ratio (ACER) is different from the pooled effect ratio (PER): \[
\frac{1}{2K} \sum_{k = 1}^K\sum_{j = 1}^2 \beta_{kj} =
\underbrace{\frac{1}{2K}\sum_{k = 1}^K\sum_{j = 1}^2 \frac{\sum_{i = 1}^{n_{kj}} r_{Tkji} - \sum_{i = 1}^{n_{kj}} r_{Ckji}}{\sum_{i = 1}^{n_{kj}}d_{Tkji} - \sum_{i = 1}^{n_{kj}}d_{Ckji}}}_{\text{ACER}} \neq \underbrace{\frac{\sum_{k = 1}^K \sum_{j = 1}^2 \sum_{i = 1}^{n_{kj}} (r_{Tkji} - r_{Ckji})}{\sum_{k = 1}^K \sum_{j = 1}^2 \sum_{i = 1}^{n_{kj}} (d_{Tkji} - d_{Ckji})}}_{\text{PER}}.
\]
It is easy to see that the average cluster effect ratio no longer suffers from Simpson's paradox, as it is now monotonic in each $\beta_{kj}$.
\end{remark}

\begin{remark}\rm
In our application, $\beta_{kj}$ measures a surgeon-specific effect of TEE monitoring on clinical outcomes. If the effect of TEE monitoring is believed to vary with practicing surgeons' expertise and experience, a cluster-heterogeneous proportion treatment effect model may be more appropriate than a constant proportional treatment effect model.
\end{remark}

\subsection{Developing an asymptotically valid test for the average cluster effect ratio}
\label{subsec: develop the weak null test}
Define $\text{CO}_{kj} = \sum_{i = 1}^{n_{kj}}d_{Tkji} - \sum_{i = 1}^{n_{kj}}d_{Ckji}$ and let $\iota_{kj} = \text{CO}_{kj}/n_{kj}$. Note that $\iota_{kj}$ is always well-defined and can be interpreted as cluster $kj$'s compliance rate if we further assume no defiers. The IV relevance assumption (A3) implies that:
\[
\iota_{kj} = n^{-1}_{kj} \cdot \left(\sum_{i = 1}^{n_{kj}}d_{Tkji} - \sum_{i = 1}^{n_{kj}}d_{Ckji}\right) = \overline{D}_{kj}(\widetilde{Z}_{k1}\vee \widetilde{Z}_{k2}) - \overline{D}_{kj}(\widetilde{Z}_{k1}\wedge \widetilde{Z}_{k2})\geq 0.
\]
Observe that according to model (\ref{eqn: cluster heterogeneous prop model}), the structural parameter $\beta_{kj}$ is not well-defined in cluster $kj$ after matching unless the following stronger IV relevance assumption (A3') holds: 
\begin{enumerate}\rm
    \item[A3'] \textsf{Strict IV Relevance After Matching}: We assume that $\iota_{kj}$ is bounded away from $0$ for each cluster $kj$ after matching:
    \begin{equation}
           \inf_{k, j} \iota_{kj} = \iota_{\text{min}} > 0.
           \label{eqn: lower bound on compliance rate}
    \end{equation}
\end{enumerate}
In words, assumption (A3') says that for the $k^{\text{th}}$ matched pair of two clusters, the encouragement dose $\widetilde{Z}_{k1}\vee \widetilde{Z}_{k2}$ compared to $\widetilde{Z}_{k1}\wedge \widetilde{Z}_{k2}$ would change the cluster-aggregated treatment received in cluster $kj$ by at least $n_{kj}\cdot\iota_{\text{min}}$ units. Requiring $\iota_{kj} \geq \iota_{\text{min}} > 0$ is a minimal assumption to have $\beta_{kj}$ and therefore the estimand $\lambda_{\text{ACER}}$ well-defined.

\begin{remark}\rm
Although $\iota_{\text{min}} > 0$ suffices for identification, it allows the encouragement defined by $\widetilde{Z}_{k1}\vee \widetilde{Z}_{k2}$ compared to $\widetilde{Z}_{k1}\wedge \widetilde{Z}_{k2}$ to be an arbitrarily weak instrument, i.e., $\iota_{kj}$ can be arbitrarily close to $0$. Confidence intervals obtained from weak instruments are well-known to be long and non-informative (\citealp{imbens2005robust}). To remedy this, some researchers have advocated strengthening an instrument in the design stage of an observational study (\citealp{baiocchi2010building, keele2016strong}) by penalizing two clusters having close encouragement doses $\widetilde{Z}_{kj}$ during statistical matching and forcing two units in each matched pair to be markedly different in $\widetilde{Z}_{kj}$. It may often be reasonable to further assume that, after strengthening, the instrument is not exceptionally weak and set $\iota_{\text{min}}$ to some conservative (e.g., $\iota_{\text{min}} = 0.10$) albeit non-degenerate value (e.g., $\iota_{\text{min}} = 0.0001$). The test developed in this section is valid under the minimal assumption $\iota_{\text{min}} > 0$; however, stronger, design-driven, assumptions on $\iota_{\text{min}}$ could help largely shorten the confidence interval and increase efficiency.
\end{remark}

\begin{remark}\rm
\label{remark: PER ACER compliance}
Recall that 
\begin{equation*}
\begin{split}
    \text{Pooled Effect Ratio} &= \frac{\sum_{k = 1}^K \sum_{j = 1}^2 \sum_{i = 1}^{n_{kj}} (r_{Tkji} - r_{Ckji})}{\sum_{k = 1}^K \sum_{j = 1}^2 \sum_{i = 1}^{n_{kj}} (d_{Tkji} - d_{Ckji})} \\
    &= \sum_{k = 1}^K\sum_{j = 1}^2 \underbrace{\frac{\sum_{i = 1}^{n_{kj}}(d_{Tkji} - d_{Ckji})}{\sum_{k = 1}^K \sum_{j = 1}^2 \sum_{i = 1}^{n_{kj}} (d_{Tkji} - d_{Ckji})}}_{w_{kj}}\cdot \underbrace{\frac{\sum_{i = 1}^{n_{kj}}(r_{Tkji} - r_{Ckji})}{\sum_{i = 1}^{n_{kj}}(d_{Tkji} - d_{Ckji})}}_{\beta_{kj}}.
\end{split}
\end{equation*}
The pooled effect ratio (PER) can be viewed as a weighted average of each cluster's effect ratio $\beta_{kj}$, where $\beta_{kj}$ is weighted by the proportion of cluster $kj$'s compliers among all compliers (assuming no defiers). Therefore, if the encouragement dose $\widetilde{Z}_{k1}\vee \widetilde{Z}_{k2}$ compared to $\widetilde{Z}_{k1}\wedge \widetilde{Z}_{k2}$ is an exceptionally weak instrument for cluster $kj$ so that the weight $w_{kj} \approx 0$, $\beta_{kj}$ does not contribute much to the final pooled effect ratio estimand. In this way, the pooled effect ratio estimand gives most weight to large, high-compliance-rate clusters, does not require the stricter IV relevance assumption (A3'), and skirts the potential weak instrument problem that some or even many clusters may have low compliance rate.  
\end{remark}

Consider the encouraged cluster $j$ in pair $k$ with $Z_{kj} = 1$. $\text{CO}_{kj}$ counts the number of compliers (those with $(d_T, d_C) = (1, 0)$) minus the number of defiers (those with $(d_T, d_C) = (0, 1)$) in cluster $kj$. The number of defiers in cluster $kj$ is lower bounded by $0$, and the number of compliers is upper bounded by $Z_{kj}\sum_{i=1}^{n_{kj}}D_{kjn_{kj}}$; hence, $\text{CO}_{kj}$ is upper bounded by $Z_{kj}\sum_{i=1}^{n_{kj}}D_{kjn_{kj}}$, an observed quantity. Similarly, $\text{CO}_{kj}$ is upper bounded by $(1-Z_{kj})\sum_{i=1}^{n_{kj}}(1-D_{kjn_{kj}})$ in each control cluster. Combining this with \eqref{eqn: lower bound on compliance rate}, we have the following box constraint on $\text{CO}_{kj}$:
\begin{equation}
\label{eqn: box constraint}
\begin{split}
&n_{kj}\cdot \iota_{\text{min}} \leq \text{CO}_{kj} \leq Z_{kj}\sum_{i=1}^{n_{kj}}D_{kjn_{kj}}+(1-Z_{kj})\sum_{i=1}^{n_{kj}}(1-D_{kjn_{kj}}), ~\text{for all}~k, j.
\end{split}
\end{equation}

\begin{remark}\rm
The box constraint \eqref{eqn: box constraint} suggests that assumption (A3') may be invalidated by the observed data: we know that (A3') fails to hold for cluster $kj$ when
\[
Z_{kj}\sum_{i=1}^{n_{kj}}D_{kjn_{kj}}+(1-Z_{kj})\sum_{i=1}^{n_{kj}}(1-D_{kjn_{kj}}) < n_{kj}\cdot \iota_{\text{min}}.
\]
This is the case, for instance, when there is no individual receiving treatment in an encouraged cluster, or no individual not receiving treatment in a control cluster, so that $Z_{kj}\sum_{i=1}^{n_{kj}}D_{kjn_{kj}}+(1-Z_{kj})\sum_{i=1}^{n_{kj}}(1-D_{kjn_{kj}}) = 0$. In such an eventuality, cluster $kj$ does not provide any information about the treatment effect and should be discarded from further analysis.
\end{remark}

Let $\mathbf{CO} = \big\{\text{CO}_{kj}, ~k = 1,\cdots, K,~j = 1, 2\big\}$ contain the $\text{CO}_{kj}$ information of all $2K$ clusters. Define the following estimand for a fixed $\mathbf{CO}$ configuration:
\begin{equation*}
    \frac{1}{2K}\sum_{k = 1}^K\sum_{j = 1}^2 \frac{\sum_{i = 1}^{n_{kj}} r_{Tkji} - \sum_{i = 1}^{n_{kj}} r_{Ckji}}{\sum_{i = 1}^{n_{kj}} d_{Tkji} - \sum_{i = 1}^{n_{kj}} d_{Ckji}} = \frac{1}{2K}\sum_{k = 1}^K\sum_{j = 1}^2 \frac{\sum_{i = 1}^{n_{kj}} r_{Tkji} - \sum_{i = 1}^{n_{kj}} r_{Ckji}}{\text{CO}_{kj}} = \lambda_{\text{ACER}}^{\mathbf{CO}}.
\end{equation*}
Consider testing the hypothesis
\begin{equation*}
    H_{0, \text{ACER}}^{\mathbf{CO}}: \lambda_{\text{ACER}}^{\mathbf{CO}} = \lambda_0.
\end{equation*}
Lemma \ref{lemma: test weak null given CO} constructs a family of randomization-based, asymptotically valid, level-$\alpha$ tests for $H_{0, \text{ACER}}^{\mathbf{CO}}$.

\begin{lemma}\rm
\label{lemma: test weak null given CO}
Let $V_k(\text{CO}_{kj}, \lambda_0) = \sum_{j = 1}^2 (2Z_{kj} - 1)\cdot \text{CO}_{kj}^{-1}\cdot\left(\sum_{i = 1}^{n_{kj}} R_{kji}\right) - \lambda_0$, $Q$ an arbitrary $K \times p$ matrix such that $p < K$, $H_Q$ the hat matrix of $Q$ with $k^{\text{th}}$ diagonal element $h_{Qk}$, $V_Q$ a column vector with entry $V_k/\sqrt{1 - h_{Qk}}$, and $\overline{V}(\text{CO}_{kj}, \lambda_0) = K^{-1}\sum_{k = 1}^K V_k(\text{CO}_{kj}, \lambda_0)$. Define the test statistic
\begin{equation*}
\delta(\lambda_0; \mathbf{CO}) = \left|\frac{\sqrt{K}\cdot\overline{V}(\text{CO}_{kj}, \lambda_0)}{ \sqrt{\frac{1}{K} V_Q^T (I-H_Q) V_Q}}\right|.
\end{equation*}
The null hypothesis $H_{0, \text{ACER}}^{\mathbf{CO}}: \lambda_{\text{ACER}}^{\mathbf{CO}} = \lambda_0$ is rejected at level $\alpha$ if $\delta(\lambda_0; \mathbf{CO}) \geq z_{1 - \alpha/2}$, the $1-\alpha/2$ quantile of the standard normal distribution.
\end{lemma}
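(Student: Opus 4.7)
The plan is to mirror the two-step strategy used for the pooled effect ratio in Propositions~\ref{prop: CLT effect ratio} and~\ref{prop: variance improvement}, with the pair-level summand $Y_k(\lambda_0)$ replaced by $V_k(\text{CO}_{kj}, \lambda_0)$. The essential new wrinkle is that ACER is a \emph{weak} null: each $V_k$ generally has nonzero conditional expectation under $H_{0, \text{ACER}}^{\mathbf{CO}}$, and only the average of these expectations vanishes. I will proceed in three steps: compute the conditional mean of $V_k$ to identify the null centering; apply a Lindeberg--Feller CLT across matched pairs; and verify that $K^{-1} V_Q^T (I-H_Q) V_Q$ is asymptotically conservative for the true variance.

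For the first step, pair-randomization $\mathbb{E}\{Z_{kj}\mid\mathcal{F}, \widetilde{\mathbf{Z}}_\vee, \widetilde{\mathbf{Z}}_\wedge\} = 1/2$ together with the identity $\sum_i R_{kji} = Z_{kj}\sum_i r_{Tkji} + (1-Z_{kj})\sum_i r_{Ckji}$ gives
$$
\mathbb{E}\bigl\{V_k(\text{CO}_{kj}, \lambda_0) \mid \mathcal{F}, \widetilde{\mathbf{Z}}_\vee, \widetilde{\mathbf{Z}}_\wedge\bigr\} = \tfrac{1}{2}(\beta_{k1} + \beta_{k2}) - \lambda_0,
$$
where $\beta_{kj}$ is the cluster-specific effect ratio from \eqref{eqn: cluster heterogeneous prop model}; averaging over $k$ reproduces $\lambda_{\text{ACER}}^{\mathbf{CO}} - \lambda_0$, which vanishes under the null. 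Independence of $V_1,\dots,V_K$ across matched pairs follows from independent pair-randomization, and a direct computation yields a closed-form expression for ${\rm var}(V_k)$ equal to one quarter the squared pair-difference of the two candidate realizations of $V_k$. Invoking the Lindeberg--Feller CLT under appropriately rewritten versions of conditions S1-S2 then gives $\sqrt{K}\,\overline{V}/\sqrt{K^{-1}\sum_k {\rm var}(V_k)} \xrightarrow{d} \mathrm{Normal}(0,1)$.

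For the variance estimator, decompose $V_k = \tau_k + \varepsilon_k$ with $\tau_k = \mathbb{E}\{V_k\mid\mathcal{F}, \widetilde{\mathbf{Z}}_\vee, \widetilde{\mathbf{Z}}_\wedge\}$ and $\varepsilon_k$ mean zero. Then
$$
\tfrac{1}{K} V_Q^T (I - H_Q) V_Q = \tfrac{1}{K}\varepsilon^T (I-H_Q)\varepsilon + \tfrac{2}{K}\tau^T (I-H_Q)\varepsilon + \tfrac{1}{K}\tau^T (I-H_Q)\tau,
$$
and under condition S3 the first term converges in probability to the true asymptotic variance, the cross term is $o_p(1)$, and the nonnegative-definite quadratic form $K^{-1}\tau^T(I-H_Q)\tau \geq 0$ is precisely the excess that makes the estimator conservative (and is absorbed when $Q$ explains $\tau$ well, paralleling Proposition~\ref{prop: variance improvement}). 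A Slutsky-type argument then shows that $\delta(\lambda_0; \mathbf{CO})$ converges in distribution to a random variable stochastically dominated by $|\mathrm{Normal}(0,1)|$, so rejecting when $\delta \geq z_{1-\alpha/2}$ yields a level-at-most-$\alpha$ test. The main obstacle is the $\text{CO}_{kj}^{-1}$ factor inside $V_k$, which can explode for clusters with very weak instruments and threatens both the Lindeberg condition and the consistency step; assumption (A3') with $\iota_{\min} > 0$ is the crucial ingredient that uniformly bounds these inverse weights so that the moment and stability conditions inherited from bounded $R_{kji}$ carry over from the PER arguments with only cosmetic rewriting.
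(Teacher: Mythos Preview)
Your proposal is correct and follows essentially the same route as the paper's proof: both replace $Y_k(\lambda_0)$ by $V_k(\text{CO}_{kj},\lambda_0)$ and re-run the mean, CLT, and conservative-variance arguments of Propositions~\ref{prop: CLT effect ratio} and~\ref{prop: variance improvement} (the paper invokes Lyapunov rather than Lindeberg--Feller, an immaterial difference). Your explicit observation that individual $\mathbb{E}\{V_k\mid\mathcal{F},\widetilde{\mathbf{Z}}_\vee,\widetilde{\mathbf{Z}}_\wedge\}=\tfrac{1}{2}(\beta_{k1}+\beta_{k2})-\lambda_0$ need not vanish while their average does, and your remark that (A3') is what keeps the $\text{CO}_{kj}^{-1}$ weights uniformly bounded so the moment conditions carry over, are both on target and more explicit than the paper's own sketch.
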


\begin{remark}\rm
Lemma \ref{lemma: test weak null given CO} is a simple consequence of Proposition \ref{prop: CLT effect ratio} and \ref{prop: variance improvement}. As elaborated in Section \ref{subsec: effect ratio}, we recommend taking $Q = Q_1$ or $Q_2$ to fully leverage the clustered design and make variance estimator less conservative.
\end{remark}

Let $\mathcal{CO}$ denote the collection of all $\mathbf{CO}$ configurations subject to the box constraints \eqref{eqn: box constraint}. One obvious strategy to construct a valid test is to compute the minimum test statistic $\delta(\lambda_0; \mathbf{CO})$ for all $\mathbf{CO} \in \mathcal{CO}$. However, this can be unduly conservative as it ignores other useful information data suggests. Let
\[
S_{\mathbf{CO}} = \frac{1}{2K} \sum_{k = 1}^K \sum_{j = 1}^2 \iota_{kj} = \frac{1}{2K} \sum_{k = 1}^K \sum_{j = 1}^2 \frac{\text{CO}_{kj}}{n_{kj}}.
\]
If we further assume no defiers, $S_{\mathbf{CO}}$ can be understood as the average compliance rate across all $2K$ clusters. It can also be understood as a weighted L1-norm of the length-$2K$ vector $\mathbf{CO}$. Lemma \ref{lemma: construct conf for S_CO} is analogous to Lemma \ref{lemma: test weak null given CO}, and constructs a family of randomization-based, asymptotically valid, level-$\alpha/2$ confidence intervals for $S_{\mathbf{CO}}$.

\begin{lemma}\rm
\label{lemma: construct conf for S_CO}
Let $D_k = \sum_{j = 1}^2 (2Z_{kj} - 1)\cdot n_{kj}^{-1}\cdot \left(\sum_{i = 1}^{n_{kj}} D_{kji}\right)$, $Q$ an arbitrary $K \times p$ such that $p < K$, $H_Q$ the hat matrix of $Q$ with $k^{\text{th}}$ diagonal element $h_{Qk}$, $D_Q$ a column vector with entry $D_k/\sqrt{1 - h_{Qk}}$, and $\overline{D} = K^{-1}\sum_{k = 1}^K D_k$. A level-$\alpha/2$ confidence interval of $S_{\mathbf{CO}}$ is
\begin{equation*}
    I_{\alpha/2} = \left[~\overline{D} - \frac{z_{1 - \alpha/4}}{K}\cdot \sqrt{D_Q^T (I-H_Q) D_Q}, ~\overline{D} + \frac{z_{1 - \alpha/4}}{K}\cdot \sqrt{D_Q^T (I-H_Q) D_Q}~\right].
\end{equation*}
\end{lemma}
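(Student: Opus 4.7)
The plan is to mirror the derivation used for Proposition \ref{prop: CLT effect ratio} and Proposition \ref{prop: variance improvement}, with the paired difference of average treatment received $D_k$ playing the role that $Y_k(\lambda_0)$ played previously; Lemma \ref{lemma: construct conf for S_CO} then follows by inverting an asymptotically valid Wald pivot for $S_{\mathbf{CO}}$.

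First, I would establish that $\overline{D}$ is conditionally unbiased for $S_{\mathbf{CO}}$ under the randomization distribution. Writing $\overline{d}_{Tkj} = n_{kj}^{-1}\sum_{i} d_{Tkji}$ and $\overline{d}_{Ckj} = n_{kj}^{-1}\sum_{i} d_{Ckji}$, the observed cluster average $n_{kj}^{-1}\sum_{i} D_{kji}$ equals $Z_{kj}\overline{d}_{Tkj} + (1-Z_{kj})\overline{d}_{Ckj}$, so substituting into the definition of $D_k$ and using $\mathbb{E}[Z_{kj}\mid \mathcal{F}, \widetilde{\mathbf{Z}}_\vee,\widetilde{\mathbf{Z}}_\wedge]=1/2$ (the same randomization condition invoked in Proposition \ref{prop: CLT effect ratio}) yields $\mathbb{E}[D_k \mid \mathcal{F}, \widetilde{\mathbf{Z}}_\vee,\widetilde{\mathbf{Z}}_\wedge] = (\iota_{k1} + \iota_{k2})/2$, from which $\mathbb{E}[\overline{D} \mid \mathcal{F}, \widetilde{\mathbf{Z}}_\vee,\widetilde{\mathbf{Z}}_\wedge] = S_{\mathbf{CO}}$. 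The same two-point calculation that gave the variance in Proposition \ref{prop: CLT effect ratio} gives here ${\rm Var}(D_k \mid \mathcal{F}, \widetilde{\mathbf{Z}}_\vee,\widetilde{\mathbf{Z}}_\wedge) = \tfrac{1}{4}\bigl[(\overline{d}_{Tk1}+\overline{d}_{Ck1}) - (\overline{d}_{Tk2}+\overline{d}_{Ck2})\bigr]^2$, and independence across matched pairs gives ${\rm Var}(\sqrt{K}\,\overline{D}) = K^{-1}\sum_k {\rm Var}(D_k)$.

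Second, I would invoke the same Lindeberg-type conditions (S1--S2 in the Supplementary Material) used for Proposition \ref{prop: CLT effect ratio} to conclude that $\sqrt{K}(\overline{D} - S_{\mathbf{CO}})/\sqrt{V_K} \xrightarrow{d} \text{Normal}(0,1)$, where $V_K = K^{-1}\sum_k {\rm Var}(D_k)$. Asymptotic conservativeness of the regression-assisted estimator $K^{-1} D_Q^T(I - H_Q) D_Q$ then follows by repeating the projection argument of Proposition \ref{prop: variance improvement} verbatim, with $D_k$ in place of $Y_k(\lambda_0)$ and the vector of conditional means $\boldsymbol{\mu}$ with entries $(\iota_{k1}+\iota_{k2})/2$ in place of $\boldsymbol{\tau}$; the resulting asymptotic bias equals $\lim_{K\to\infty} K^{-1}\boldsymbol{\mu}^T(I - H_Q)\boldsymbol{\mu} \geq 0$ because $I - H_Q$ is a symmetric idempotent. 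Combining via Slutsky and inverting the pivot at critical value $z_{1-\alpha/4}$ produces the stated interval $I_{\alpha/2}$ with asymptotic coverage at least $1 - \alpha/2$.

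The main obstacle, such as it is, is conceptual rather than technical: unlike the null-centered test statistics of Propositions \ref{prop: CLT effect ratio}--\ref{prop: variance improvement}, here $\mathbb{E}[D_k]$ is nonzero and varies across pairs, so the variance estimator carries a nonvanishing bias $K^{-1}\boldsymbol{\mu}^T(I - H_Q)\boldsymbol{\mu}$. This bias only widens the interval and never shrinks it, so nominal coverage is preserved; it also makes clear that taking $Q = Q_1$ or $Q = Q_2$ (absorbing cluster-level or within-cluster covariate information that predicts $\mu_k$) delivers shorter intervals, exactly as in the discussion following Proposition \ref{prop: variance improvement}. All remaining steps are direct adaptations of the two propositions already established.
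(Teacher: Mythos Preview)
Your proposal is correct and follows essentially the same route as the paper's proof: show $\mathbb{E}[\overline{D}\mid\mathcal{F},\widetilde{\mathbf{Z}}_\vee,\widetilde{\mathbf{Z}}_\wedge]=S_{\mathbf{CO}}$, then invoke the CLT and conservative-variance arguments of Propositions~\ref{prop: CLT effect ratio}--\ref{prop: variance improvement} with $D_k$ replacing $Y_k(\lambda_0)$, and invert the resulting Wald pivot at $z_{1-\alpha/4}$. One minor aside: the feature you flag as a new obstacle---that the pairwise means $\mathbb{E}[D_k]$ are nonzero and heterogeneous---already occurs for $Y_k(\lambda_0)$ under $H_{0,\text{PER}}$ (the $\tau_k$ in Proposition~\ref{prop: variance improvement} need not vanish individually, only their sum does), so nothing genuinely new arises here.
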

Similar to Lemma~\ref{lemma: test weak null given CO}, we recommend using a design matrix $Q$ that contains covariates predictive of $D_k$ to reduce the length of $I_{\alpha/2}$ and improve efficiency. Lemma \ref{lemma: construct conf for S_CO} implies that not all $\mathbf{CO} \in \mathcal{CO}$ are equally plausible; it turns out that it suffices to restrict our attention to $\mathbf{CO}$ configurations such that $S_{\mathbf{CO}} \in I_{\alpha/2}$. Algorithm \ref{alg} formally states the testing procedure for the null hypothesis $H_{0, \text{ACER}}: \lambda_{\text{ACER}} = \lambda_0$, and Proposition \ref{prop: validity of the test} establishes its validity.

\begin{algorithm} 
\label{algo: testing the weak null}
\SetAlgoLined
\caption{Pseudo Algorithm for Testing $H_{0, \text{ACER}}: \lambda_{\text{ACER}} = \lambda_0$ at level $\alpha$} \label{alg}
\vspace*{0.3 cm}
\KwIn {$\big\{(Z_{kj}, D_{kji}, R_{kji}), k = 1, \cdots, K, j = 1, 2, i = 1 \cdots, n_{kj}\big\}$, $\iota_{\text{min}}$, $Q^{K\times p}$, and level $\alpha$;}
\vspace*{0.3 cm}
\ShowLn Construct a level-$\alpha/2$ confidence interval for $S_{\mathbf{CO}}$ according to Lemma \ref{lemma: construct conf for S_CO}; denote it as $I_{\alpha/2}$;\\
\vspace*{0.3 cm}
\ShowLn Compute the minimum test statistic $\delta_{\text{min}}(\lambda_0) = \inf\limits_{\mathbf{CO}} \delta(\lambda_0; \mathbf{CO})$, where $\delta(\lambda_0; \mathbf{CO})$ is calculated according to Lemma \ref{lemma: test weak null given CO} and the infimum is taken over all $\mathbf{CO}$ such that $S_{\mathbf{CO}} \in I_{\alpha/2}$, and satisfies
\[
n_{kj}\cdot\iota_{\text{min}} \leq \text{CO}_{kj} \leq Z_{kj}\sum_{i=1}^{n_{kj}}D_{kjn_{kj}}+(1-Z_{kj})\sum_{i=1}^{n_{kj}}(1-D_{kjn_{kj}});
\]\\
\vspace*{0.1 cm}
\ShowLn Reject the null hypothesis $H_{0, \text{ACER}}: \lambda_{\text{ACER}} = \lambda_0$ if $\delta_{\text{min}}(\lambda_0) > z_{1-\alpha/4}$.
\end{algorithm}

\begin{proposition}\rm
\label{prop: validity of the test}
Under identification assumptions (A1), (A2), (A3') and (A4), and mild regularity conditions specified in Supplementary Material B, Algorithm 1 is an asymptotically valid level-$\alpha$ test for $H_{0, \text{ACER}}: \lambda_{\text{ACER}} = \lambda_0$ for an arbitrary $K \times p$ matrix $Q$ such that $p < K$ and $\iota_{\text{min}} > 0$. 
\end{proposition}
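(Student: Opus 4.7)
The plan is to combine Lemma~\ref{lemma: test weak null given CO} and Lemma~\ref{lemma: construct conf for S_CO} via a Bonferroni-type union bound, where the split is already built into the critical values used by Algorithm~\ref{alg}: $I_{\alpha/2}$ is a $(1-\alpha/2)$-confidence set for $S_{\mathbf{CO}}$, and the critical value $z_{1-\alpha/4}$ in Step~3 corresponds to applying the test of Lemma~\ref{lemma: test weak null given CO} at level $\alpha/2$ instead of $\alpha$. The goal is to show that, under $H_{0, \text{ACER}}$, both the $\mathbf{CO}$-confidence step and the conditional test step fail with asymptotic probability at most $\alpha/2$ each, so the overall Type I error is bounded by $\alpha$.

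First, let $\mathbf{CO}^\ast$ denote the (unknown) true configuration of $\{\text{CO}_{kj}\}$. Under (A3'), each entry satisfies $\text{CO}^\ast_{kj}\ge n_{kj}\,\iota_{\min}$, and since the number of compliers minus defiers in an encouraged (respectively control) cluster cannot exceed the observed number of treated (respectively untreated) individuals there, $\mathbf{CO}^\ast$ automatically satisfies the box constraint~\eqref{eqn: box constraint}; hence $\mathbf{CO}^\ast$ is feasible in the infimum defining $\delta_{\min}(\lambda_0)$ whenever $S_{\mathbf{CO}^\ast}\in I_{\alpha/2}$. Moreover, by definition, $\lambda_{\text{ACER}}^{\mathbf{CO}^\ast}=\lambda_{\text{ACER}}$, so under $H_{0,\text{ACER}}$ the hypothesis $H_{0,\text{ACER}}^{\mathbf{CO}^\ast}:\lambda_{\text{ACER}}^{\mathbf{CO}^\ast}=\lambda_0$ holds as well.

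Next, I would introduce the two ``good'' events: $\mathcal{A}=\{S_{\mathbf{CO}^\ast}\in I_{\alpha/2}\}$ and $\mathcal{B}=\{\delta(\lambda_0;\mathbf{CO}^\ast)<z_{1-\alpha/4}\}$. By Lemma~\ref{lemma: construct conf for S_CO} and the usual equivalence between confidence intervals and tests, $\liminf_{K\to\infty}\Pr(\mathcal{A})\ge 1-\alpha/2$. By Lemma~\ref{lemma: test weak null given CO}, applied at level $\alpha/2$ (so that its critical value is $z_{1-\alpha/4}$), combined with $H_{0,\text{ACER}}^{\mathbf{CO}^\ast}$ holding, $\liminf_{K\to\infty}\Pr(\mathcal{B})\ge 1-\alpha/2$. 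On $\mathcal{A}\cap\mathcal{B}$, the true $\mathbf{CO}^\ast$ lies in the feasible set of the infimum in Step~2 of Algorithm~\ref{alg}, so
\begin{equation*}
\delta_{\min}(\lambda_0)\;=\;\inf_{\mathbf{CO}}\delta(\lambda_0;\mathbf{CO})\;\le\;\delta(\lambda_0;\mathbf{CO}^\ast)\;<\;z_{1-\alpha/4},
\end{equation*}
and Algorithm~\ref{alg} does not reject. A union bound then gives $\limsup_{K\to\infty}\Pr\{\text{reject } H_{0,\text{ACER}}\}\le \Pr(\mathcal{A}^c)+\Pr(\mathcal{B}^c)\le \alpha$, which is the desired asymptotic level.

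I expect the only conceptual obstacle to be the careful bookkeeping at the boundary: namely, verifying that $\mathbf{CO}^\ast$ is indeed in the feasible set over which the infimum is taken, which requires $\iota_{\min}>0$ (part of (A3')) for the lower box bound and the trivial algebraic identity for the upper bound; and that the regularity conditions inherited from Propositions~\ref{prop: CLT effect ratio} and~\ref{prop: variance improvement} (relayed through the Supplementary Material) are satisfied at the true $\mathbf{CO}^\ast$, so that the central limit theorem underlying Lemma~\ref{lemma: test weak null given CO} applies to $V_k(\text{CO}^\ast_{kj},\lambda_0)$ and similarly for Lemma~\ref{lemma: construct conf for S_CO} applied to $D_k$. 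Everything else is a direct union-bound calculation, and the splitting $\alpha=\alpha/2+\alpha/2$ is already encoded in the critical values used by Algorithm~\ref{alg}, so no additional conservativeness is introduced beyond the standard Bonferroni cost.
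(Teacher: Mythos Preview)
Your proposal is correct and follows essentially the same Bonferroni-splitting argument as the paper's proof: decompose the rejection event according to whether the true $S_{\mathbf{CO}^\ast}$ lies in $I_{\alpha/2}$, bound the probability of $S_{\mathbf{CO}^\ast}\notin I_{\alpha/2}$ by $\alpha/2$ via Lemma~\ref{lemma: construct conf for S_CO}, and on the complementary event use feasibility of $\mathbf{CO}^\ast$ to bound $\delta_{\min}(\lambda_0)\le\delta(\lambda_0;\mathbf{CO}^\ast)$ and then apply Lemma~\ref{lemma: test weak null given CO} at level $\alpha/2$. If anything, your write-up is slightly more explicit than the paper's about why $\mathbf{CO}^\ast$ satisfies the box constraints and hence is feasible, which is the only substantive step beyond the union bound.
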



\subsection{Optimization and computation}
\label{subsec: algorithm}
To carry out Algorithm \ref{alg}, it is essential to compute $\delta_{\text{min}}(\lambda_0)$ subject to the box constraints, the norm constraint, and the integrality constraint. Fix a $\lambda_0$ and $\mathbf{CO}$ configuration. Lemma \ref{lemma: test weak null given CO} suggests that $\delta^2(\lambda_0; \mathbf{CO}) = (K^2 \overline{V}^2)/\big\{V^T_Q(I - H_Q)V_Q\big\}$ follows a $\chi_1^2$ distribution, and the null hypothesis $H_{0, \text{ACER}}^{\mathbf{CO}}: \lambda_{\text{ACER}}^{\mathbf{CO}} = \lambda_0$ is rejected at level $\alpha/2$ when $(K^2 \overline{V}^2)/\big\{V^T_Q(I - H_Q)V_Q\big\} \geq \chi^2_{1, \alpha/2}$. Define $\rho(\mathbf{CO})$ =  $K^2 \overline{V}^2 - \chi^2_{1, \alpha/2}\cdot V^T_Q(I - H_Q)V_Q$, where the dependence on $\mathbf{CO}$ is made explicit. Let $I_{\alpha/2} = [L_{\alpha/2}, U_{\alpha/2}]$ denote a level-$\alpha/2$ confidence interval returned from Step 1. Step 2 in Algorithm \ref{alg} reduces to solving the following optimization program:
\begin{equation}
\label{eqn: optimization prob}
\begin{split}
 &\underset{\mathbf{CO}}{\text{minimize}}~~\rho(\mathbf{CO}) \\
    \text{subject to}\quad&n_{kj}\cdot\iota_{\text{min}} \leq \text{CO}_{kj} \leq Z_{kj}\sum_{i=1}^{n_{kj}}D_{kjn_{kj}}+(1-Z_{kj})\sum_{i=1}^{n_{kj}}(1-D_{kjn_{kj}}), \\
    &L_{\alpha/2} \leq \frac{1}{2K} \sum_{k = 1}^K \sum_{j = 1}^2 \text{CO}_{kj}/n_{kj} \leq U_{\alpha/2}, ~~\text{CO}_{kj} ~\text{are integers},
\end{split}
\end{equation}
and the null hypothesis $H_{0, \text{ACER}}: \lambda_{\text{ACER}} = \lambda_0$ is rejected if the minimum returned is non-negative.

The optimization problem \eqref{eqn: optimization prob} is an instance of a mixed-integer quadratic programming (MIQP) problem, as the decision variables $\text{CO}_{kj}$ are only allowed to take on integer values. A further complication arises as decision variables $\mathbf{CO}_{kj}$ appear in the objective function as their corresponding inverses $1/\mathbf{CO}_{kj}$. To handle this, we introduce another $2K$ decision variables $\mathbf{CO}'_{kj}$ that satisfy $\mathbf{CO}_{kj} \cdot \mathbf{CO}^\prime_{kj} = 1$ for all $k,j$, which further introduces $2K$ bilinear constraints, and the optimization problem now becomes a mixed integer quadratically-constrained programming (MIQCP) problem. Supplementary Material D contains details on how to formulate and practically solve this optimization problem.

\section{Application} \label{sec: application}
Recall that we have formed $204$ pairs of $2$ similar surgeons based on the composition of their patient population, hospital characteristics, and cluster size. During the matching process, we strengthened the instrument by penalizing two surgeons having similar preference for TEE. Our final matched samples consist of similar surgeons (see Table \ref{tbl: balance table} for covariate balance) with substantially different preference for TEE. The left panel of Figure \ref{fig: boxplots} shows the boxplots of the encouragement dose in the encouraged and control groups. The median matched cluster difference in the encouragement dosage is $0.53$ and the minimum $0.23$. The right panel of Figure \ref{fig: boxplots} plots patients' 30-day mortality rate in the encouraged group of surgeons, the control group of surgeons, and the encouraged-minus-control matched cluster pair difference. The control group appears to have slightly higher 30-day mortality rate than the encouraged group. 

   \begin{figure}[h]
    \centering
    \begin{minipage}{0.49\textwidth}
        \centering
        \includegraphics[width=0.99\textwidth]{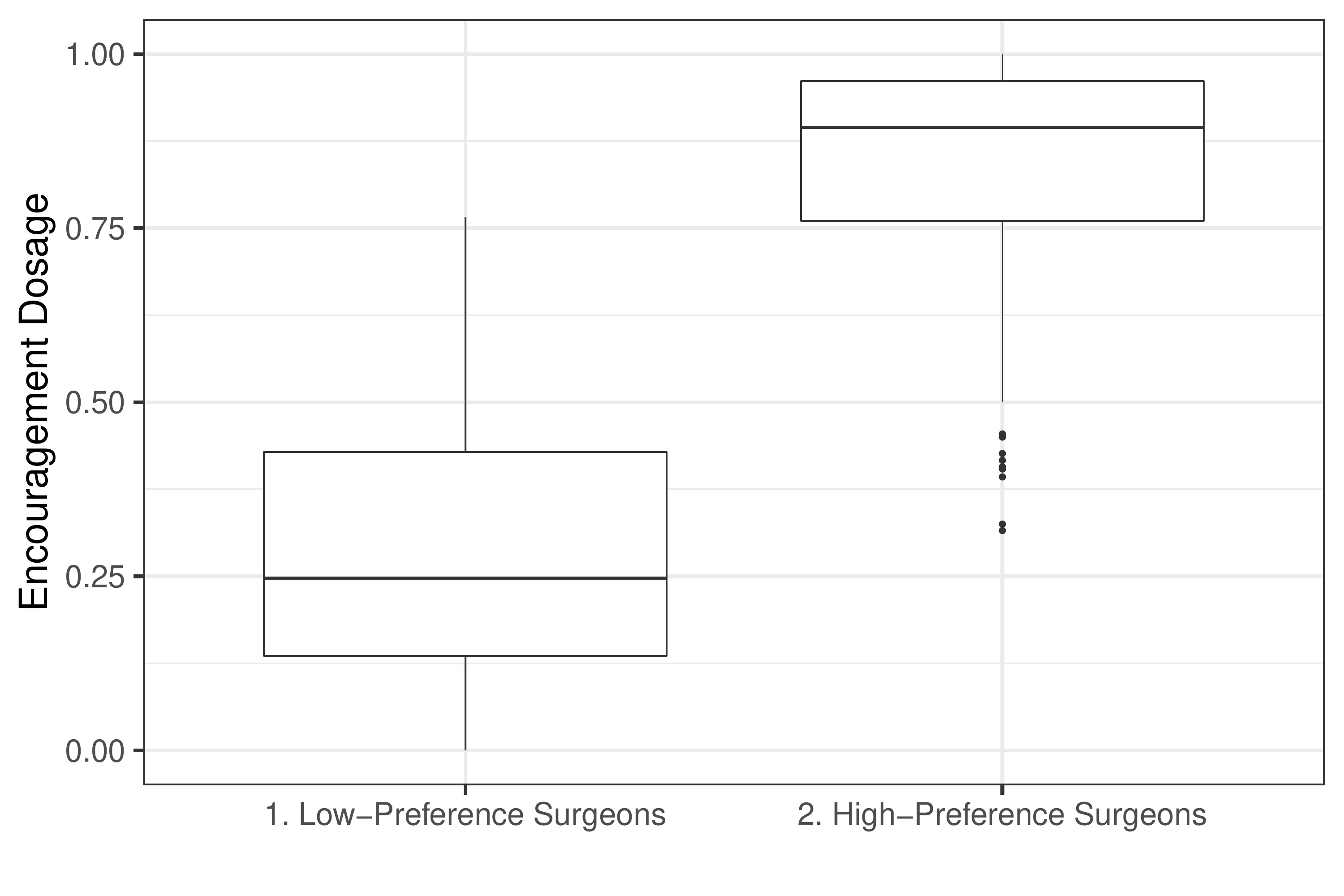} 
    \end{minipage}\hfill
    \begin{minipage}{0.49\textwidth}
        \centering
        \includegraphics[width=0.99\textwidth]{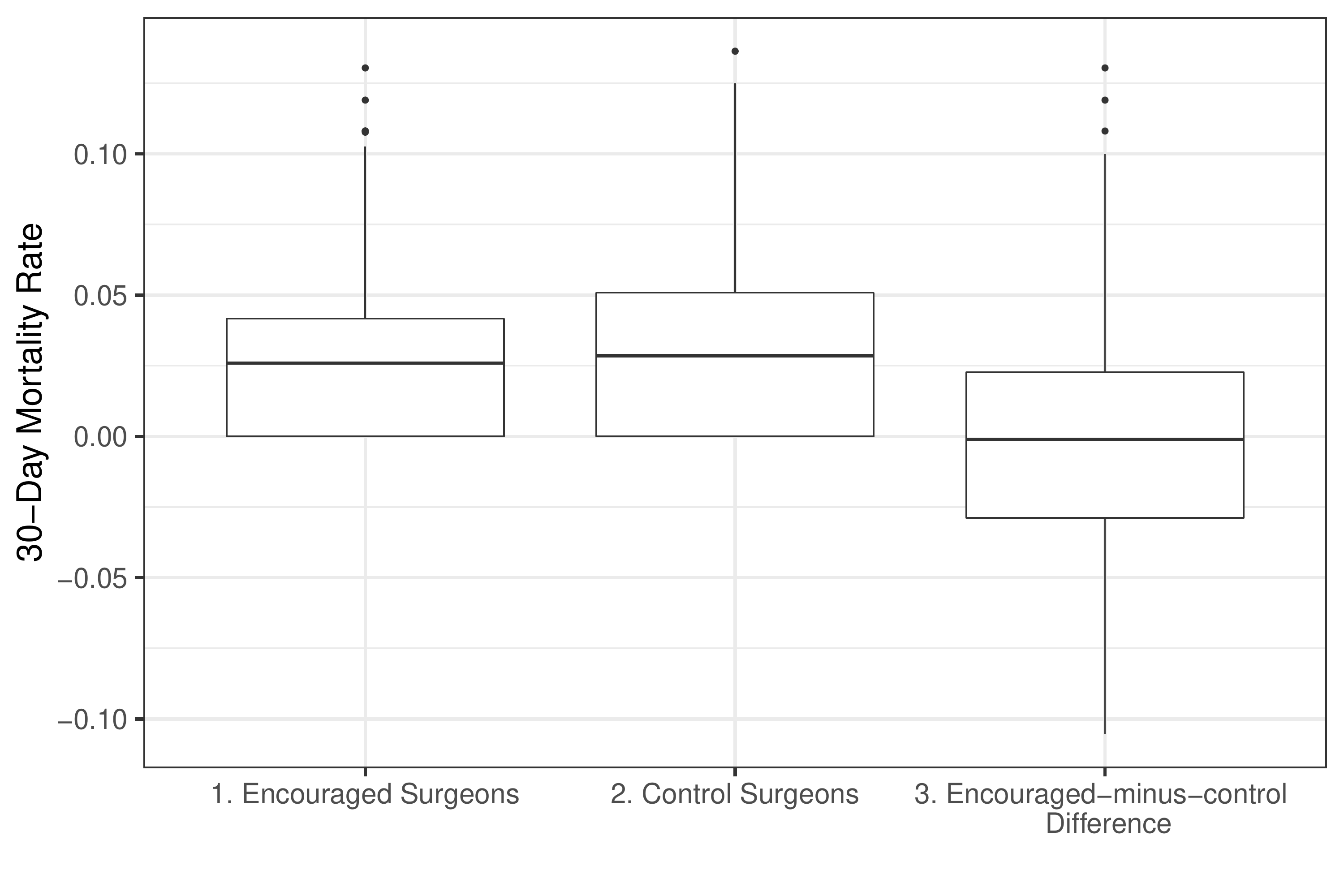} 
    \end{minipage}
    \caption{Left panel: boxplots of the encouragement dosage in the matched encouraged and control groups. Right panel: boxplots of the 30-day mortality rate in the encouraged group, 30-day mortality rate in the control group, and encouraged-minus-control matched pair difference in the 30-day mortality rate.}
    \label{fig: boxplots}
\end{figure}

Assuming a cluster-level constant proportional treatment effect model \eqref{eqn: constant prop model} and testing the sharp null hypothesis $H_{0, \text{sharp}, \text{prop}}: \beta = 0$ using a nonparametric double rank test statistic with $\varphi(d_{k}, q_{k}) = d_k \times q_k$ (see Supplementary Material C for details on the test statistic), we obtained a two-sided p-value equal to $0.045$ and a $95\%$ confidence interval $[-0.019, -0.001]$ upon inverting the test. Next, we tested the null hypothesis that the pooled effect ratio is $0$, i.e., $H_{0, \text{PER}}: \lambda_{\text{PER}} = 0$. We constructed the variance estimator according to \eqref{eqn: S^2_Q} where $Q$ is a matrix adjusting for both cluster-level and average within-cluster individual-level covariates. The two sided p-value was $0.195$ and the $95\%$ confidence interval $[-0.016, 0.003]$. Finally, we considered the cluster-heterogeneous proportional treatment effect model \eqref{eqn: cluster heterogeneous prop model} and constructed a level-0.05 confidence interval for the average cluster effect ratio. A level $0.025$ confidence interval of $\overline{\iota} = (1/2K)\sum_{k= 1}^K \sum_{j = 1}^2 \iota_{kj}$, i.e., the average compliance rate across 2K clusters if we further assume no defiers, was $[0.52, 0.58]$. As the instrument was strengthened and any two clusters in a pair had a sharp difference in the encouragement dosage, we assumed $\iota_{\text{min}} = 0.20$ and obtained a $95\%$ confidence interval $[-0.14, 0.10]$ for the average cluster effect ratio using Algorithm 1. If we allowed surgeons' preference to be an even weaker instrument at least in some clusters and set $\iota_{\text{min}} = 0.10$, we obtained a longer $95\%$ confidence interval $[-0.27, 0.24]$. It is not surprising that allowing each cluster's effect ratio to be heterogeneous as in the average cluster effect ratio estimand yields a much wider confidence interval compared to assuming a constant effect ratio across all $408$ surgeons. The long confidence interval of the average cluster effect ratio is consistent with the seemingly heterogeneous encouraged-minus-control difference in the outcome, which ranges from $0.13$ to $-0.11$ (see the third boxplot in the right panel of Figure \ref{fig: boxplots}). On the other hand, the confidence interval of the pooled effect ratio is comparable in length to assuming a constant proportional treatment effect, which makes sense in light of Remark \ref{remark: PER ACER compliance}: the pooled effect ratio assigns larger weight to surgeons with high-compliance-rate patient population and hence for whom the surgeon-specific treatment effect is the most informative; as a result, the inference for the pooled effect ratio is very efficient in this example because the average compliance rate $\overline{\iota}$ is high. 

\section*{Acknowledgement}
The authors would like to thank Professor Dylan S. Small and participants at the University of Pennsylvania causal inference reading group for helpful comments and feedback.

\section*{Supplementary Material}
Supplementary material includes a literature review, proofs to Lemma \ref{lemma: test weak null given CO}, \ref{lemma: construct conf for S_CO}, Proposition \ref{prop: CLT effect ratio}, \ref{prop: variance improvement}, and \ref{prop: validity of the test}, derivation of a class of nonparametric test statistics for the cluster-level sharp null hypothesis, formulation of Algorithm 1 into a mixed integer quadratically-constrained programming (MIQCP) optimization problem and discussion (with simulation) of its computation cost, an extensive simulation study to contrast a cluster-level instrumental variable matched analysis to a individual-level instrumental variable analysis when there is residual unmeasured confounding, and details on application.

\bibliographystyle{apalike}
\bibliography{paper-ref2}

\clearpage
\pagenumbering{arabic}
  \begin{center}
    {\LARGE\bf Supplementary Materials for ``Bridging preference-based instrumental variable studies and cluster-randomized encouragement experiments: study design, noncompliance, and average cluster effect ratio"}
\end{center}
  \medskip

\begin{abstract}
    Supplementary Material A contains a detailed literature review on approaches to cluster-randomized encouragement designs and physician-preference-based instrumental variables analysis, and statistical-matching-based methods to instrumental variable analysis. Differences between our method and those in the literature are examined. Supplementary Material B contains proofs to lemmas and propositions in the main article. Supplementary Material C derives a rich family of nonparametric test statistics for testing the cluster-level Fisher's sharp null hypothesis. Supplementary Material D introduces the standard form of  mixed integer quadratically-constrained programming (MIQCP), provides details on how to formulate the optimization problem in Algorithm 1 as an MIQCP, discusses how to practically implement the optimization problem, and reports computation cost via a simulation study. Supplementary Material E contrasts cluster-level instrumental variable matched analyses to individual-level instrumental variable matched analyses when there is residual unmeasured confounding from a causal directed acyclic graph (DAG) perspective and via extensive simulation studies. Supplementary Material F provides details for our application, including the data sources, detailed exclusion and inclusion criteria, and how statistical matching was performed.
\end{abstract}

\begin{center}
\section*{\large\bf Supplementary Material A: Literature Review}
\end{center}

Many authors have proposed methods to deal with cluster-randomized encouragement designs from different perspectives. Notably, \citet{frangakis2002clustered} proposed a Bayesian hierarchical modeling approach that estimated the intention-to-treat (ITT) effect for principal strata. \citet{forastiere2016identification} further built upon \citet{frangakis2002clustered} and proposed a Bayesian principal stratification method that disentangled the effect of encouragement on outcome via the spillover effect from that via lifting the uptake of treatment. Empirical preference-based instrumental variable analyses typically leverage structural equation models, e.g., two-stage least squares, marginal structural models, and variants of them (\citealp{brookhart2006evaluating, hernan2006instruments}). One exception is \citet{fogarty2019biased}, who conducted a matched observational study and paired individual patients whose surgeons have different preference for the treatment.

Our work drew heavily from statistical-matching-based study design approaches to instrumental variable analysis. Statistical matching is an extensively used tool to control for observed confounding variables and draw causal inference (\citealp{rubin1973matching}; \citealp{rosenbaum2002observational,rosenbaum2010design}; \citealp{hansen2004full}; \citealp{stuart2010matching}; \citealp{zubizarreta2012using}; \citealp{diamond2013genetic}; \citealp{pimentel2015large}; \citealp{savje2017generalized}; \citealp{yu2019matching}). Two key ingredients of cluster-randomized encouragement designs, clustered treatment assignment and randomized encouragement design, have been studied separately under a matching design framework in the literature. \citet{hansen2014clustered} studied testing Fisher's sharp null hypothesis in a matched observational study with cluster-level treatment assignment in non-instrumental-variable settings. They found that the clustered treatment assignment is less susceptible to hidden bias compared to the treatment assignment applied at the individual level, in the sense that if there is a genuine treatment effect and no unmeasured confounding, the clustered treatment assignment exhibits larger insensitivity to hidden bias, i.e., researchers would be able to reject the null hypothesis of no effect at a larger degree of hypothetical unmeasured confounding when conducting a sensitivity analysis. In another word, clustered treatment assignment exhibits larger design sensitivity (\citealp{rosenbaum2004design}).

\citet{small2008war} considered a matching-based study design approach to instrumental variable analysis and conducted randomization-based inference for the structural parameter in a constant proportional treatment effect model. See also \citet{imbens2005robust} and \citet{ertefaie2018quantitative}. More recently, \citet{heng2019instrumental} studied the trade-off between sample size and IV strength for the constant proportional treatment effect model. On a more practical side, \citet{zubizarreta2017optimal} and \citet{pimentel2018optimal} developed integer-programming-based and network-flow-based algorithms suited for matching clustered and multilevel data structure. 
 
Works that are most relevant to our analysis in the current article are \citet{small2008randomization}, \citet{imai2009essential}, \citet{baiocchi2010building}, \citet{kang2018estimation}, and \citet{fogarty2019biased}. \citet{small2008randomization} analyzed a group randomized trial with noncompliance, an ideal prototype of a cluster-randomized encouragement design, using randomization-based inferential methods. \citet{small2008randomization} restricted their attention to data from a cluster-randomized trial, not the observational data studied in this article. Moreover, \citet{small2008randomization} only considered binary encouragement assignment (i.e., binary instrumental variable) and conducted randomization-based inference only for Fisher's sharp null hypothesis in a constant proportional treatment effect model that does not allow for treatment heterogeneity. In the current article, we studied how to deal with continuous encouragement assignment in observational data settings, and proposed estimands and inferential methods allowing for treatment heterogeneity. 

\citet{imai2009essential} considered in great detail matched-pair cluster-randomized trials. In particular, \citet[Section 6.3]{imai2009essential} studied design-based inferential methods under individual-level noncompliance when both clusters and individuals within each cluster are randomly sampled from a superpopulation, and when only clusters are assumed to be randomly sampled from a superpopulation but individuals within each cluster are held fixed. More recently, \citet{kang2018estimation} studied randomization-based inferential methods in cluster-randomized trials with noncompliance using the so-called ``finite-sample asymptotics'' (\citealp{li2017general}). The estimand considered in \citet{kang2018estimation} is similar to the pooled effect ratio estimand considered in this article. Neither \citet{imai2009essential} nor \citet{kang2018estimation} leveraged observed covariates in constructing the variance estimator. Our work built upon \citet{imai2009essential} and \citet{kang2018estimation}, and largely expanded the scope of their work in two aspects. First, we demonstrated the usefulness of cluster-randomized encouragement experiments in the context of observational instrumental variable data with both binary and continuous instrumental variable. Second, we proposed a novel estimand (ACER) that allows each cluster to have its own cluster-level effect ratio and studied randomization-based inferential methods for this new estimand.

\citet{baiocchi2010building} first adapted the ``nonbipartite matching'' (\citealp{lu2001matching, lu2011optimal}) to instrumental variable analysis with a continuous instrument, and formally introduced the ``effect ratio'' estimand. Effect ratio estimand is an analogue of the Wald estimator in matched observational studies and allows for treatment heterogeneity. \citet{baiocchi2010building} also derived a valid randomization-based inference method for the effect ratio estimand.  \citet{fogarty2019biased} generalized inference for the effect ratio to a ``biased randomization" scheme. Our work differs from those by \citet{baiocchi2010building} and \citet{fogarty2019biased} in two aspects. First, we studied in detail how to embed noisy observational data into cluster-randomized encouragement experiments, rather than the non-clustered, individual-level, design and analysis. Second, we proposed a new model and a new estimand that generalized the constant proportional treatment effect model by allowing for a cluster-heterogeneous treatment effect. 

\clearpage
\begin{center}
\section*{{\large\bf Supplementary Material B: Proofs}}
\end{center}

\subsection*{B.1: Regularity Conditions}

We state the regularity conditions used in Section 3.2. To simplify notations, for every $k=1,\dots, K$ and $j=1,2$, we define
\[
\Delta_{Tkj}= \sum_{i=1}^{n_{kj}} (r_{Tkji}- \lambda_0 d_{Tkji}), \qquad \Delta_{Ckj}= \sum_{i=1}^{n_{kj}} (r_{Ckji}- \lambda_0 d_{Ckji}).
\]

Condition S1. (Bounded Fourth Moments)  \\
$\lim\sup_{K\rightarrow\infty} K^{-1} \sum_{k=1} ^K \Delta^4_{Tkj} $, $\lim\sup_{K\rightarrow\infty} K^{-1}\sum_{k=1} ^K \Delta^4_{Ckj} $, $j=1,2$, are finite.

Condition S2. (Existence of Population Moments) As $K\rightarrow \infty$, $K^{-1}\sum_{k=1}^K(\Delta_{Tk1}+ \Delta_{Ck1}- \Delta_{Tk2}- \Delta_{Ck2} )^2$ and  $K^{-1}\sum_{k=1}^K \tau_k^2$  converge to finite limits, where $\tau_k= 2^{-1}\sum_{j=1}^2(\Delta_{Tkj}- \Delta_{Ckj})$. 

Condition S3. (Design Matrix) $\lim\sup_{K\rightarrow\infty}K^{-1}\sum_{k=1}^K q_{kv}^4$, $v=1,\dots, p$,
are finite, where $q_{kv}$ is the $(k,v)$ element of $Q$. As $K\rightarrow\infty$, let $\bm\tau= (\tau_1, \dots, \tau_K)^T$,  $K^{-1}Q^T \bm\tau$ converges to a vector of finite limits denoted as $\bm\beta_Q$, $K^{-1} Q^T Q $ converges to a finite and positive definite limit denoted as $\Sigma_Q$, the maximum leverage  $\max_{k=1,\dots, K} h_{Qk}$ converges to zero. 

\subsection*{B.2: Proof of Proposition 1}
\begin{proof}
We first compute the expectation of $T(\lambda_0)$ under $\mathbb{E}(Z_{kj}\mid \mathcal{F}, \widetilde{\mathbf{Z}}_{\vee}, \widetilde{\mathbf{Z}}_{\wedge})=1/2$ for every $k$ and $j$:
\begin{align*}
  &\mathbb{E}\{T(\lambda_0)\mid\mathcal{F}, \widetilde{\mathbf{Z}}_{\vee}, \widetilde{\mathbf{Z}}_{\wedge}\} \\
    = &\frac{1}{K}\sum_{k=1}^{K} \mathbb{E}\left \{ \sum_{j = 1}^2 Z_{kj}\left( \sum_{i = 1}^{n_{kj}} r_{Tkji} - \lambda_{0} \sum_{i = 1}^{n_{kj}} d_{Tkji} \right)\mid\mathcal{F}, \widetilde{\mathbf{Z}}_{\vee}, \widetilde{\mathbf{Z}}_{\wedge}\right\} \\
    &\quad - \frac{1}{K}\sum_{k=1}^{K}\mathbb{E}\left\{ \sum_{j = 1}^2 (1 - Z_{kj})\left(\sum_{i = 1}^{n_{kj}} r_{Ckji} - \lambda_{0} \sum_{i = 1}^{n_{kj}} d_{Ckji} \right)\mid\mathcal{F}, \widetilde{\mathbf{Z}}_{\vee}, \widetilde{\mathbf{Z}}_{\wedge}\right\}\\
    = &\frac{1}{K}\sum_{k = 1}^K \sum_{j = 1}^2 \left\{\frac{1}{2}\sum_{i=1}^{n_{kj}} (r_{Tkji}- \lambda_0 d_{Tkji})\right\} - \frac{1}{K}\sum_{k = 1}^K \sum_{j = 1}^2 \left\{\frac{1}{2}\sum_{i=1}^{n_{kj}} (r_{Ckji}- \lambda_0 d_{Ckji})\right\}\\
    = &\frac{1}{2K} \sum_{k = 1}^K  \sum_{j=1}^{2}\sum_{i=1}^{n_{kj}}(r_{Tkji}-r_{Ckji})-\frac{1}{2K}\lambda_{0}\sum_{k = 1}^K  \sum_{j=1}^{2}\sum_{i=1}^{n_{kj}}(d_{Tkji}-d_{Ckji}).
\end{align*}
Under $H_{0, \text{PER}}: \lambda_{\text{PER}}=\lambda_0$,
\begin{align*}
    \sum_{k = 1}^K  \sum_{j=1}^{2}\sum_{i=1}^{n_{kj}}(r_{Tkji}-r_{Ckji}) = \lambda_{0}\sum_{k = 1}^K  \sum_{j=1}^{2}\sum_{i=1}^{n_{kj}}(d_{Tkji}-d_{Ckji}),
\end{align*}
and therefore 
\begin{align*}
    \mathbb{E}\{T(\lambda_0)\mid\mathcal{F}, \widetilde{\mathbf{Z}}_{\vee}, \widetilde{\mathbf{Z}}_{\wedge}\} = 0.
\end{align*}

Next, we calculate the variance:
\begin{equation*}
\begin{split}
    &\qquad {\rm var}  \{\sqrt{K}T(\lambda_0)\mid\mathcal{F}, \widetilde{\mathbf{Z}}_{\vee}, \widetilde{\mathbf{Z}}_{\wedge}\}\\
    &=\frac{1}{K}\sum_{k=1}^K {\rm var}\{Y_k(\lambda_0)\mid\mathcal{F}, \widetilde{\mathbf{Z}}_{\vee}, \widetilde{\mathbf{Z}}_{\wedge} \}\\
    &=\frac{1}{K} \sum_{k=1}^K {\rm var} \left\{\sum_{j=1}^2 (2Z_{kj}-1)\left(\sum_{i=1}^{n_{kj}}R_{kji}-\lambda_0 \sum_{i=1}^{n_{kj}} D_{kji}\right) \mid\mathcal{F}, \widetilde{\mathbf{Z}}_{\vee}, \widetilde{\mathbf{Z}}_{\wedge}\right\} \\
    &=\frac{1}{K} \sum_{k=1}^K {\rm var} \left\{ Z_{k1} \Delta_{Tk1}- (1-Z_{k1})\Delta_{Ck1}+ Z_{k2} \Delta_{Tk2}- (1-Z_{k2}) \Delta_{Ck2} \mid\mathcal{F}, \widetilde{\mathbf{Z}}_{\vee}, \widetilde{\mathbf{Z}}_{\wedge}\right\} \\
    &=\frac{1}{K} \sum_{k=1}^K {\rm var} \left\{ Z_{k1} \Delta_{Tk1}- (1-Z_{k1})\Delta_{Ck1}+ (1-Z_{k1}) \Delta_{Tk2}- Z_{k1} \Delta_{Ck2} \mid\mathcal{F}, \widetilde{\mathbf{Z}}_{\vee}, \widetilde{\mathbf{Z}}_{\wedge}\right\} \\
    &=\frac{1}{K} \sum_{k=1}^K {\rm var} \left\{ Z_{k1} (\Delta_{Tk1}+\Delta_{Ck1}- \Delta_{Tk2}- \Delta_{Ck2}) - \Delta_{Ck1}+ \Delta_{Tk2} \mid\mathcal{F}, \widetilde{\mathbf{Z}}_{\vee}, \widetilde{\mathbf{Z}}_{\wedge}\right\} \\
    &=\frac{1}{4K}\sum_{k=1}^{K}\left( \Delta_{Tk1}+ \Delta_{Ck1}- \Delta_{Tk2}- \Delta_{Ck2}\right)^2
\end{split}
\end{equation*}
where the fifth line is because $Z_{k1}+Z_{k2}=1$, the last line is because $\text{var} (Z_{k1} \mid\mathcal{F}, \widetilde{\mathbf{Z}}_{\vee}, \widetilde{\mathbf{Z}}_{\wedge})=1/4$.

Finally, under Conditions S1-S2, we prove a triangular version of the Lyapunov's condition, 
\begin{align}
    &\lim_{K\rightarrow \infty} \frac{\sum_{k=1}^K \mathbb{E}\{|Y_k(\lambda_0)- \tau_k|^4\mid \mathcal{F}, \widetilde{\mathbf{Z}}_{\vee}, \widetilde{\mathbf{Z}}_{\wedge}\}}{\{\sum_{k=1}^K  \text{var}(Y_k(\lambda_0)\mid \mathcal{F}, \widetilde{\mathbf{Z}}_{\vee}, \widetilde{\mathbf{Z}}_{\wedge} )\}^{2}}\leq 8 \lim_{K\rightarrow \infty} \frac{\sum_{k=1}^K [\mathbb{E}\{Y_k^4(\lambda_0)\mid \mathcal{F}, \widetilde{\mathbf{Z}}_{\vee}, \widetilde{\mathbf{Z}}_{\wedge}\}+ \tau_k^4]}{\{\sum_{k=1}^K  \text{var}(Y_k(\lambda_0)\mid \mathcal{F}, \widetilde{\mathbf{Z}}_{\vee}, \widetilde{\mathbf{Z}}_{\wedge} )\}^{2}}\nonumber\\
    &=8 \lim_{K\rightarrow \infty} \frac{K^{-2}\sum_{k=1}^K [\mathbb{E}\{Y_k^4(\lambda_0)\mid \mathcal{F}, \widetilde{\mathbf{Z}}_{\vee}, \widetilde{\mathbf{Z}}_{\wedge}\}+ \tau_k^4]}{\{K^{-1}\sum_{k=1}^K  \text{var}(Y_k(\lambda_0)\mid \mathcal{F}, \widetilde{\mathbf{Z}}_{\vee}, \widetilde{\mathbf{Z}}_{\wedge} )\}^{2}},\label{eq: lyapunov}
\end{align}
where $\tau_k= \mathbb{E}\{Y_k(\lambda_0)\mid \mathcal{F}, \widetilde{\mathbf{Z}}_{\vee}, \widetilde{\mathbf{Z}}_{\wedge}\}$.

Since
\begin{align}
    &\quad K^{-2} \sum_{k=1}^K \mathbb{E} \{Y_k^4(\lambda_0)\mid\mathcal{F}, \widetilde{\mathbf{Z}}_{\vee}, \widetilde{\mathbf{Z}}_{\wedge} \}\label{eq: 4moment}\\
    &= K^{-2} \sum_{k=1}^K \mathbb{E}\{[Z_{k1} (\Delta_{Tk1}- \Delta_{Ck2} ) - (1- Z_{k1}) (\Delta_{Ck1}- \Delta_{Tk2} )]^4 \mid\mathcal{F}, \widetilde{\mathbf{Z}}_{\vee}, \widetilde{\mathbf{Z}}_{\wedge} \}\nonumber \\
        &= K^{-2} \sum_{k=1}^K \mathbb{E}\{[Z_{k1} (\Delta_{Tk1}- \Delta_{Ck2} )^2 + (1- Z_{k1}) (\Delta_{Ck1}- \Delta_{Tk2} )^2]^2 \mid\mathcal{F}, \widetilde{\mathbf{Z}}_{\vee}, \widetilde{\mathbf{Z}}_{\wedge} \} \nonumber \\
         &\leq 2 K^{-2} \sum_{k=1}^K \mathbb{E}\{[Z_{k1} (\Delta_{Tk1}- \Delta_{Ck2} )^4 + (1- Z_{k1}) (\Delta_{Ck1}- \Delta_{Tk2} )^4] \mid\mathcal{F}, \widetilde{\mathbf{Z}}_{\vee}, \widetilde{\mathbf{Z}}_{\wedge} \} \nonumber \\
          &= K^{-2} \sum_{k=1}^K (\Delta_{Tk1}- \Delta_{Ck2} )^4 + (\Delta_{Ck1}- \Delta_{Tk2} )^4  \nonumber \\
    &\leq C K^{-2}  \sum_{k=1}^K \{ \Delta_{Tk1}^4+\Delta_{Tk2}^4+ \Delta_{Ck1}^4+ \Delta_{Ck2}^4 \},  \nonumber
\end{align}
where the second line is from the definition of $Y_k(\lambda_0)$, the third line is because $Z_{k1}(1-Z_{k1})=0$, the fourth line is because $(X+Y)^2\leq 2(X^2+Y^2)$ where $X, Y$ are generic variables, the fifth line is because $\mathbb{E}\{Z_{k1}\mid\mathcal{F}, \widetilde{\mathbf{Z}}_{\vee}, \widetilde{\mathbf{Z}}_{\wedge} \}=1/2 $, and $C$ in the sixth line is a generic constant. Therefore, $K^{-2} \sum_{k=1}^K \mathbb{E} \{Y_k^4(\lambda_0)\mid\mathcal{F}, \widetilde{\mathbf{Z}}_{\vee}, \widetilde{\mathbf{Z}}_{\wedge} \}\rightarrow 0$ as $K\rightarrow \infty$ from Condition S1. Moreover, 
\begin{align}
    K^{-2}\sum_{k=1}^K \tau_k^4 = 16^{-1}K^{-2}\sum_{k=1}^K \left\{ \sum_{j=1}^2 (\Delta_{Tkj}- \Delta_{Ckj}) \right\}^4 \leq CK^{-2}\sum_{k=1}^K \left( \Delta_{Tk1}^4+\Delta_{Tk2}^4+\Delta_{Ck1}^4+\Delta_{Ck2}^4 \right) \nonumber
\end{align}
where $C$ is a generic constant. Again, $K^{-2} \sum_{k=1}^K \tau_k^4\rightarrow 0$ as $K\rightarrow \infty$ from Condition S1. As a consequence, under conditions S1-S2, (\ref{eq: lyapunov}) goes to zero as $K\rightarrow \infty$. This concludes the proof of the Lyapunov's condition. 

By the central limit theorem \citep{breiman1992probability}, we have 
\begin{equation}
   \frac{ \sqrt{K} T(\lambda_0) }{\sqrt{ K^{-1}\sum_{k=1}^K {\rm var}\big\{Y_k(\lambda_0)\mid\mathcal{F}, \widetilde{\mathbf{Z}}_{\vee}, \widetilde{\mathbf{Z}}_{\wedge} \big\}}} ~\text{converges in distribution to}~ N(0,1),
\end{equation}
as $K \rightarrow \infty$.
\end{proof}

\subsection*{B.3: $S_Q^2(\lambda_0)$ is a conservative estimator for $\text{var}\{\sqrt{K}T(\lambda_0)\}$ in finite sample}
We state and prove a simple but useful lemma that says $S_Q^2(\lambda_0)$ is always a conservative estimator for $\text{var}\{\sqrt{K}T(\lambda_0)\}$ in finite sample expectation.

\begin{lemma} \label{lemma: variance expectation}
If $Q$ is invariant with respect to randomization, 
\begin{equation*}
    \mathbb{E}\big\{S^2_Q(\lambda_0)\mid\mathcal{F}, \widetilde{\mathbf{Z}}_{\vee}, \widetilde{\mathbf{Z}}_{\wedge}\big\} - {\rm var} \big\{\sqrt{K}T(\lambda_0)\mid\mathcal{F}, \widetilde{\mathbf{Z}}_{\vee}, \widetilde{\mathbf{Z}}_{\wedge}\big\} = \frac{1}{K} {\bm \mu}^T (I-H_Q) {\bm \mu}\geq 0,
\end{equation*}
where ${\bm \mu}$ is a length-K vector with each entry being $\mathbb{E}\big\{Y_k(\lambda_0)\mid\mathcal{F}, \widetilde{\mathbf{Z}}_{\vee}, \widetilde{\mathbf{Z}}_{\wedge}\big\}/\sqrt{1-h_{Qk}}$.
\end{lemma}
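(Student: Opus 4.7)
The plan is to expand $S^2_Q(\lambda_0)$ as a quadratic form, take the conditional expectation termwise using the independence of $\{Y_k(\lambda_0)\}_{k=1}^K$ across pairs, and then recognize that the diagonal contributions reproduce $\operatorname{var}\{\sqrt{K}T(\lambda_0)\}$ exactly while the remaining off-diagonal contributions collapse to the advertised nonnegative quadratic form.

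\textbf{Step 1: Rewrite $S^2_Q$ as a double sum.} Since $Y_Q$ has $k$th entry $Y_k(\lambda_0)/\sqrt{1-h_{Qk}}$, I would write
\[
S^2_Q(\lambda_0)=\frac{1}{K}\sum_{k,k'=1}^{K}(I-H_Q)_{kk'}\,\frac{Y_k(\lambda_0)\,Y_{k'}(\lambda_0)}{\sqrt{(1-h_{Qk})(1-h_{Qk'})}}.
\]
Because $Q$ is invariant under randomization (hence so is $H_Q$), the randomness on the right is entirely carried by the products $Y_k(\lambda_0)Y_{k'}(\lambda_0)$.

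\textbf{Step 2: Compute $\mathbb{E}\{Y_k Y_{k'}\mid\mathcal{F},\widetilde{\mathbf{Z}}_\vee,\widetilde{\mathbf{Z}}_\wedge\}$.} Conditional on $\mathcal{F},\widetilde{\mathbf{Z}}_\vee,\widetilde{\mathbf{Z}}_\wedge$, the pair-level assignments $\{Z_{k1},Z_{k2}\}_{k=1}^{K}$ are independent across $k$, so the $Y_k(\lambda_0)$ are conditionally independent across $k$. Writing $\mu_k=\mathbb{E}\{Y_k(\lambda_0)\mid\mathcal{F},\widetilde{\mathbf{Z}}_\vee,\widetilde{\mathbf{Z}}_\wedge\}$ and $\sigma_k^2=\operatorname{var}\{Y_k(\lambda_0)\mid\mathcal{F},\widetilde{\mathbf{Z}}_\vee,\widetilde{\mathbf{Z}}_\wedge\}$, this gives $\mathbb{E}\{Y_k Y_{k'}\}=\sigma_k^2\,\mathbf{1}\{k=k'\}+\mu_k\mu_{k'}$.

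\textbf{Step 3: Split diagonal from off-diagonal.} Plugging this into the double sum and using the key identity $(I-H_Q)_{kk}=1-h_{Qk}$, the diagonal contribution from $\sigma_k^2$ simplifies as
\[
\frac{1}{K}\sum_{k=1}^{K}(I-H_Q)_{kk}\,\frac{\sigma_k^2}{1-h_{Qk}}=\frac{1}{K}\sum_{k=1}^{K}\sigma_k^2 \;=\;\operatorname{var}\bigl\{\sqrt{K}\,T(\lambda_0)\,\big|\,\mathcal{F},\widetilde{\mathbf{Z}}_\vee,\widetilde{\mathbf{Z}}_\wedge\bigr\},
\]
by Proposition~\ref{prop: CLT effect ratio}. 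The remaining piece, involving $\mu_k\mu_{k'}$, is precisely
\[
\frac{1}{K}\sum_{k,k'=1}^{K}(I-H_Q)_{kk'}\,\frac{\mu_k\mu_{k'}}{\sqrt{(1-h_{Qk})(1-h_{Qk'})}}=\frac{1}{K}\,\bm{\mu}^{T}(I-H_Q)\bm{\mu},
\]
with $\bm{\mu}$ as defined in the statement of the lemma.

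\textbf{Step 4: Nonnegativity.} Subtracting yields the stated identity. Nonnegativity is immediate because $H_Q=Q(Q^{T}Q)^{-1}Q^{T}$ is an orthogonal projector, so $I-H_Q$ is symmetric idempotent and hence positive semidefinite, giving $\bm{\mu}^{T}(I-H_Q)\bm{\mu}\ge 0$. The only conceptual subtlety is justifying the cross-pair independence of the $Y_k(\lambda_0)$, which follows directly from the randomization-based setup in Section~\ref{subsec: embed observational data into exp} where encouragement is assigned independently within each matched pair; no other nontrivial obstacle arises.
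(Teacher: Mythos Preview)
Your proof is correct and takes essentially the same approach as the paper's: the paper packages the computation using trace identities (writing $\mathbb{E}\{KS^2_Q\}=\operatorname{tr}[(\Lambda+\bm{\mu}\bm{\mu}^T)(I-H_Q)]$ with $\Lambda$ diagonal), while you unroll the quadratic form as a double sum and separate diagonal from off-diagonal terms directly, but the underlying mechanism---cancellation of $(I-H_Q)_{kk}=1-h_{Qk}$ against the $1/(1-h_{Qk})$ scaling and the vanishing of cross-pair covariances---is identical. One small notational caution: you use $\mu_k$ for $\mathbb{E}\{Y_k(\lambda_0)\mid\cdot\}$ and then write $\bm{\mu}$ for the rescaled vector with entries $\mu_k/\sqrt{1-h_{Qk}}$; this is consistent with the lemma's statement but worth flagging explicitly to avoid confusion.
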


\begin{proof}
Recall that 
\[
S_Q^2(\lambda_0)= K^{-1} Y_Q^T (I-H_Q)Y_Q.  
\]
Define $\Lambda= {\rm var}\{Y_Q\mid\mathcal{F}, \widetilde{\mathbf{Z}}_{\vee}, \widetilde{\mathbf{Z}}_{\wedge} \}$, which is a $K\times K$ diagonal matrix with the $k^{th}$ diagonal element being ${\rm var}\big\{Y_k(\lambda_0)\mid\mathcal{F}, \widetilde{\mathbf{Z}}_{\vee}, \widetilde{\mathbf{Z}}_{\wedge} \big\}/(1-h_{Qk})$ and let $\bm\mu = \mathbb{E}\{Y_Q \mid\mathcal{F}, \widetilde{\mathbf{Z}}_{\vee}, \widetilde{\mathbf{Z}}_{\wedge}\}$. Then, 
\begin{align}
   &\quad \mathbb{E} \{K S_Q^2(\lambda_0)\mid\mathcal{F}, \widetilde{\mathbf{Z}}_{\vee}, \widetilde{\mathbf{Z}}_{\wedge} \}\nonumber \\
   &= tr[\mathbb{E}\{Y_Q^T (I-H_Q)Y_Q \mid\mathcal{F}, \widetilde{\mathbf{Z}}_{\vee}, \widetilde{\mathbf{Z}}_{\wedge}\}]\nonumber\\
   & = \mathbb{E}[tr\{Y_Q^T (I-H_Q)Y_Q \}\mid\mathcal{F}, \widetilde{\mathbf{Z}}_{\vee}, \widetilde{\mathbf{Z}}_{\wedge}]\nonumber \\
   &= \mathbb{E}[tr\{Y_Q Y_Q^T(I-H_Q)\} \mid\mathcal{F}, \widetilde{\mathbf{Z}}_{\vee}, \widetilde{\mathbf{Z}}_{\wedge}] \nonumber\\
   &=  tr[\mathbb{E}\{Y_Q Y_Q^T \mid\mathcal{F}, \widetilde{\mathbf{Z}}_{\vee}, \widetilde{\mathbf{Z}}_{\wedge}\} (I-H_Q)] \nonumber\\
   &= tr[\{\Lambda + \bm\mu \bm\mu^T\} (I-H_Q)] \nonumber\\
&= tr[\Lambda  (I-H_Q)]+  \bm\mu^T (I- H_Q) \bm\mu\nonumber\\
&= \sum_{k=1}^K {\rm var}\big\{Y_k(\lambda_0)\mid\mathcal{F}, \widetilde{\mathbf{Z}}_{\vee}, \widetilde{\mathbf{Z}}_{\wedge} \big\} +  \bm\mu^T (I- H_Q) \bm\mu, \nonumber
\end{align}
where $tr$ denotes the trace of a matrix, and the derivations mostly use the properties of trace and expectation and the condition that $I-H_Q$ is invariant with respect to randomization. Lemma \ref{lemma: variance expectation} now directly follows from $I-H_Q$ being a projection matrix and thus is positive semidefinite.

\end{proof}

\subsection*{B.4: Proof of Proposition 2}
\begin{proof}
Recall that, for every $k, j$, we defined
\begin{align}
    \Delta_{Tkj}= \sum_{i=1}^{n_{kj}} (r_{Tkji}- \lambda_0 d_{Tkji}), \qquad \Delta_{Ckj}= \sum_{i=1}^{n_{kj}} (r_{Ckji}- \lambda_0 d_{Ckji})  \nonumber.
\end{align}
Also by definition,
\begin{align}
    S^2_Q(\lambda_0)&=K^{-1} Y_Q^T(I-H_Q)Y_Q = K^{-1} Y_Q^TY_Q - K^{-1} Y_Q^T H_Q Y_Q.\label{eqn: supp S^2(Q)}
\end{align}
Consider the first term in (\ref{eqn: supp S^2(Q)}), its expectation equals  \begin{align}
    &\mathbb{E}\{K^{-1} Y_Q^TY_Q \mid \mathcal{F}, \widetilde{\mathbf{Z}}_{\vee}, \widetilde{\mathbf{Z}}_{\wedge}\} = K^{-1} \sum_{k=1}^K \frac{\mathbb{E} \{Y^2_k(\lambda_0)\mid\mathcal{F}, \widetilde{\mathbf{Z}}_{\vee}, \widetilde{\mathbf{Z}}_{\wedge} \}}{ 1-h_{Qk}} \nonumber\\
    &= K^{-1} \sum_{k=1}^K \frac{(\mathbb{E} \{Y_k(\lambda_0)\mid\mathcal{F}, \widetilde{\mathbf{Z}}_{\vee}, \widetilde{\mathbf{Z}}_{\wedge}\})^2+ {\rm var} \{Y_k(\lambda_0)\mid\mathcal{F}, \widetilde{\mathbf{Z}}_{\vee}, \widetilde{\mathbf{Z}}_{\wedge}\} }{ 1-h_{Qk}}  \nonumber  \\
    &= \frac{1}{4K} \sum_{k=1}^K \frac{\{\sum_{j=1}^2 (\Delta_{Tkj}- \Delta_{Ckj}) \}^2+ \{ \Delta_{Tk1} + \Delta_{Ck1} - \Delta_{Tk2}- \Delta_{Ck2}\}^2 }{ 1-h_{Qk}}.  \nonumber
\end{align}
Its variance equals
\begin{align}
    &{\rm var} \{ K^{-1} Y_Q^TY_Q|\mathcal{F}, \widetilde{\mathbf{Z}}_{\vee}, \widetilde{\mathbf{Z}}_{\wedge} \}= K^{-2}\sum_{k=1}^K \frac{{\rm var} \{Y_k^2(\lambda_0)\mid\mathcal{F}, \widetilde{\mathbf{Z}}_{\vee}, \widetilde{\mathbf{Z}}_{\wedge}\}}{ 1-h_{Qk}} \nonumber \\
    &\leq K^{-2} \sum_{k=1}^K \frac{\mathbb{E} \{Y_k^4(\lambda_0)\mid\mathcal{F}, \widetilde{\mathbf{Z}}_{\vee}, \widetilde{\mathbf{Z}}_{\wedge}\}}{ 1-h_{Qk}} =o(1), \nonumber
\end{align}
which is from (\ref{eq: 4moment}) and Condition S3 that $\max_{k=1,\dots, K} h_{Qk}=o(1)$. 

Therefore, by the Markov inequality, 
\begin{align}
K^{-1} Y_Q^TY_Q&= (4K)^{-1}\sum_{k=1}^K \big [ \big\{\sum_{j=1}^2 (\Delta_{Tkj}- \Delta_{Ckj}) \big\}^2+ (\Delta_{Tk1} + \Delta_{Ck1} - \Delta_{Tk2}- \Delta_{Ck2})^2 \big ]+o_p(1) \nonumber\\
&=  K^{-1}\bm\tau^T \bm\tau + {\rm var} \{ \sqrt{K} T(\lambda_0)\mid\mathcal{F}, \widetilde{\mathbf{Z}}_{\vee}, \widetilde{\mathbf{Z}}_{\wedge}\} +o_p(1). \nonumber
\end{align}
Next, consider the second term in (\ref{eqn: supp S^2(Q)}),
\begin{align}
    &K^{-1} Y_Q^T H_Q Y_Q= K^{-1} Y_Q^T Q (Q^T Q)^{-1} Q^T Y_Q = K^{-1} Y_Q^T Q (K^{-1}Q^T Q)^{-1}  K^{-1}Q^T Y_Q.  \nonumber
\end{align}
The $v$th element in $K^{-1}  Q^T Y_Q$ equals \begin{align}
    K^{-1} \sum_{k=1}^K Y_k(\lambda_0) q_{kv}/\sqrt{1-h_{Qk}}=K^{-1} \sum_{k=1}^K Y_k(\lambda_0) q_{kv}+ o(1).\nonumber
\end{align}
Notice that $\mathbb{E}\{K^{-1} \sum_{k=1}^K Y_k(\lambda_0) q_{kv}\mid  \mathcal{F}, \widetilde{\mathbf{Z}}_{\vee}, \widetilde{\mathbf{Z}}_{\wedge}\}= K^{-1} \sum_{k=1}^K \tau_k q_{kv} $, and
\begin{align}
    &\quad {\rm var }\left\{K^{-1} \sum_{k=1}^K Y_k(\lambda_0) q_{kv}\mid  \mathcal{F}, \widetilde{\mathbf{Z}}_{\vee}, \widetilde{\mathbf{Z}}_{\wedge}\right\}\nonumber\\
    &= K^{-2} \sum_{k=1}^K {\rm var} \{Y_k(\lambda_0)\mid  \mathcal{F}, \widetilde{\mathbf{Z}}_{\vee}, \widetilde{\mathbf{Z}}_{\wedge}\} q_{kv}^2\nonumber\\
    &\leq K^{-2} \sum_{k=1}^K \mathbb{E} \{Y_k^2(\lambda_0)\mid  \mathcal{F}, \widetilde{\mathbf{Z}}_{\vee}, \widetilde{\mathbf{Z}}_{\wedge}\} q_{kv}^2\nonumber\\
    &= K^{-2} \sum_{k=1}^K \mathbb{E} \left\{Z_{k1} (\Delta_{Tk1}- \Delta_{Ck2})^2+ (1-Z_{k1})(\Delta_{Ck1}- \Delta_{Tk2})^2  \mid  \mathcal{F}, \widetilde{\mathbf{Z}}_{\vee}, \widetilde{\mathbf{Z}}_{\wedge}\right\} q_{kv}^2\nonumber\\
    &= 2^{-1}K^{-2} \sum_{k=1}^K  \left\{(\Delta_{Tk1}- \Delta_{Ck2})^2+ (\Delta_{Ck1}- \Delta_{Tk2})^2 \right\} q_{kv}^2\nonumber\\
    &\leq 2^{-1}K^{-2} \left[\sum_{k=1}^K \left\{(\Delta_{Tk1}- \Delta_{Ck2})^2+ (\Delta_{Ck1}- \Delta_{Tk2})^2\right\}^2\right]^{1/2} \left[\sum_{k=1}^K q_{kv}^4\right]^{1/2}\nonumber\\
    &\leq CK^{-2} \left[\sum_{k=1}^K (\Delta_{Tk1}^4+ \Delta_{Ck2}^2+ \Delta_{Ck1}^4 + \Delta_{Tk2}^4)\right]^{1/2} \left[\sum_{k=1}^K q_{kv}^4\right]^{1/2}\nonumber\\
    &= CK^{-1} \left[K^{-1}\sum_{k=1}^K (\Delta_{Tk1}^4+ \Delta_{Ck2}^2+ \Delta_{Ck1}^4 + \Delta_{Tk2}^4)\right]^{1/2} \left[K^{-1}\sum_{k=1}^K q_{kv}^4\right]^{1/2}\nonumber\\
    &=o(1),\nonumber
\end{align}
where the fourth line uses the definition of $Y_k(\lambda_0)$, the fifth line uses $\mathbb{E}\{Z_{k1}\mid  \mathcal{F}, \widetilde{\mathbf{Z}}_{\vee}, \widetilde{\mathbf{Z}}_{\wedge}\} = 1/2 $, the sixth line uses the Cauchy-Schwartz inequality, the last line is from the bounded fourth moments in Conditions S1 and S3. Combined, we have proved that the $v$th element in $K^{-1}  Q^TY_Q$ converges in probability to the limit of $K^{-1} \sum_{k=1}^K \tau_k q_{kv}$ by the Markov inequality.  As a consequence, we have that $K^{-1} Q^T Y_Q= K^{-1} Q^T \bm{\tau}+o_p(1)$.

Finally, we have $K^{-1} Y_Q^T H_Q Y_Q= \lim_{K\rightarrow \infty} K^{-1}\bm{\tau}^T H_Q \bm{\tau} +o_p(1)$, and thus $S^2_Q(\lambda_0)- {\rm var}\{ \sqrt{K} T(\lambda_0) \mid  \mathcal{F}, \widetilde{\mathbf{Z}}_{\vee}, \widetilde{\mathbf{Z}}_{\wedge} \} =\lim_{K\rightarrow \infty}K^{-1} \bm\tau^T(I-H_Q)\bm\tau+o_p(1)$, concluding the first part of the proof.

Next, we prove the second part of the Proposition 2. We only show $S^2_{\bm{e}}(\lambda_0) - S^2_{Q_1}(\lambda_0)$ converges in probability to ${\bm \beta}_B^T \Sigma^{-1}_B{\bm \beta}_B \geq 0$ and the other statement is analogous. Let $Y_0= [ Y_1(\lambda_0), \dots, Y_K(\lambda_0)]^T$,  
\begin{align}
    &S^2_{\bm e} (\lambda_0)- S^2_{Q_1}(\lambda_0)= K^{-1} Y_0^T (I- H_{\bm e})Y_0- K^{-1} Y_0^T (I- H_{Q_1})Y_0+o(1)\nonumber\\
    &=K^{-1} Y_0^T(H_{Q_1}-H_{\bm e} )Y_0 +o(1)= K^{-1} Y_0^T(H_{\bm B}+ H_{\bm e}-H_{\bm e} )Y_0 +o(1) \nonumber\\
    &= K^{-1} Y_0^TH_{\bm B} Y_0 +o(1) = \bm\beta_B^T \Sigma_B^{-1} \bm\beta_B+o_p(1)\nonumber
\end{align}
where the second line is because $H_{Q_1}= H_{\bm B}+ H_{\bm e}$ from ${\bm B}$ being orthogonal to $\bm e$, and the last equality is directly from  $K^{-1} Y_0^T H_{\bm B} Y_0 = \lim_{K\rightarrow \infty} K^{-1}\bm{\tau}^T H_{\bm B} \bm{\tau} +o_p(1) = \bm\beta_B^T \Sigma_B^{-1} \bm\beta_B+o_p(1)$ implied by the proof of the first of the first part of Proposition 2. 
\end{proof}

\subsection*{B.5: Proof of Lemma 1}
\begin{proof}
Recall that $V_k(\text{CO}_{kj}, \lambda_0) = \sum_{j = 1}^2 (2Z_{kj}-1)\cdot \text{CO}_{kj}^{-1}\cdot\left(\sum_{i = 1}^{n_{kj}} R_{kji}\right) - \lambda_0$. We compute 
\begin{equation*}
    \begin{split}
        &\mathbb{E}\left\{\frac{1}{K} \sum_{k = 1}^K V_k(\text{CO}_{kj}, \lambda_0) \mid  \mathcal{F}, \widetilde{\mathbf{Z}}_{\vee}, \widetilde{\mathbf{Z}}_{\wedge}\right\}\\
        =& \frac{1}{K}\sum_{k = 1}^K \sum_{j = 1}^2 \mathbb{E}\left\{(2Z_{kj} - 1)\cdot \text{CO}^{-1}_{kj}\cdot \left(\sum_{i = 1}^{n_{kj}} R_{kji}\right)  \mid  \mathcal{F}, \widetilde{\mathbf{Z}}_{\vee}, \widetilde{\mathbf{Z}}_{\wedge} \right\} - \lambda_0\\
        =&\frac{1}{K}\sum_{k = 1}^K\sum_{j = 1}^2 \left\{\frac{1}{2}\cdot \text{CO}^{-1}_{kj}\cdot \left(\sum_{i = 1}^{n_{kj}} r_{Tkji}\right) - \frac{1}{2}\cdot \text{CO}^{-1}_{kj}\cdot \left(\sum_{i = 1}^{n_{kj}} r_{Ckji}\right) \right\} - \lambda_0\\
        =&\frac{1}{2K}\sum_{k = 1}^K \sum_{j = 1}^2\left\{ \text{CO}^{-1}_{kj}\cdot \left(\sum_{i = 1}^{n_{kj}} r_{Tkji} - \sum_{i = 1}^{n_{kj}} r_{Ckji}\right) \right\} - \lambda_0.
    \end{split}
\end{equation*}
Under the null hypothesis
\begin{equation}
    H_{0, \mathbf{CO}}^W: \frac{1}{2K}\sum_{k = 1}^K\sum_{j = 1}^2 \frac{\sum_{i = 1}^{n_{kj}} r_{Tkji} - \sum_{i = 1}^{n_{kj}} r_{Ckji}}{\text{CO}_{kj}} = \lambda_{\mathbf{CO}}^W = \lambda_0,
\end{equation}
we have 
\begin{equation*}
    \mathbb{E}\left\{\frac{1}{K} \sum_{k = 1}^K V_k(\text{CO}_{kj}, \lambda_0) \mid  \mathcal{F}, \widetilde{\mathbf{Z}}_{\vee}, \widetilde{\mathbf{Z}}_{\wedge}\right\} = 0.
\end{equation*}

Similar to Proposition 1, we can calculate:
\begin{equation*}
    \begin{split}
 &{\rm var} \big\{V_k(\text{CO}_{kj}, \lambda_0)\mid\mathcal{F}, \widetilde{\mathbf{Z}}_{\vee}, \widetilde{\mathbf{Z}}_{\wedge}\big\}\\
 = &4^{-1} \left\{\text{CO}^{-1}_{k1}\cdot \sum_{i=1}^{n_{k1}} (r_{Tk1i} + r_{Ck1i}) - \text{CO}^{-1}_{k2}\cdot \sum_{i=1}^{n_{k2}} (r_{Tk2i} + r_{Ck2i})\right\}^2.
    \end{split}
\end{equation*}

By the analogous triangular version of the Lyapunov's condition as in the proof of Proposition 1, we have
\begin{equation}
   \frac{ \sqrt{K}\cdot \overline{V}(\text{CO}_{kj}, \lambda_0) }{\sqrt{ K^{-1}\sum_{k=1}^K {\rm var}\big\{V_k(\text{CO}_{kj}, \lambda_0)\mid\mathcal{F}, \widetilde{\mathbf{Z}}_{\vee}, \widetilde{\mathbf{Z}}_{\wedge} \big\}}} 
   ~\text{converges in distribution to}~N(0,1),
\end{equation}
as $K \rightarrow \infty$ and under Condition S1 and S2 with $\Delta_{Tkj} = \text{CO}_{kj}^{-1}\cdot \sum_{i = 1}^{n_{kj}} r_{Tkji}$ and $\Delta_{Ckj} = \text{CO}_{kj}^{-1}\cdot \sum_{i = 1}^{n_{kj}} r_{Ckji}$

Let $Q$ an arbitrary $K \times p$ such that $p < K$, $H_Q$ the hat matrix of $Q$ with kth diagonal element $h_{Qk}$, and $V_Q$ a column vector with entry $V_k/\sqrt{1 - h_{Qk}}$. By the same argument as in the proof of Lemma \ref{lemma: variance expectation} and Proposition 2, we know that $K^{-1} V_Q^T (I-H_Q) V_Q$ is a conservative variance estimator for $\sqrt{K}\cdot \overline{V}(\text{CO}_{kj}, \lambda_0)$ both in the finite sample and asymptotically. Therefore, if we define
\begin{equation}
        \label{eqn: test statistic for H^W_0,CO}
\delta(\lambda_0; \mathbf{CO}) = \left|\frac{\sqrt{K}\cdot\overline{V}(\text{CO}_{kj}, \lambda_0)}{ \sqrt{K^{-1} V_Q^T (I-H_Q) V_Q}}\right|,
\end{equation}
the null hypothesis $H_{0, \text{ACER}}^{\mathbf{CO}}: \lambda_{\text{ACER}}^{\mathbf{CO}} = \lambda_0$ is rejected at level $\alpha$ if $\delta(\lambda_0; \mathbf{CO}) \geq z_{1 - \alpha/2}$, the $1-\alpha/2$ quantile of the standard normal distribution. This concludes the proof of Lemma 1.
\end{proof}

\subsection*{B.6: Proof of Lemma 2}

\begin{proof}
Consider testing the null hypothesis $H_{0, S_{\mathbf{CO}}}: S_{\mathbf{CO}} = \lambda_0$ using the test statistic $K^{-1} \sum_{k = 1}^K D_k - \lambda_0$, where $D_k = \sum_{j = 1}^2 (2Z_{kj} - 1)\cdot n_{kj}^{-1}\cdot \left(\sum_{i = 1}^{n_{kj}} D_{kji}\right)$. We have:
\begin{equation*}
    \begin{split}
        &\mathbb{E}\left\{\frac{1}{K} \sum_{k = 1}^K D_k(n_{k1}, n_{k2}) - \lambda_0 \mid  \mathcal{F}, \widetilde{\mathbf{Z}}_{\vee}, \widetilde{\mathbf{Z}}_{\wedge}\right\}\\
        =& \frac{1}{K}\sum_{k = 1}^K \sum_{j = 1}^2 \mathbb{E}\left\{(2Z_{kj} - 1)\cdot n^{-1}_{kj}\cdot \left(\sum_{i = 1}^{n_{kj}} D_{kji}\right)  \mid  \mathcal{F}, \widetilde{\mathbf{Z}}_{\vee}, \widetilde{\mathbf{Z}}_{\wedge} \right\} - \lambda_0\\
        =&\frac{1}{K}\sum_{k = 1}^K\sum_{j = 1}^2 \left\{\frac{1}{2}\cdot n^{-1}_{kj}\cdot \left(\sum_{i = 1}^{n_{kj}} d_{Tkji}\right) - \frac{1}{2}\cdot n^{-1}_{kj}\cdot \left(\sum_{i = 1}^{n_{kj}} d_{Ckji}\right) \right\} - \lambda_0\\
        =&\frac{1}{2K}\sum_{k = 1}^K \sum_{j = 1}^2\left\{ n^{-1}_{kj}\cdot \left(\sum_{i = 1}^{n_{kj}} d_{Tkji} - \sum_{i = 1}^{n_{kj}} d_{Ckji}\right) \right\} - \lambda_0.
    \end{split}
\end{equation*}
Recall that 
\begin{equation*}
    S_{\mathbf{CO}} = \frac{1}{2K} \sum_{k = 1}^K \sum_{j = 1}^2 \text{CO}_{kj}/n_{kj} = \frac{1}{2K}\sum_{k = 1}^K \sum_{j = 1}^2 \frac{\sum_{i = 1}^{n_{kj}}d_{Tkji} - \sum_{i = 1}^{n_{kj}}d_{Ckji}}{n_{kj}} = \lambda_0
\end{equation*}
under the null hypothesis $H_{0, S_{\mathbf{CO}}}$, Therefore, we have 
\begin{equation*}
    \mathbb{E}\left\{\frac{1}{K} \sum_{k = 1}^K D_k(n_{k1}, n_{k2}) - \lambda_0 \mid  \mathcal{F}, \widetilde{\mathbf{Z}}_{\vee}, \widetilde{\mathbf{Z}}_{\wedge}\right\} = 0
\end{equation*}
under $H_{0, S_{\mathbf{CO}}}: S_{\mathbf{CO}} = \lambda_0$. By the same arguments as in the proof of Proposition 2 and Lemma 1, we have that $K^{-1}D_Q^T (I-H_Q) D_Q$, where $Q$ an arbitrary $K \times p$ such that $p < K$, $H_Q$ the hat matrix of $Q$ with kth diagonal element $h_{Qk}$, and $D_Q$ a column vector with entry $D_k/\sqrt{1 - h_{Qk}}$, is a conservative estimator for $\sqrt{K}\cdot (\overline{D} - \lambda_0)$. Define
\begin{equation*}
    \delta_{S_{\mathbf{CO}}} = \left|\frac{\sqrt{K}(\overline{D} - \lambda_0)}{\sqrt{K^{-1} D_Q^T (I-H_Q) D_Q}}\right|.
\end{equation*}
A valid level-$\alpha/2$ test of $H_{0, S_{\mathbf{CO}}}: S_{\mathbf{CO}} = \lambda_0$ can be obtained by comparing $\delta_{S_{\mathbf{CO}}}$ to $z_{1 - \alpha/4}$. Upon inverting this test, one can construct the following valid level-$\alpha/2$ confidence interval for $S_{\mathbf{CO}}$:
\begin{equation}
    I_{\alpha/2} = \left[~\overline{D} - \frac{z_{1 - \alpha/4}}{K}\cdot \sqrt{D_Q^T (I-H_Q) D_Q}, ~\overline{D} + \frac{z_{1 - \alpha/4}}{K}\cdot \sqrt{D_Q^T (I-H_Q) D_Q}~\right].
\end{equation}
This concludes the proof of Lemma 2.
\end{proof}

\subsection*{B.7: Proof of Proposition 3}
Let $\delta_{\min}(\lambda_{0})$ be the output of Step 2 in Algorithm 1. To establish the validity of Algorithm 1, it suffices to show that for any $\mathbf{CO}\in \mathcal{CO}$, we have $ \lim_{I \rightarrow \infty} \text{pr}(\delta_{\min}(\lambda_{0})>z_{1-\alpha/4}\mid \mathcal{F}, \widetilde{\mathbf{Z}}_{\vee}, \widetilde{\mathbf{Z}}_{\wedge})\leq \alpha$. 

Note that for any $\mathbf{CO}\in \mathcal{CO}$, as $K \rightarrow \infty$,
\begin{align*}
    &\text{pr}(\delta_{\min}(\lambda_{0})>z_{1-\alpha/4}\mid \mathcal{F}, \widetilde{\mathbf{Z}}_{\vee}, \widetilde{\mathbf{Z}}_{\wedge}) \\
    = &\text{pr}(\delta_{\min}(\lambda_{0})>z_{1-\alpha/4}, S_{\mathbf{CO}}\in I_{\alpha/2} \mid  \mathcal{F}, \widetilde{\mathbf{Z}}_{\vee}, \widetilde{\mathbf{Z}}_{\wedge})
    +\text{pr}(\delta_{\min}(\lambda_{0})>z_{1-\alpha/4}, S_{\mathbf{CO}}\notin I_{\alpha/2} \mid  \mathcal{F}, \widetilde{\mathbf{Z}}_{\vee}, \widetilde{\mathbf{Z}}_{\wedge})\\
    \leq &\underbrace{\text{pr}(\delta_{\min}(\lambda_{0})>z_{1-\alpha/4}, S_{\mathbf{CO}}\in I_{\alpha/2} \mid  \mathcal{F}, \widetilde{\mathbf{Z}}_{\vee}, \widetilde{\mathbf{Z}}_{\wedge})}_{({\rm I})}  +\underbrace{\text{pr}(S_{\mathbf{CO}}\notin I_{\alpha/2} \mid \mathcal{F}, \widetilde{\mathbf{Z}}_{\vee}, \widetilde{\mathbf{Z}}_{\wedge})}_{{\rm (II)}}\\
\end{align*}    
By Lemma 2, we have ${\rm (II)} \leq \alpha/2$. 
    
Observe that    
\begin{align*}
    {\rm (I)}
    &= \text{pr}(\delta_{\min}(\lambda_{0})>z_{1-\alpha/4} \mid S_{\mathbf{CO}}\in I_{\alpha/2}, \mathcal{F}, \widetilde{\mathbf{Z}}_{\vee}, \widetilde{\mathbf{Z}}_{\wedge})
    \times \text{pr}( S_{\mathbf{CO}}\in I_{\alpha/2} \mid \mathcal{F}, \widetilde{\mathbf{Z}}_{\vee}, \widetilde{\mathbf{Z}}_{\wedge})\\
    &\leq \text{pr}(\delta_{\min}(\lambda_{0})>z_{1-\alpha/4} \mid S_{\mathbf{CO}}\in I_{\alpha/2}, \mathcal{F}, \widetilde{\mathbf{Z}}_{\vee}, \widetilde{\mathbf{Z}}_{\wedge})\\
    &\leq \alpha/2,
\end{align*}
by Lemma 1. Therefore, we have 
\begin{align*}
    \text{pr}(\delta_{\min}(\lambda_{0})>z_{1-\alpha/4}\mid \mathcal{F}, \widetilde{\mathbf{Z}}_{\vee}, \widetilde{\mathbf{Z}}_{\wedge}) = {\rm (I)} + {\rm (II)} \leq \alpha/2 + \alpha/2 = \alpha.
\end{align*}

\begin{center}
\section*{
{\large\bf Supplementary Material C: A Family of Nonparametric Test Statistics}
}
\end{center}

Recall the cluster-level sharp null hypothesis $H_{0, \text{sharp}}$ states that:
\begin{equation}\small
\begin{split}
    H_{0, \text{sharp}}: ~~&f_{kj}(r_{Tkj1}, \dots, r_{Tkjn_{kj}}, d_{Tkj1}, \dots, d_{Tkjn_{kj}})\\
    = ~&f_{kj}(r_{Ckj1}, \dots, r_{Ckjn_{kj}}, d_{Ckj1}, \dots, d_{Ckjn_{kj}}), ~\text{for all}~ k, j.
\end{split}
\end{equation}
Below, we derive a rich class of nonparametric test statistics to test the null hypothesis $H_{0, 
\text{sharp}}$. Let \[
f_{Tkj} = f_{kj}(r_{Tkj1}, \dots, r_{Tkjn_{kj}}, d_{Tkj1}, \dots, d_{Tkjn_{kj}})
\]and 
\[f_{Ckj} = f_{kj}(r_{Ckj1}, \dots, r_{Ckjn_{kj}}, d_{Ckj1}, \dots, d_{Ckjn_{kj}}).
\]
Define $\widetilde{f}_{kj} = Z_{kj} f_{Tkj} + (1 - Z_{kj})f_{Ckj}$ and $A_{k}=(Z_{k1}-Z_{k2})(\widetilde{f}_{k1} - \widetilde{f}_{k2})$. Let $(K+1)d_{k}$ be the rank of $|\widetilde{Z}_{k1}-\widetilde{Z}_{k2}|$, $(K+1)q_{k}$ the rank of $|A_{k}|$ with average ranks for ties, and $\text{sgn}(y)=1$ if $y>0$ and $0$ otherwise. Consider the class of test statistics of the form $T_{\text{DR}}=\sum_{k=1}^{K}\text{sgn}(A_{k})\cdot\varphi(d_{k}, q_{k})$,
where $\varphi: [0,1]\times [0,1] \mapsto [0, +\infty)$. The following asymptotic normality result holds under some mild conditions. 

\begin{proposition}\rm
\label{prop: double rank test}
Suppose that encouragement dose assignments are independent across matched pairs, IV assumptions A1 - A4 and $H_{0, \text{sharp}}$ hold, and \[
\lim_{K \rightarrow \infty} \frac{\max_{k=1,\dots,K}\{s_{k}\varphi(d_{k}, q_{k})\}^{2}}{\sum_{k=1}^{K}\{s_{k}\varphi(d_{k}, q_{k})\}^{2}}=0,
\] then the following asymptotic normality holds:
\begin{equation*}
    \frac{\sum_{k=1}^{K}\text{sgn}(A_{k})\varphi(d_{k}, q_{k})-\sum_{k=1}^{K}s_{k}\varphi(d_{k}, q_{k})/2}{\sqrt{\sum_{k=1}^{K}s_{k}\varphi(d_k, q_{k})^{2}/4}} ~\text{converges in distribution to}~ N(0,1),
\end{equation*}
where $s_{k}=\mathbf{1}(|\widetilde{f}_{k1}-\widetilde{f}_{k2}|>0)=\mathbf{1}(|f_{Ck1}-f_{Ck2}|>0)$ under $H_{0, \text{sharp}}$.
\end{proposition}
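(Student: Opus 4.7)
The plan is to exploit the fact that, under the cluster-level sharp null $H_{0,\text{sharp}}$, the aggregate functional $\widetilde{f}_{kj}=Z_{kj}f_{Tkj}+(1-Z_{kj})f_{Ckj}$ collapses to the non-random quantity $f_{Ckj}=f_{Tkj}$, so $|\widetilde{f}_{k1}-\widetilde{f}_{k2}|=|f_{Ck1}-f_{Ck2}|$ is fixed in the randomization-inference sense (i.e., measurable with respect to $\mathcal{F}$). Combined with the matched design that conditions on $\widetilde{\mathbf{Z}}_\vee,\widetilde{\mathbf{Z}}_\wedge$, both $d_k$ (ranks of $|\widetilde{Z}_{k1}-\widetilde{Z}_{k2}|$) and $q_k$ (ranks of $|A_k|$) become non-random constants, and so do the indicators $s_k=\mathbf{1}(|f_{Ck1}-f_{Ck2}|>0)$. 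Hence the only source of randomness in $T_{\text{DR}}$ is the sign pattern $\{\mathrm{sgn}(A_k)\}_{k=1}^K$, and the problem reduces to a central limit theorem for a weighted sum of independent binary random variables.

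Next I would pin down the distribution of $\mathrm{sgn}(A_k)$. Because exactly one cluster in each pair is encouraged, $Z_{k1}-Z_{k2}\in\{-1,+1\}$; by the matched-design randomization $\pi_{k1}=\pi_{k2}=1/2$ (Section 2.2) each value occurs with probability $1/2$ conditional on $\mathcal{F},\widetilde{\mathbf{Z}}_\vee,\widetilde{\mathbf{Z}}_\wedge$. When $s_k=0$ we have $A_k=0$ and $\mathrm{sgn}(A_k)=0$ deterministically; when $s_k=1$, $\mathrm{sgn}(A_k)=1$ iff $Z_{k1}-Z_{k2}$ and $\widetilde{f}_{k1}-\widetilde{f}_{k2}$ share a sign, which occurs with probability $1/2$. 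Using the proposition's hypothesis that encouragement assignments are independent across matched pairs, the $\{\mathrm{sgn}(A_k)\}_{k=1}^K$ form an independent collection of Bernoulli$(s_k/2)$ variables. A direct computation then yields
\begin{equation*}
\mathbb{E}(T_{\text{DR}})=\tfrac{1}{2}\sum_{k=1}^K s_k\varphi(d_k,q_k),\qquad \mathrm{Var}(T_{\text{DR}})=\tfrac{1}{4}\sum_{k=1}^K s_k\varphi(d_k,q_k)^2,
\end{equation*}
where I have used $s_k^2=s_k$ and the independence across $k$.

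Finally I would invoke the Lindeberg--Feller CLT for the triangular array $X_{K,k}=\{\mathrm{sgn}(A_k)-s_k/2\}\varphi(d_k,q_k)$. These summands are uniformly bounded by $\max_k s_k\varphi(d_k,q_k)$ in absolute value, so for any $\varepsilon>0$ the Lindeberg truncation $\mathbb{E}\{X_{K,k}^2\mathbf{1}(|X_{K,k}|>\varepsilon\sqrt{\mathrm{Var}(T_{\text{DR}})})\}$ vanishes once $\max_k\{s_k\varphi(d_k,q_k)\}^2/\sum_k\{s_k\varphi(d_k,q_k)\}^2\to 0$, which is exactly the stated hypothesis. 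This gives convergence in distribution of $(T_{\text{DR}}-\mathbb{E}T_{\text{DR}})/\sqrt{\mathrm{Var}(T_{\text{DR}})}$ to $\mathrm{Normal}(0,1)$, yielding the claimed limit.

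The main conceptual step — and the place I would be most careful — is the first one: verifying that $\widetilde{f}_{kj}$, $d_k$, $q_k$, and $s_k$ are all non-random under the sharp null together with conditioning on $(\mathcal{F},\widetilde{\mathbf{Z}}_\vee,\widetilde{\mathbf{Z}}_\wedge)$. Once that is cleanly established, the remainder is a routine CLT for sums of bounded independent Bernoullis, and no additional moment condition beyond the stated Lindeberg-type ratio is needed. Handling ties in the ranks requires only that $d_k$ and $q_k$ be defined using average ranks as stated, which does not affect any of the above arguments since these are still non-random under the conditioning.
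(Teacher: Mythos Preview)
Your proposal is correct and follows essentially the same route as the paper: both reduce $T_{\text{DR}}$ under $H_{0,\text{sharp}}$ to a weighted sum of independent Bernoulli$(1/2)$ variables (the paper writes this as $\sum_k s_k\varphi(d_k,q_k)B_k$ with i.i.d.\ $B_k$), compute the same mean and variance, and then apply a CLT whose sole requirement is the max-to-sum ratio condition. The only cosmetic difference is that the paper cites the H\'ajek--\v{S}id\'ak CLT (Theorem~1, Section~6.1 of \v{S}id\'ak et al.\ 1999) directly, whereas you verify the Lindeberg condition by hand; for bounded summands these are the same condition, so the arguments coincide.
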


\begin{proof}
Note that \[
\mathbb{E}\{T_{\text{DR}}\mid \mathcal{F}, \widetilde{\mathbf{Z}}_{\vee}, \widetilde{\mathbf{Z}}_{\wedge} \}=\sum_{k=1}^{K}\mathbb{E}\{\text{sgn}(A_{k})\varphi(d_{k}, q_{k})\mid \mathcal{F}, \widetilde{\mathbf{Z}}_{\vee}, \widetilde{\mathbf{Z}}_{\wedge} \}=\sum_{k=1}^{K}s_{k}\varphi(d_{k}, q_{k})/2,
\]
and \[
\text{var}\{T_{\text{DR}}\mid \mathcal{F}, \widetilde{\mathbf{Z}}_{\vee}, \widetilde{\mathbf{Z}}_{\wedge} \}=\sum_{k=1}^{K}\text{var}\{\text{sgn}(A_{k})\varphi(d_{k}, q_{k})\mid \mathcal{F}, \widetilde{\mathbf{Z}}_{\vee}, \widetilde{\mathbf{Z}}_{\wedge} \}=\sum_{k=1}^{K}s_{k}\varphi^{2}(d_{k}, q_{k})/4.
\]
Let $B_{k}$, $k=1,2,\dots,$ be independent and identically distributed random variables with $\text{pr}(B_{k}=1)=\text{pr}(B_{k}=0)=1/2$. Note that $T_{\text{DR}}=\sum_{k=1}^{K}\text{sgn}(A_{k})\cdot\varphi(d_{k}, q_{k})$ and $\sum_{k=1}^{K}s_{k}\varphi(d_{k}, q_{k})B_{k}$ are identically distributed for all $K$. Then the desired result is an immediate consequence of Theorem 1 in the Section 6.1 of \citet{vsidak1999theory} by taking their $i$ to be $k$, their $a_{i}$ to be $s_{k}\varphi(d_{k}, q_{k})$, and their $Y_{i}$ to be $B_{k}$.
\end{proof}

Proposition~\ref{prop: double rank test} implies the following asymptotic one-sided p-value of $T_{\text{DR}}$ for testing $H_{0, \text{sharp}}$: 
\begin{equation*}
    \text{pr}(T_{\text{DR}}\geq t \mid \mathcal{F}, \widetilde{\mathbf{Z}}_{\vee}, \widetilde{\mathbf{Z}}_{\wedge})\sim 1-\Phi^{-1}\left\{\frac{t-\sum_{k=1}^{K}s_{k}\varphi(d_{k}, q_{k})}{\sqrt{\sum_{k=1}^{K}s_{k}\varphi(q_{k})^{2}/4}}\right\},~\text{as $K \rightarrow \infty$.}
\end{equation*}
The proposed family of test statistics $T_{\text{DR}}$ contains many familiar test statistics as special cases. For instance, when $\varphi(d_{k}, q_{k})=1$ for all $d_{k}$ and $q_{k}$, $T_{\text{DR}}$ reduces to the sign test. When $\varphi(d_{k}, q_{k})=q_{k}$, $T_{\text{DR}}$ reduces to the Wilcoxon signed rank test. To incorporate information contained in dose magnitude, we may consider $\varphi(d_{k}, q_{k})$ such that $\partial\varphi/\partial d_{k} \neq 0$. When the IV $\widetilde{Z}$ is ordinal, $\varphi(d_{k}, q_{k})=d_{k}q_{k}$, and $\widetilde{Z}_{k1}\wedge \widetilde{Z}_{k2}=0$ for all $k$, $T_{\text{DR}}$ reduces to the dose-weighted signed rank test considered in \citet{rosenbaum1997signed}. If $\text{cov}(d_{k}, q_{k})>0$, we tend to assign larger weights to matched pairs of clusters with larger treated-minus-control pair difference $|A_{k}|$. When $\text{cov}(d_{k}, q_{k}) = 1$, $T_{\text{DR}}$ reduces to the polynomial (with degree 2) rank test considered in \citet{rosenbaum2007confidence}. On the other hand, if $\text{cov}(d_{k}, q_{k})<0$, we are instead emphasizing more on the matched pairs with smaller $|A_{k}|$. Whether assigning more weights to larger (or smaller) $|Y_{k}|$ would make the test more powerful depends on the feature of the underlying data generating process. Let $f_{T}(x)$ denote the density function generating $f_{Tkj}$ and $f_{C}(x)$ generating $f_{Ckj}$. If the density ratio $f_{T}(x)/f_{C}(x)$ deviates from $1$ as $x$ increases, we typically prefer assigning more weights to larger $|A_{k}|$; see Section 4.3 in \citet{heng2019increasing} for more details.

\begin{center}
\section*{
{\large\bf Supplementary Material D: Details on Formulating the MIQCP Problem}
}
\end{center}

\subsection*{D.1: Mixed integer quadratically-constrained programming (MIQCP), standard form, and formulating Step 2 in Algorithm 1 as an MIQCP}
A most general mixed integer quadratically-constrained program takes on the following form (\citealp{lee2011mixed, burer2012milp})
\begin{equation}
    \begin{split}
        \label{app: MIQCP problem}
        \text{Objective:} \qquad&\text{minimize}~ \bm x^T \mathbf{Q} \bm x + \bm q^T \bm x\\
        \text{subject to:} \qquad & \bm A \bm x \leq \bm b,~~~\text{(linear constraints)} \\
        & \bm l \leq \bm x \leq \bm u, ~~~\text{(box constraints)} \\
        & \bm x^T \mathbf{Q}_i \bm x + \bm q_i^T \bm x \leq b_i, ~~~\text{(quadratic constraints)} \\
        & \text{Some or all elements of} ~\bm x~ \text{are integers.}~~~\text{(integrality constraints)}
    \end{split}
\end{equation}

Below, we transform Step 2 of Algorithm 1 into a MIQCP problem by deriving the corresponding linear, box, quadratic, and integrality constraints.

\begin{description}
\item[Decision Variables:] Recall that decision variables $\{\text{CO}_{kj}, k = 1, \cdots, K, j = 1, 2\}$ appear in the denominators in the objective function. To deal with this, we introduce another $2K$ decision variables $\{\text{CO}'_{ij}, i = 1, \cdots, K, j = 1,2\}$, which denote the corresponding inverses of $\{\text{CO}_{ij}, i = 1, \cdots, K, j = 1,2\}$, i.e., \[
\text{CO}'_{ij} = 1/\text{CO}_{ij}.
\] Finally, the decision variable $\bm x$ is the following length-$4K$ vector:
\begin{equation}
    \label{eqn: app: decision var}
    \bm x = (\text{CO}_{11}, \cdots, \text{CO}_{K1}, \text{CO}_{12}, \cdots, \text{CO}_{K2}, \text{CO}'_{11}, \cdots, \text{CO}'_{K1}, \text{CO}'_{12}, \cdots, \text{CO}'_{K2})^T.
\end{equation}

\item[Linear Constraints:] We have the following two linear constraints that restrict the norm the of decision variable vector:

\begin{equation}
    \bm a^T_1 \bm x \leq U_{\alpha/2},
    \label{eqn: first linear constraint}
\end{equation}
where 
\[
\bm a_1 = \left(\frac{1}{2K n_{11}},\cdots, \frac{1}{2K n_{K1}}, \frac{1}{2K n_{12}},\cdots, \frac{1}{2K n_{K2}}, \underbrace{ 0,\cdots,0}_{2K}\right)^T,
\]
and
\begin{equation}
    \bm a^T_2 \bm x \leq -L_{\alpha/2},
    \label{eqn: second linear constraint}
\end{equation}
where 
\[
\bm a_2 = \left(-\frac{1}{2K n_{11}},\cdots, -\frac{1}{2K n_{K1}}, -\frac{1}{2K n_{12}},\cdots, -\frac{1}{2K n_{K2}}, \underbrace{ 0,\cdots,0}_{2K}\right)^T.
\]
Observe that constraint \eqref{eqn: first linear constraint} corresponds to:
\begin{equation*}
\begin{split}
    &\bm a_1^T \bm x = \sum_{t = 1}^{4K} a_{1t}x_{t} = \sum_{k = 1}^K \frac{\text{CO}_{k1}}{2K n_{k1}} +  \sum_{k = 1}^K \frac{\text{CO}_{k2}}{2K n_{k2}} \leq U_{\alpha/2}\\
    \Longrightarrow ~&S_{\mathbf{CO}} = \frac{1}{2K}\sum_{k = 1}^K \sum_{j = 1}^2 \text{CO}_{kj}/n_{kj} \leq U_{\alpha/2}. 
\end{split}
\end{equation*}
Similarly, constraint \eqref{eqn: second linear constraint} corresponds to:
\begin{equation*}
\begin{split}
    &\bm a_2^T \bm x = \sum_{t = 1}^{4K} a_{2t}x_{t} = -\sum_{k = 1}^K \frac{\text{CO}_{k1}}{2K n_{k1}} -\sum_{k = 1}^K \frac{\text{CO}_{k2}}{2K n_{k2}} \leq -L_{\alpha/2}\\
    \Longrightarrow ~&S_{\mathbf{CO}} = \frac{1}{2K}\sum_{k = 1}^K \sum_{j = 1}^2 \text{CO}_{kj}/n_{kj} \geq L_{\alpha/2}. 
\end{split}
\end{equation*}
Together, \eqref{eqn: first linear constraint} and \eqref{eqn: second linear constraint} impose the following constraint:
\begin{equation*}
    L_{\alpha/2} \leq S_{\mathbf{CO}} \leq U_{\alpha/2}.
\end{equation*}

\item[Box Constraints:] We have the following box constraints on the first $2K$ decision variables: 
\begin{equation}
    \label{eqn: app: box}
    n_{kj}\cdot \iota_{\text{min}}\leq \text{CO}_{kj} \leq Z_{kj}\sum_{i=1}^{n_{kj}}D_{kjn_{kj}}+(1-Z_{kj})\sum_{i=1}^{n_{kj}}(1-D_{kjn_{kj}}).
\end{equation}
As elaborated in the main article, the lower bound $n_{kj}\cdot \iota_{\text{min}}$ is by Assumption (A3'), and the upper bound is an observed data quantity, which is equal to the number of individuals actually receiving treatment in the encouraged cluster, and the number of individuals not receiving treatment in the control cluster. Since $\text{CO}'_{kj} = \text{CO}_{kj}^{-1}$, we have the following box constraints on the other $2K$ decision variables:
\begin{equation}
    \label{eqn: app: box 2}
    \left\{Z_{kj}\sum_{i=1}^{n_{kj}}D_{kjn_{kj}}+(1-Z_{kj})\sum_{i=1}^{n_{kj}}(1-D_{kjn_{kj}})\right\}^{-1} \leq \text{CO}'_{kj} \leq \left\{n_{kj}\cdot \iota_{\text{min}}\right\}^{-1}.
\end{equation}

\item[Quadratic Constraints:] In order to enforce the relationship $\text{CO}_{kj}\cdot\text{CO}'_{kj} = 1$, for all $k = 1, \cdots, K, j = 1, 2$, we need the following $2K$ quadratic constraints:
\begin{equation}
    \label{eqn: app: quad constraint}
    \bm x^T \mathbf{Q}_i \bm x = 1, ~\forall i = 1, \cdots, 2K,
\end{equation}
where $\mathbf{Q}_i$ is a $4K$-by-$4K$ matrix whose $(i, i + 2K)^{\text{th}}$ and $(i + 2K, i)^{\text{th}}$ entries are $1/2$ and $0$ otherwise. Such constraints are also known as bilinear constraints in the optimization literature.

\item[Integrality Constraints:] We require that the first $2K$ decision variables take on integer values.

\item[Objective Function:] 
Recall that as shown in Section 4.3, $\delta^2(\lambda_0; \mathbf{CO}) = (K^2 \overline{V}^2)/\big\{V^T_Q(I - H_Q)V_Q\big\}$ follows a $\chi_1^2$ distribution. Our goal is to minimize 
\[
\rho(\mathbf{CO}) =  K^2 \overline{V}^2 - \chi^2_{1, \alpha/2}\cdot V^T_Q(I - H_Q)V_Q,
\]
and the null hypothesis is rejected if the minimum is non-negative. Below, we put this objective function into the standard form. For ease of exposition, we let $Q = \bm e _{K \times 1}$ be a column vector and consider the simple case where the sample variance $(K - 1)^{-1}\sum_{k = 1}^K (V_k - \overline{V})^2$ is used as a conservative variance estimator. Leveraging a general variance estimator with the help of a design matrix $Q^{K \times p}$ such that $p < K$ poses no additional difficulty. We use the classical sample variance estimator here for the simplicity of notation. Rearrange the two clusters in each pair, so that $j = 1$ corresponds to the encouraged cluster, and $j = 2$ the control cluster. To simplify notation, we 
let $R_{k, j} = \sum_{i = 1}^{n_{kj}} R_{kji}$ denote the sum of observed responses of $n_{kj}$ individuals in each cluster $kj$. Minimizing $\rho(\mathbf{CO})$ is equivalent to minimizing the following function in $\bm x_2 = (\text{CO}'_{11}, \cdots, \text{CO}'_{K1}, \text{CO}'_{12}, \cdots, \text{CO}'_{K2})$:
\begin{equation}
\begin{split}
    \label{eqn: app: objective}
    f(\bm x_2) &= \left(\frac{1 - \chi^2_{1, \alpha/2}}{K^2}\right)\cdot\sum_k (R_{k, 1}\cdot\text{CO}'_{k1} - R_{k, 2}\cdot\text{CO}'_{k2})^2 \\
    &+ \left(\frac{K - 1 + \chi^2_{1, \alpha/2}}{K^2(K - 1)}\right)\cdot \sum_{i \neq j} (R_{i, 1}\cdot\text{CO}'_{i1} - R_{i, 2}\cdot\text{CO}'_{i2})(R_{j, 1}\cdot\text{CO}'_{j1} - R_{j, 2}\cdot\text{CO}'_{j2}) \\
    &- \frac{2\lambda_0}{K} \sum_k (R_{k, 1}\cdot\text{CO}'_{k1} - R_{k, 2}\cdot\text{CO}'_{k2}) + \lambda_0^2\\
    &=\left(\frac{1 - \chi^2_{1, \alpha/2}}{K^2}\right)\left( \sum_k R^2_{k,1}\text{CO}'^2_{k1} + \sum_k R^2_{k,2} \text{CO}'^2_{k2} - 2\sum_k R_{k,1}R_{k,2}\text{CO}'_{k1}\text{CO}'_{k2} \right) \\
    &+\left(\frac{K - 1 + \chi^2_{1, \alpha/2}}{K^2(K - 1)}\right)\bigg(\sum_{i \neq j} R_{i,1}R_{j,1}\text{CO}'_{i1}\text{CO}'_{j1} + \sum_{i \neq j} R_{i,2}R_{j,2}\text{CO}'_{i2}\text{CO}'_{j2}\\ 
    &- \sum_{i \neq j} R_{i,1}R_{j,2}\text{CO}'_{i1}\text{CO}'_{j2} - \sum_{i \neq j} R_{i,2}R_{j,1}\text{CO}'_{i2}\text{CO}'_{j1}  \bigg) \\
    &- \frac{2\lambda_0}{K} \sum_k (R_{k,1}\text{CO}'_{k1} - R_{k,2}\text{CO}'_{k2}) + \lambda_0^2.
\end{split}
\end{equation}
Let $[K]$ denote $\{1, 2, \cdots, K\}$, $C_1 = (1 - \chi^2_{1, \alpha/2})/K^2$, and $C_2 = (K - 1 + \chi^2_{1, \alpha/2})/(K^2(K - 1))$. The above objective function can be written as $\bm x_2^T \mathbf{Q} \bm x_2 + \bm q^T \bm x_2$ with a $2K$-by-$2K$ matrix $\mathbf{Q}$ whose $(ij)^{th}$ entry satisfies:
\begin{equation}
\label{eqn: core Q matrix}
    Q_{ij} = \begin{cases}
    C_1 \cdot R^2_{i, 1}, \qquad &\text{for} ~i = j \in [K],\\
    C_1 \cdot R^2_{i - K, 2}, &\text{for} ~i = j \in [K] + K,\\
    -C_1\cdot R_{i, 1} R_{j, 2}, &\text{for}~i \in [K], ~j = i + K, ~\text{and}~ j \in [K], ~i = j + K,\\
    C_2\cdot R_{i, 1}R_{j, 1}, &\text{for}~i \in [K], ~j \in [K], ~i \neq j,\\
    C_2\cdot R_{i - K, 2}R_{j - K, 2}, &\text{for}~i \in [K] + K, ~j \in [K] + K, ~i \neq j,\\
    -C_2\cdot R_{i, 1}R_{j - K, 2}, &\text{for}~i \in [K], ~j \in [K] + K,\\
    -C_2\cdot R_{i - K, 2}R_{j, 1}, &\text{for}~i \in [K] + K, ~j \in [K],\\
    \end{cases}
\end{equation}
and 
\begin{equation}
    \label{eqn: app: linear in obj}
    \bm q = -\frac{2\lambda_0}{K}\left(R_{1,1}, \cdots, R_{K, 1}, -R_{1,2}, \cdots, -R_{K,2}\right)^T.
\end{equation}
Therefore, the final $\bm Q$ matrix in the objective function is equal to 
\begin{equation*}
    \bm Q_{\text{final}} = \bm Q_2^T \cdot \bm Q \cdot \bm Q_2,
\end{equation*}
where $\bm Q$ is defined as in \eqref{eqn: core Q matrix} and $\bm Q_2 = [\mathbf{O}_{2K \times 2K}, \textbf{I}_{2K \times 2K}]$, and the final $\bm q$ vector is equal to 
\begin{equation*}
    \bm q_\text{final} = \left( \underbrace{ 0,\cdots,0}_{2K}, -\frac{2\lambda_0 R_{1,1}}{K}, \cdots, -\frac{2\lambda_0 R_{K,1}}{K}, -\frac{2\lambda_0 R_{1,2}}{K}, \cdots,
    -\frac{2\lambda_0 R_{K,2}}{K}\right).
\end{equation*}
Let $\rho^\ast$ denote the minimum of $\bm x^T \bm Q_{\text{final}}\bm x + \bm q^T_{\text{final}}\bm x$. The test is now rejected at level $\alpha$ if $\rho^\ast \geq -\lambda_0^2$, and is failed to be rejected otherwise.
\end{description}

\subsection*{D.2: Solving the MIQCP problem}

Mixed integer quadratic programming (MIQP), including MIQCP, problems are well-known to be NP-hard (\citealp{burer2012milp}); however, two specific features of our problem make it feasible to solve this challenging optimization problem for practically-relevant number of clusters. First, the boundedness condition on the decision variables is crucial in a complexity sense because \citet{jeroslow1973there} showed the undecidability of unbounded problems. It is also useful in practice as it largely reduces the computational cost. Second, although the testing problem is recast as an optimization problem, our real interest is not the optimal objective function value, but rather its sign: the test is rejected if $\min \rho(\mathbf{CO}) > 0$ and fails to be rejected otherwise. Optimization routines typically maintain a lower and upper bound on the optimal objective value, where the upper bound denotes the best known feasible solution and the true optimal objective value is at least as large as the lower bound. Therefore, it suffices to terminate the optimization routine as soon as the upper bound is less than $0$, and declare that we fail to reject the null hypothesis $H_{0, \text{ACER}}: \lambda_{\text{ACER}} = \lambda_0$, or the lower bound is larger than $0$, and declare that the null hypothesis is rejected at level $\alpha$.

Optimization routines that can solve MIQCP problems include the \textsf{IBM} CPLEX Optimizer and \textsf{Gurobi} (version 9.0.1), both of which are made freely available for academic purposes. We implemented the testing procedure described in Algorithm 1 in the main article using Gurobi via its \textsf{R} interface, and made it available via our \textsf{R} package \textsf{ivdesign}.

\subsection*{D.3: Simulation Studies}
We report computation time of Algorithm 1 in this simulation study. We consider the following data generating process. Suppose there are $K = 50, 75, \cdots, 250$ pairs of $2$ clusters, each with $n_{kj} = 10$ individuals. Each individual in the cluster has a $0.25$ probability to be a complier ($\iota_C = 0.25$), $0.25$ probability to be an always-taker ($\iota_A = 0.25$), and $0.5$ probability to be a never-taker ($\iota_N = 0.5$). Hence, number of compliers, always-takers, and never-takers in each cluster $kj$, i.e., $(\text{CO}_{kj}, \text{AT}_{kj}, \text{NT}_{kj})$, is a realization of Multinomial(10, 0.25, 0.25, 0.5) distribution. We then compute $(\sum_{i = 1}^{n_{kj}} d_{Tkji}, \sum_{i = 1}^{n_{kj}} d_{Ckji})$ from $(\text{CO}_{kj}, \text{AT}_{kj}, \text{NT}_{kj})$, i.e., $\sum_{i = 1}^{n_{kj}} d_{Tkji} = \text{CO}_{kj} + \text{AT}_{kj}$ and $\sum_{i = 1}^{n_{kj}} d_{Ckji} = \text{AT}_{kj}$. The cluster-specific proportional treatment effect is simulated according to $\beta_{kj} \sim \text{Normal}(\beta, 1)$ distribution. Finally, we simulate $\sum_{i = 1}^{n_{kj}} r_{Ckji} \sim \text{Normal}(0, 1)$, and compute $\sum_{i = 1}^{n_{kj}} r_{Tkji} = \sum_{i = 1}^{n_{kj}} r_{Ckji} + \beta_{kj}(\sum_{i = 1}^{n_{kj}} d_{Tkji} - \sum_{i = 1}^{n_{kj}} d_{Ckji})$. Thus, we have generated four potential outcomes $(\sum_{i = 1}^{n_{kj}} d_{Tkji}, \sum_{i = 1}^{n_{kj}} d_{Ckji}, \sum_{i = 1}^{n_{kj}} r_{Tkji}, \sum_{i = 1}^{n_{kj}} r_{Ckji})$ for each cluster $kj$. Next, we randomly pick one cluster in each pair of two clusters to be the encouraged cluster, and the other the control cluster. The observed data consists of $(\sum_{i = 1}^{n_{kj}} d_{Tkji}, \sum_{i = 1}^{n_{kj}} r_{Tkji})$ of the encouraged cluster, and $(\sum_{i = 1}^{n_{kj}} d_{Ckji}, \sum_{i = 1}^{n_{kj}} r_{Ckji})$ of the control cluster.

\begin{figure}[h]
    \centering
    \includegraphics[width=\textwidth, height = 9.5cm]{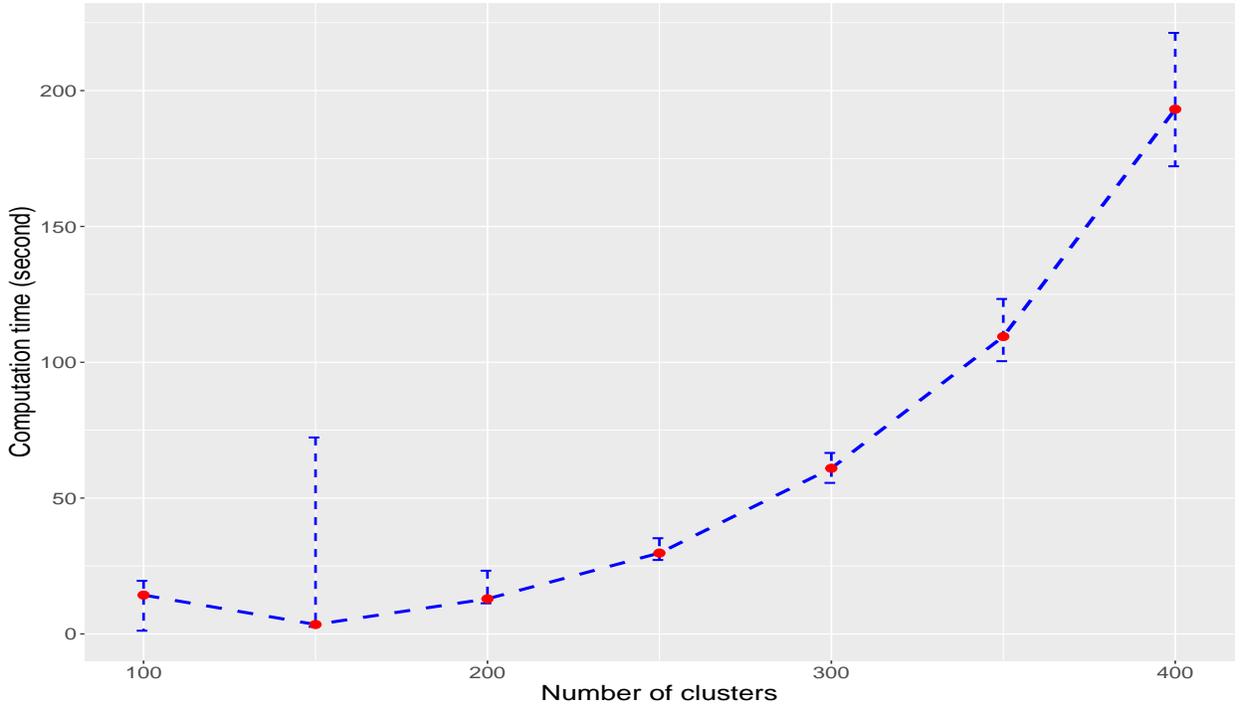}
    \caption{Computation time against the number of clusters when $\beta = 1.5$. Each red point represents the median computation time in seconds. Associated vertical intervals represent the interquartile range.}
    \label{fig: run time}
\end{figure}

\newpage

\begin{center}
{\large\bf Supplementary Material E: Sensitivity to Unmeasured Confounding}
\end{center}
\citet{hansen2014clustered} studied and contrasted the behaviors of clustered and non-clustered treatment assignment in a Rosenbaum-bounds-style sensitivity analysis in a favorable situation, i.e., a stochastic data generating process where there is a genuine treatment effect and no unmeasured confounding. In the rest of this section, we are concerned with the robustness of the primary analysis with clustered and non-clustered treatment assignment when there is residual unmeasured confounding. In other words, \citet{hansen2014clustered} are concerned with studies' robustness in a sensitivity analysis in a favorable situation, while we are concerned with primary analyses' robustness to unmeasured confounding in a non-favorable situation.

Following the notation in the main article, we let $\widetilde{Z}_k, k = 1, \cdots, 2I$ denote a continuous cluster-level instrumental variable, $\widetilde{\bm x}_k$ and $\bm x_{ki}$ cluster-level observed confounding variables and individual-level observed confounding variables, respectively, and $u_{ki}$ a individual-level unmeasured confounding variable. Note that here we are discussing and will be modeling the data generating process before pairing $2I$ clusters into $I$ pairs; therefore, we do not have subscript $j$ here and subscript $k$ ranges from $1$ to $2I$. 

The left panel of eFigure \ref{fig: two DAGs} plots a standard instrumental variable causal directed acyclic graph (DAG) where $\widetilde{Z}_k$ is a valid instrumental variable at both the individual level and cluster level. The right panel of eFigure \ref{fig: two DAGs} builds upon the standard instrumental variable directed acyclic graph by adding a common ancestor $a_k$ for $(\widetilde{\bm x}_k, \bm x_{ki})$, $u_{ki}$, and $\widetilde{Z}_k$. Here $a_k$ can be understood as a latent cluster-specific factor that partly generates observed confounding variables $(\widetilde{\bm x}_k, \bm x_{ki})$, unobserved confounding variable $u_{ki}$, and the cluster-level instrumental variable $\widetilde{Z}_k$. Observe that $a_k$ is not necessarily the sole cause of $(\widetilde{\bm x}_k, \bm x_{ki})$, $u_{ki}$, and $\widetilde{Z}_k$, as there may still be association between these variables after controlling for $a_k$. 

\begin{figure}[h]
    \centering
    \begin{minipage}{0.49\textwidth}
        \centering
         \begin{tikzpicture}
    \node[state] (1) {$\widetilde{Z}_k$};
    \node[state] (2) [right =of 1] {$D_{ki}$};
    \node[state] (3) [right =of 2] {$R_{ki}$};
    \node[state] (4) [above =of 1,xshift=1cm,yshift=-0.3cm] {$\widetilde{\bm x}_k, \bm x_{ki}$};
    \node[state] (5) [right =of 4] {$u_{ki}$};

    \path[bidirected] (1) edge node[above] {} (2);
    \path (2) edge node[above] {} (3);
    \path (4) edge node[el,above] {} (1);
    \path (4) edge node[el,above] {} (2);
    \path (4) edge node[el,above] {} (3);
    \path (5) edge node[el,above] {} (2);
    \path (5) edge node[el,above] {} (3);
\end{tikzpicture}
     \end{minipage}\hfill
    \begin{minipage}{0.49\textwidth}
        \centering
          \begin{tikzpicture}
    \node[state] (1) {$\widetilde{Z}_k$};
    \node[state] (2) [right =of 1] {$D_{ki}$};
    \node[state] (3) [right =of 2] {$R_{ki}$};
    \node[state] (4) [above =of 1,xshift=1cm,yshift=-0.3cm] {$\widetilde{\bm x}_k, \bm x_{ki}$};
    \node[state] (5) [right =of 4] {$u_{ki}$};
    \node[state] (6) [above =of 4, xshift = 1cm, yshift = -0.3cm] {$a_k$};

    \path[bidirected] (1) edge node[above] {} (2);
    \path (2) edge node[above] {} (3);
    \path (6) edge[bend right=60] node[above] {} (1);
    \path (4) edge node[el,above] {} (2);
    \path (4) edge node[el,above] {} (3);
    \path (5) edge node[el,above] {} (2);
    \path (5) edge node[el,above] {} (3);
    \path (6) edge node[el,above] {} (4);
    \path (6) edge node[el,above] {} (5);
    \path[bidirected] (4) edge node[above] {} (5);
    \path[bidirected] (4) edge node[above] {} (1);
\end{tikzpicture}
    \end{minipage}
    \caption{Left panel: causal directed acyclic graph (DAG) for a valid instrumental variable. Right panel: causal DAG for an invalid instrumental variable at individual-level. $\widetilde{\bm x}_k$ and $\bm x_{ki}$ represent cluster-level and individual-level observed confounding variables, respectively. $u_{ki}$ is an unmeasured individual-level confounding variable. $a_k$ is a cluster-level latent factor.}
    \label{fig: two DAGs}
\end{figure}
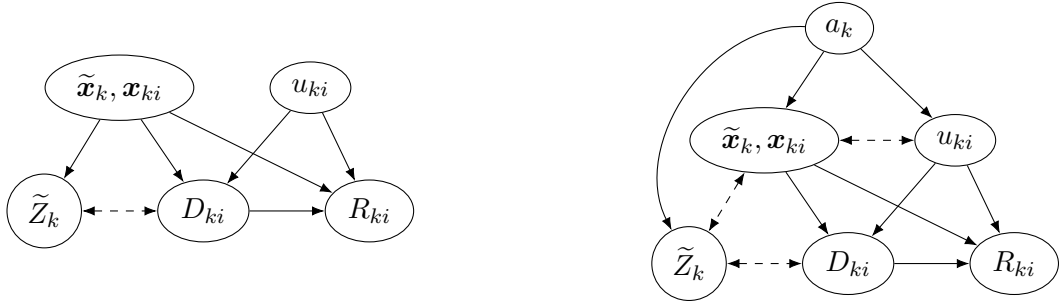



We consider the following system of simple linear structural equation models with one observed confounder corresponding to the right panel of eFigure \ref{fig: two DAGs}:
\begin{equation}
    \begin{split}
        & a_k \sim N(\mu, \sigma^2), \\
        & x_{ki} = a_k + e_{ki},\\
        &u_{ki} = a_k + f_{ki},\\
        &\begin{pmatrix}
        e_{ki} \\
        f_{ki}
        \end{pmatrix}\sim N\left(\begin{pmatrix}
        0 \\
        0
        \end{pmatrix},\begin{pmatrix}
        \sigma^2_e &~~ \rho\sigma_e \sigma_u \\
        \rho\sigma_e\sigma_u &~~ \sigma_u^2
        \end{pmatrix}\right), \\
        &\widetilde{Z}_{ki} = \widetilde{Z}_k = \gamma_0 + \gamma_1 a_k + h_k, ~h_k \sim N(0, 1), \\
        &D_{ki} = \eta_0 + \eta_1\widetilde{Z}_k + \eta_2 x_{ki} + \eta_3 u_{ki} + v_{ki},~v_{ki} \sim N(0, 1),\\
        &R_{ki} = \beta D_{ki} + \beta_1 x_{ki} + \beta_2 u_{ki} + \epsilon_{ki},~\epsilon_{ki} \sim N(0, 1).\\
    \end{split}
    \label{eq: DGP}
\end{equation}
The bias of a individual-level instrumental variable analysis is driven by the covariance between the instrumental variable $\widetilde{Z}_{ki} = \widetilde{Z}_k$ and individual-level unmeasured confounder $u_{ki}$ conditional on observed confounder $x_{ki}$. According to \eqref{eq: DGP}, we have
\begin{equation*}
    \text{cov}\{\widetilde{Z}_{ki}, u_{ki} \mid x_{ki}\} = \gamma_1 \text{var}\{a_k \mid x_{ki}\} =\gamma_1 \times  \frac{1}{\frac{1}{\sigma_e^2} + \frac{1}{\sigma^2}},
\end{equation*}
where the last equality is a direct consequence of normal-normal hierarchical structure underpinning the $(a_k, x_{ki})$ relationship (\citealp{gelman2013bayesian}). At the cluster level, the system of structural equations becomes
\begin{equation}
    \begin{split}
        &\widetilde{Z}_{k} = \gamma_0 + \gamma_1 a_k + h_k, \\
        &\overline{D}_{k} = \eta_0 + \eta_1\widetilde{Z}_k + \eta_2 \overline{x}_{k} + \eta_3 \overline{u}_{k} + \overline{v}_{k}, \\
        &\overline{R}_{k} = \beta \overline{D}_{k} + \beta_1 \overline{x}_{k} + \beta_2 \overline{u}_{k} + \overline{\epsilon}_{k}, \\
\end{split}
\end{equation}
where $\overline{x}_{k} = n^{-1}_{k} \sum_{i = 1}^{n_k} x_{ki}$, $\overline{D}_{k} = n^{-1}_{k} \sum_{i = 1}^{n_k} D_{ki}$, $\overline{u}_{k} = n^{-1}_{k} \sum_{i = 1}^{n_k} u_{ki}$, $\overline{v}_{k} = n^{-1}_{k} \sum_{i = 1}^{n_k} v_{ki}$, $\overline{R}_{k} = n^{-1}_{k} \sum_{i = 1}^{n_k} R_{ki}$, $\overline{\epsilon}_{k} = n^{-1}_{k} \sum_{i = 1}^{n_k} \epsilon_{ki}$. The bias in a cluster-level instrumental variable analysis is driven by 
\begin{equation}
    \text{cov}\{\widetilde{Z}_{k}, \overline{u}_{k} \mid \overline{x}_{k}\} = \gamma_1 \text{var}\{a_k \mid \overline{x}_k\} = \gamma_1 \times \frac{1}{\frac{1}{\sigma_e^2/n_k} + \frac{1}{\sigma^2}}.
    \label{eqn: cov cluster level}
\end{equation}

When $\sigma_e^2/n_k \ll 1$, i.e., when the number of covariates in the cluster $n_k$ is large, $\text{cov}\{\widetilde{Z}_{k}, \overline{u}_{k} \mid \overline{x}_{k}\} \ll \text{cov}\{\widetilde{Z}_{ki}, u_{ki} \mid x_{ki}\}$, and although both the individual-level and cluster-level analyses are biased, the cluster-level analysis is less biased. 

From the directed acyclic graph perspective, conditioning on observed confounding variables $(\widetilde{\bm x}_k, \bm x_{ki})$ blocks the path from $\widetilde{Z}_{k}$ to $u_{ki}$ via $(\widetilde{\bm x}_k, \bm x_{ki})$; there is still a path from $\widetilde{Z}_k$ to $u_{ki}$ via $a_k$ and hence the $\widetilde{Z}_k$ is a biased instrumental variable. In a cluster-level analysis, if $(\widetilde{\bm x}_k, \overline{x}_k)$ contains abundant information about $a_k$, conditioning on $(\widetilde{\bm x}_k, \overline{x}_k)$ amounts to also conditioning on $a_k$, and thus blocking the path from $\widetilde{Z}_k$ to $\overline{u}_{k}$ via $a_k$. For instance, in the data generating process \eqref{eq: DGP}, $\overline{x}_k \sim N(a_k, \sigma_e^2/n_k)$ and when $\sigma_e^2/n_k$ is small, $\overline{x}_k \approx a_k$, and conditioning on $\overline{x}_k$ is approximately like conditioning on $a_k$ in the sense that the information of $a_k$ is essentially contained in $\overline{x}_k$. In this way, $\widetilde{Z}_k$ becomes much less correlated with the unmeasured confounder in a cluster-level analysis compared to an individual-level analysis, and the cluster-level instrumental variable analysis becomes less biased. We finally illustrate this point via a simulation study.

We considered the simple data-generating process described in \eqref{eq: DGP} with the following parameter values: $\mu = 0$, $\sigma = \sigma_e = \sigma_u = 1$, $\rho = 0.5$, $\gamma_0 = \eta_0 = 0$, $\gamma_1 = \eta_1 = \eta_2 = \beta = \beta_1 = 1$, and $\beta_2 = 2$. For various number of clusters $K$, and number of individuals in each cluster $n_k$, we conducted both a individual-level instrumental variable matched analysis and a cluster-level instrumental variable matched analysis, and constructed $95\%$ two-sided confidence intervals for the structural parameter $\beta$ by inverting Huber's M test statistic assuming a constant additive treatment effect. For each $(K, n_k)$ combination, eTable \ref{tbl: compare unit and cluster} reports the average left and right endpoints of $95\%$ confidence intervals, average length of $95\%$ confidence intervals, and percentage of times confidence intervals cover the true parameter $\beta = 1$ when comparing the individual-level and cluster-level analysis. eTable \ref{tbl: compare unit and cluster} suggested that the cluster-level analysis was in general less efficient compared to the individual-level analysis as the length of CI was typically longer for a cluster-level analysis. Cluster-level analysis also seemed to be much less biased compared to the individual-level analysis on the same dataset. For instance, when $K = 500$ and $n_k = 50$, the average midpoint of $95\%$ CIs for the cluster-level analysis was $1.04$, while that of the individual-level analysis was $1.29$. We also observed that for a fixed $K$, the cluster-level analysis seemed to be less biased as $n_k$ grew larger, a trend consistent with \eqref{eqn: cov cluster level}. eFigure \ref{fig: CI plot} further plots the first $100$ confidence intervals for the individual-level and cluster-level analysis when $K = 500$ and $n_k = 50$. It is evident that cluster-level analysis yields slightly longer but considerably less biased confidence intervals compared to individual-level analysis.

\begin{table}[ht]
\centering
\caption{Comparing individual-level and cluster-level analysis in the presence of unmeasured confounding. $K$: number of cluster; $n_k$: number of individuals in each cluster; CI Left: average left endpoint of the $95\%$ CI; CI Right: average right endpoint of the $95\%$ CI; CI Length: average CI length; Coverage: percentage of $95\%$ CIs covering the true parameter $\beta = 1$. Simulations are repeated $500$ times.}
\label{tbl: compare unit and cluster}
\begin{tabular}{cccccccccccc}
& &\multicolumn{4}{c}{Individual Level}&&\multicolumn{4}{c}{Cluster Level} \\
 K & $n_k$ & CI Left & CI Right & CI Length &Coverage && CI Left & CI Right & CI Length &Coverage\\ 
200 & 10 & 1.17 & 1.39 & 0.21 & 0.00 && 0.91 & 1.30 & 0.38 & 0.75 \\ 
  200 & 30 & 1.23 & 1.34 & 0.12 & 0.00 && 0.93 & 1.22 & 0.29 & 0.76 \\ 
  200 & 50 & 1.24 & 1.33 & 0.09 & 0.00 && 0.93 & 1.20 & 0.27 & 0.78 \\ 
  300 & 10 & 1.20 & 1.37 & 0.17 & 0.00 && 0.96 & 1.26 & 0.30 & 0.62 \\ 
  300 & 30 & 1.24 & 1.33 & 0.10 & 0.00 && 0.95 & 1.17 & 0.22 & 0.71 \\ 
  300 & 50 & 1.25 & 1.32 & 0.07 & 0.00 && 0.96 & 1.16 & 0.19 & 0.69 \\ 
  500 & 10 & 1.22 & 1.35 & 0.13 & 0.00 && 0.99 & 1.21 & 0.22 & 0.52 \\ 
  500 & 30 & 1.25 & 1.32 & 0.07 & 0.00 && 0.97 & 1.12 & 0.15 & 0.67 \\ 
  500 & 50 & 1.26 & 1.31 & 0.05 & 0.00 && 0.98 & 1.10 & 0.12 & 0.67 \\ 
\end{tabular}
\end{table}

\begin{figure}[h]
    \centering
    \includegraphics[width = \textwidth, height = 10cm]{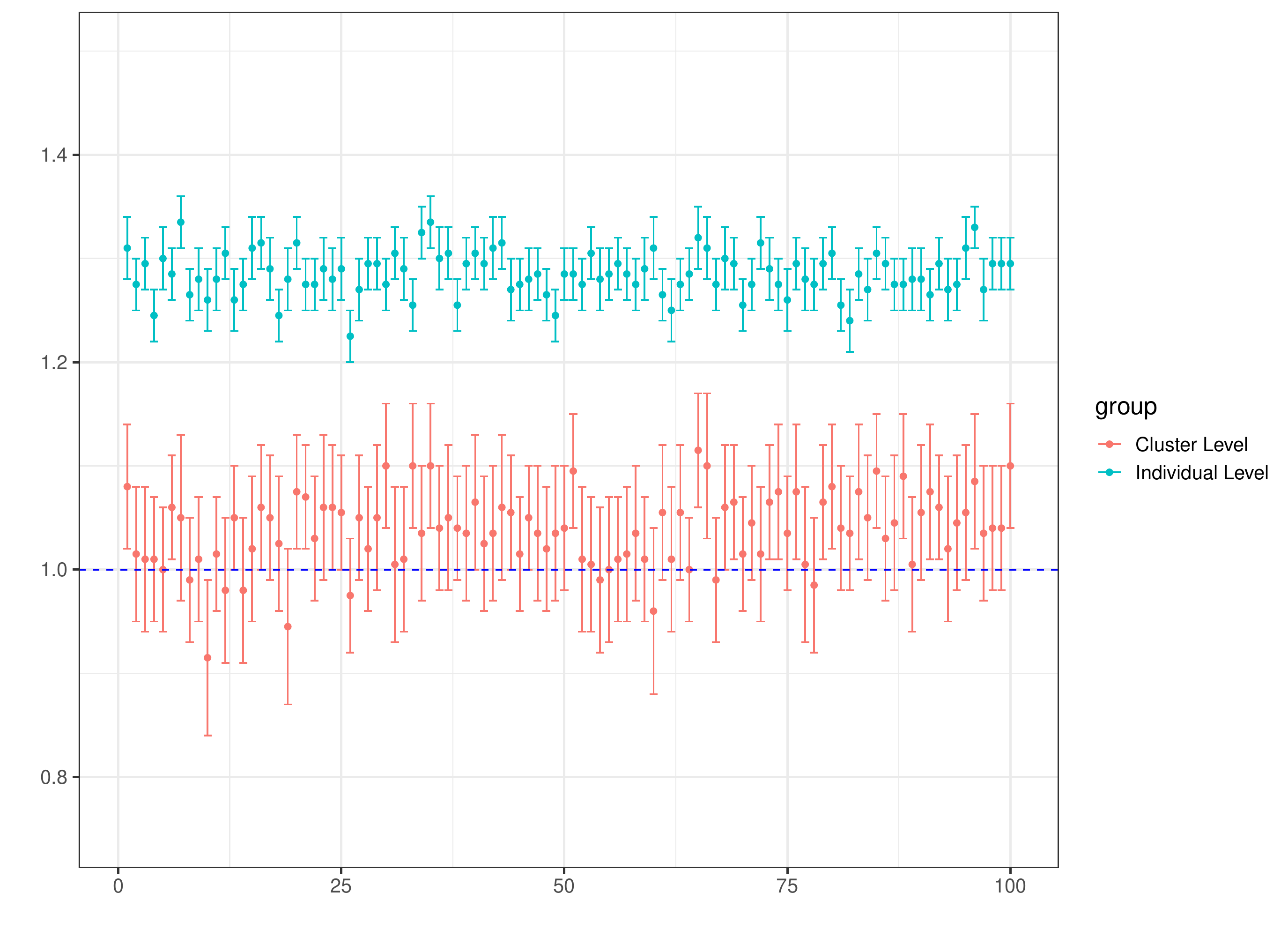}
    \caption{First $100$ $95\%$ confidence intervals for the individual-level and cluster-level analysis when $K = 500$ and $n_k = 50$. The blue dashed line represents the true parameter value $\beta = 1$.}
    \label{fig: CI plot}
\end{figure}

\clearpage

\begin{center}
{\large\bf Supplementary Material F: Application}
\end{center}

\subsection*{F.1: Data Sources and Population}
We obtained data on patients undergoing isolated CABG surgery between January 1, 2013, and October 15, 2015, from Centers for Medicare and Medicaid Services (CMS). Our dataset contained National Provider Identifier (NPI) numbers from which we identified patients' hospitals. A large subset of our dataset further contained surgeon identifier numbers from which we identified patients' surgeons. We obtained hospital characteristics data from the American Hospital Association Survey. Patient-level data were merged to hospital characteristics data using unique NPI numbers. The study cohort consisted of all fee-for-service Medicare beneficiaries with a Part A (hospitalization) Medicare claim for isolated CABG surgery. We excluded (1) beneficiaries enrolled under managed care and not fee-for-service, (2) beneficiaries with less than six months of continuous enrollment in Medicare prior to the index admission for CABG surgery, (3) beneficiaries with age $< 65$ years, (4) beneficiaries without a cardiovascular or cardiac surgery-related Diagnosis Related Group (DRG) codes, (5) beneficiaries with a neurologic or stroke diagnosis as indicated by an ICD-9-cm code within the six months prior to the index admission or a stroke diagnosis with a “present on admission” (POA) indicator. We grouped patients according to their surgeon identification number and did not include surgeons who performed less than $30$ surgeries on our study cohort.


\subsection*{F.2: Statistical Matching}
We used the \textsf{R} package \textsf{nbpMatching} (\citealp{lu2011optimal}, \citealp{R_pkg_nbpmatching}) to pair one surgeon in the New York Metropolitan area who preferred using TEE during CABG surgery to one who did not prefer using TEE during CABG surgery. Each surgeon's preference to using TEE during CABG surgery was measured by the fraction of her CABG surgeries using TEE monitoring.  Two surgeons were judged similar by the matching algorithm if they were similar in patient composition including patients' average age, percentage of male patients, percentage of white patients, percentage of elective CABG surgeries, percentage of patients having each of the following important comorbid conditions: arrhythmia, diabetes, congestive heart failure (CHF), hypertension, obesity, pulmonary diseases, and renal diseases, and similar in hospital characteristics including total hospitals beds, teaching status, presence of any cardiac intensive care unit, total number of full-time registered nurses, and total cardiac surgical volume. We calculated the distance matrix using a rank-based robust Mahalanobis distance. A large penalty $\lambda$ is further added to the $ij^{th}$ entry of this distance matrix if the absolute difference between surgeon $i$ and surgeon $j$'s preference for TEE is less than $0.2$. Moreover, we added $M$ phantom units known as ``sinks'' (\citealp{lu2001matching, baiocchi2010building}) to help eliminate surgeons who for no good match, i.e., another similar surgeon with distinct preference for TTE, can be found. Both $M$ and $\lambda$ are tuning parameters used to ensure that surgeons with markedly different TEE preference are paired together, while maintaining good covariate balance after matching. The matching algorithm formed matched pairs by minimizing the total distances between matched pairs of surgeons. After matching, no two clusters in one matched pair have their encouragement dosage less than $0.23$, while good covariate balance is maintained; see Table 1 in the main article.

\end{document}